\tikzset{%
  >=latex, 
  inner sep=0pt,%
  outer sep=2pt,%
  mark coordinate/.style={inner sep=1pt,outer sep=1pt,minimum size=3pt,
    fill=black}
}
\pgfplotsset{compat=1.18}
\DeclarePairedDelimiter\ceil{\lceil}{\rceil}
\DeclarePairedDelimiter\floor{\lfloor}{\rfloor}
\newcommand{\comment}[1]{}
\newcommand{\mycomment}[1]{{}}
\newcommand{\abs}[1]{{\left\lvert{#1}\right\rvert}}
\newcommand{\norm}[1]{{\left\lVert{#1}\right\rVert}}
\DeclareDocumentCommand{\cyc}{ O{d} }{
  {\mathbb{Z}_d}
}
\DeclareDocumentCommand{\pnorm}{ m O{p} }{
  \norm{{#1}}_{#2}
}
\DeclareMathOperator{\tr}{tr}
\newtheorem{thm}{Theorem}
\newtheorem{lem}{Lemma}
\newtheorem{cor}{Corollary}
\newtheorem{exa}{Example}
\newtheorem{res}{Result}
\newtheorem{app}{Application}
\newtheorem{rem}{Remark}
\theoremstyle{definition}
\DeclareMathOperator{\erfc}{erfc}
\DeclareMathOperator*{\prob}{Pr}
\begin{document}

\date{\today}
\title{Fundamental solutions of heat equation on unitary groups establish an improved relation between \texorpdfstring{$\epsilon$}{epsilon}-nets and approximate unitary \texorpdfstring{$t$}{t}-designs}

\author{Oskar S{\l}owik }
\email{oslowik@cft.edu.pl}
\affiliation{Center for Theoretical Physics, Polish Academy of Sciences,\\ Al. Lotnik\'ow 32/46, 02-668 Warszawa, Poland}
\author{Oliver Reardon-Smith}
\affiliation{Center for Theoretical Physics, Polish Academy of Sciences,\\ Al. Lotnik\'ow 32/46, 02-668 Warszawa, Poland}
\author{Adam Sawicki}
\affiliation{Center for Theoretical Physics, Polish Academy of Sciences,\\ Al. Lotnik\'ow 32/46, 02-668 Warszawa, Poland}
\affiliation{Guangdong Technion - Israel Institute of Technology, 241 Daxue Road,
Jinping District, Shantou, Guangdong Province, China}

\begin{abstract}
 The concepts of $\epsilon$-nets and unitary ($\delta$-approximate) $t$-designs are important and ubiquitous across quantum computation and information. Both notions are closely related and the quantitative relations between $t$, $\delta$ and $\epsilon$ find applications in areas such as (non-constructive) inverse-free Solovay-Kitaev like theorems and random quantum circuits. In recent work, quantitative relations have revealed the close connection between the two constructions, with $\epsilon$-nets functioning as unitary $\delta$-approximate $t$-designs and vice-versa, for appropriate choice of parameters. In this work we improve these results, significantly increasing the bound on the $\delta$ required for a $\delta$-approximate $t$-design to form an $\epsilon$-net from $\delta \simeq \left(\epsilon^{3/2}/d\right)^{d^2}$ to $\delta \simeq \left(\epsilon/d^{1/2}\right)^{d^2}$.
 We achieve this by constructing polynomial approximations to the Dirac delta using heat kernels on the projective unitary group $\mathrm{PU}(d) \cong\mathbf{U}(d)$, whose properties we studied and which may be applicable more broadly. We also outline the possible applications of our results in quantum circuit overheads, quantum complexity and black hole physics.

\end{abstract}

\maketitle

\section{Introduction}

Unitary $t$-designs are a fundamental construction, finding widespread applications across quantum information and computation. They have been employed in areas such as randomised benchmarking \cite{Wallman_2014, Epstein_2014}, process tomography \cite{Scott_2008}, quantum information protocols \cite{PhysRevA.80.012304, Abeyesinghe_2009}, unitary codes \cite{Roy_2009}, derandomisation of probabilistic constructions \cite{Gross_2014}, decoupling \cite{Szehr_2013}, entanglement detection \cite{Bae_2019}, quantum state discrimination \cite{sen2005randommeasurementbasesquantum}, shadow estimation \cite{Helsen_2023}, efficient quantum measurements \cite{physRevLett.124.090503} and estimation of the properties of quantum systems \cite{Huang_2020}. Moreover, their link to pseudo-random quantum circuits \cite{Brand_o_2016} makes them applicable in areas such as the equilibration
of quantum systems \cite{Masanes_2013, Brand_o_2016}, quantum metrology with random bosonic states \cite{PhysRevX.6.041044}, quantum complexity and information scrambling in black holes \cite{Roberts_2017,Nakata_2017,Brand_o_2021,PhysRevX.14.041068}. They have also been applied to the study of quantum speed-ups \cite{Harrow_2017,Boixo_2018,Arute2019}, due to their anti-concentration property \cite{Hangleiter2018anticoncentration, Yoganathan_2019}.

Epsilon-nets are of similar importance, finding broad application and, in particular, serving as the natural language for quantum compilation. Solovay-Kitaev like (SKL) theorems \cite{kuperberg2023breakingcubicbarriersolovaykitaev,Slowik_qco_2024} provide joint bounds on the complexity of quantum operations $\mathbf{U}$, for a given error $\epsilon$ and gateset $\mathcal{S}$. In other words, they bound how many operations are required for circuits of gates from a given gateset to form an $\epsilon$-net. Moreover, constructive SKL theorems say how to find the approximating circuits, which makes them the cornerstone of quantum compilation. The original SK theorem, which is constructive, bounds the length of the sequence of gates as $\ell=\mathcal{O}({\mathrm{log}^c\left(\frac{1}{\epsilon}\right)})$, where $c\approx 3.97$. In fact, it is well-known that any $c > 3$ works and recently a constructive SKL theorem with $c \approx 1.44 $ was provided in Ref.~\cite{kuperberg2023breakingcubicbarriersolovaykitaev}, which is significantly closer to the optimal value $c=1$.

The parameter $\delta$ of the (unitary) $\delta$-approximate $t$-design generated by $\mathcal{S}$ can be studied on finite scales, say given by the highest considered degree $t$, denoted $\delta(\nu_{\mathcal{S}},t)$. Such a finite-scale approach was explored in Ref.~\cite{9614165,varjú2015randomwalkscompactgroups}. For fixed $\epsilon$ and $\mathcal{S}$, the knowledge of $\delta(\nu_{\mathcal{S}},t)$ at a suitably chosen scale $t(\epsilon)$, is sufficient to bound $\ell$ via a non-constructive SKL theorem $\ell =\mathcal{O}( \frac{1}{\mathrm{log}\left(1/\delta\left(\nu_{\mathcal{S},t(\epsilon)}\right)\right)}  \mathrm{log}\left(\frac{1}{\epsilon}\right))$ with explicit form (see e.g. Ref.~\cite{9614165}). Such SKL theorems can be used to bound the efficiency of various gate sets $\mathcal{S}$, e.g. their ($T$-)Quantum Circuit Overhead \cite{Slowik_qco_2024}. In particular, if the supremum of  $\delta(\nu_{\mathcal{S}},t)$ over all $t$ is smaller than $1$, then we obtain an asymptotically optimal scaling $\ell = \Theta(\mathrm{log}(\frac{1}{\epsilon}))$ \cite{Harrow_2002, S_owik_2023}. However, the analysis of such a supremum is a hard problem and is computationally intractable. Hence, the SKL theorems based on a finite-scale $\delta(\nu_{\mathcal{S}},t(\epsilon))$ are of significant practical interest. The tightness of such theorems depends on the tightness of the $t(\epsilon)$ scaling, which can be understood as the $t$ sufficient for a ($\delta$-approximate) $t$-design to form an $\epsilon$-net.

In light of the importance of both $\epsilon$-nets and $t$-designs, it is interesting that there is a strong link between the two constructions. Indeed a (possibly approximate) $t$-design of sufficiently large $t$ forms an $\epsilon$-net, while an $\epsilon$-net of sufficiently small $\epsilon$ forms an approximate $t$-design. To our knowledge, the first systematic study of the quantitative relations between them was surprisingly recent, in Ref.~\cite{9614165}, where the authors show that an $\epsilon$-net is formed by $\delta$-approximate $t$-designs on the space of unitary channels $\mathbf{U}(d)$ for
$t \simeq \frac{d^{5/2}}{\epsilon}$ and $\delta \simeq \left(\frac{\epsilon^{3/2}}{d}\right)^{d^2}$,
where $\simeq$ can be understood as ``ignoring logarithmic factors'' and ``infinitesimal corrections to the exponents''. The authors of Ref.~\cite{9614165} were able to prove that $t$ has to grow at least as $1/\epsilon$ (for fixed $d$) and as $d^2$ (for fixed $\epsilon$). Thus they were able to show that this scaling of $t$ with $\epsilon$ is essentially optimal, while the scaling of $t$ with $d$ is (at worst) not very far from optimal, with a ``gap'' of $\sqrt{d}$ between the known lower and upper bounds. They conjectured that a scaling of $t \simeq d^{2}$ was possible but were not able to prove this.

In this work we build on these results, obtaining (up to logarithmic factors) the same scaling of $t$ as the authors of Ref.~\cite{9614165}, but dramatically improving the scaling of $\delta$ with $\epsilon$ and $d$ in the $\delta$-approximate case. We are able to show that a $\delta$-approximate $t$-design forms an $\epsilon$-net if delta obeys an inequality which scales like $\delta\simeq\left(\frac{\epsilon}{ d^{\frac{1}{2}}}\right)^{d^2}$.

Our method involves the construction of a polynomial approximation to a Dirac delta on the space of quantum unitary channels $\mathrm{PU}(d)$. Our construction of the approximate Dirac delta is a natural one, based on the properties of the heat kernel on $\mathrm{SU}(d)$. As running the evolution of the heat equation ``forwards'' leads to ``heat'' spreading out over time, naturally running it backward and considering very small times leads to a sharp delta-like peak at times close to $0$. As has been known since the work of Fourier himself, the heat equation has a close link to Fourier analysis on the appropriate space. Indeed, our key bounds are based on the results from Ref.~\cite{urakawa-1974}, which may be viewed as a generalization of the well-known Poisson formula~\cite{Zygmund-trig-textbook} to compact semi-simple simply-connected Lie groups. The heat kernel is at the heart of many important methods across mathematical physics and beyond. As a tool to study the eigenvalues and eigenfunctions of the Laplacian, it has been used as far back as Kac's seminal 1966 paper ``Can One Hear the Shape of a Drum''~\cite{kak-1962} and has been of prime importance in the study of the Laplacian on Riemannian manifolds throughout the subject's history~\cite{Eigenvalues-reimannian}. It has been applied extensively in the context of quantum field theory~\cite{VASSILEVICH2003279}, mathematical finance~\cite{Avramidi2015} and quantum gravity~\cite{Avramidi2000}
and used to prove a diverse range of important theorems, including the Atiyah–Singer index theorem~\cite{Atiyah-1973,Berline2004} and the Poincar{\'e} conjecture~\cite{Perelman-2002,Poincare-Conjecture}. For a thorough review, we invite the reader to the textbooks~\cite{Avramidi2015,heat-kernel-manifolds-textbook}.

Our core results - the bounds on $t$ and $\delta$ are the subjects of Theorem \ref{th:theorem1} (for $t$-designs) and Theorem \ref{th:theorem2} (for $\delta$-approximate $t$-designs). We also provide a technical result about the properties of our approximate Dirac delta (Theorem \ref{th:theorem3}), which may be useful for other applications.

\textbf{Outline of the proof} - the proof of the main theorem (Theorem \ref{th:theorem2}) can be divided into five steps:

\begin{enumerate}
    \item We ``trim'' the full heat kernel on $\mathrm{PU}(d)$ to obtain an approximation of it by a balanced polynomial of order $t$, and prove an error bound for this approximation.
    \item We prove that the heat kernel on $\mathrm{PU}(d)$ is an approximation to the Dirac delta. In particular, its integral vanishes outside any $\epsilon$-ball as $\sigma\to 0$ at a rate we can bound.
    \item By combining the above two bounds, we obtain a bound for the integral of the absolute value of the trimmed heat kernel outside an $\epsilon$-ball, thereby showing the trimmed heat kernel is also an approximation to the Dirac delta.
    \item We derive the bounds on the $L^2$-norm of the heat kernel on $\mathrm{PU}(d)$.
    \item We combine the bounds to obtain a bound for the $t$ and $\delta$ sufficient for a projective unitary $\delta$-approximate $t$-design to be an $\epsilon$-net. Essentially, this argument follows from applying the $t$-design property to the order $t$ balanced polynomial we obtained in step $1$.
\end{enumerate}


\textbf{Structure of the paper} - the paper is organised as follows:

\begin{itemize}
\item In Section~\ref{sec:intro}, we briefly explain the main ideas behind the paper, such as $\epsilon$-nets, $t$-designs and heat kernels.

\item In Section~\ref{sec:results} we summarise the \textbf{main results} and their \textbf{applications}.

\item In Section~\ref{sec:connecting-pun-and-sun} we address step 1 of the proof.
\item In Section~\ref{sec:t-design-eps-net-bound}, we address steps 2 and 3 of the proof and combine them to prove a bound for the $t$ sufficient for a projective unitary $t$-design to be an $\epsilon$-net (Theorem \ref{th:theorem1}).
\item In Section~\ref{sec:delta-t-design-eps-net-bound} address step 4 of the proof and combine the bounds from steps 2-4 to derive the bounds on $t$ and $\delta$ sufficient for a projective unitary $\delta$-approximate $t$-design to be an $\epsilon$-net, realising step 5 of the proof (Theorem \ref{th:theorem2}).

\item In Section~\ref{sec:approx-delta}, we summarize the technical properties of trimmed heat kernels as approximations to the  Dirac delta (Theorem \ref{th:theorem3}).

\item Finally, in Section~\ref{sec:summary}, we provide a summary and outline the future research directions.

\item The appendix contains the proofs of various technical lemmas.
\end{itemize}

\section{Main ideas}
\label{sec:intro}

Central in quantum information theory is the concept of unitary channels. Such channels act via unitary operations (lossless quantum gates) when restricted to pure quantum states.

The unitary channel $\mathbf{U}$ acting on a Hilbert space $\mathcal{H} \cong \mathbb{C}^d$ is the CPTP map defined via $\mathbf{U}(\rho) = U \rho U^{\dagger}$, for any quantum state $\rho: \mathcal{H} \rightarrow \mathcal{H}$ and some fixed unitary $U \in \mathrm{U}(d)$. Since two unitaries $U$ and $V$ which differ by a phase $U=Ve^{i \phi}$ define the same unitary channel, the set of all unitary channels can be identified with the projective unitary group $\mathrm{PU}(d)=\mathrm{U}(d)/\mathcal{Z}(\mathrm{PU}(d))$, where $ \mathcal{Z}(\mathrm{PU}(d))=\{e^{i\phi}I,\, \phi\in (-\pi, \pi]\} \cong \mathrm{U}(1)$ is the centre of $\mathrm{U}(d)$.

Since we prefer to work with the $\mathrm{SU}(d)$ group, in our considerations we assume $U\in \mathrm{SU}(d)$ and use $\mathrm{PU}(d)=\mathrm{SU}(d)/\mathcal{Z}(\mathrm{SU}(d))$, where $\mathcal{Z}(\mathrm{SU}(d))=\{e^{i \frac{2 \pi}{ d}k}I,\, k\in \mathbb{Z}\}\cong \mathbb{Z}_d$ is the centre of $\mathrm{SU}(d)$ (group of $d^\text{th}$ roots of unity). From now on we denote $\Gamma \coloneqq \mathcal{Z}(\mathrm{SU}(d))$ and use square brackets to denote the elements of the projective group as equivalence classes of elements of $\mathrm{SU}(d)$, i.e. $U$ is mapped to the unitary channel $\mathbf{U}$ under the quotient map $\pi: \mathrm{SU}(d) \rightarrow \mathrm{PU}(d)$.

In practice, one is often interested in the closeness of different unitary channels. Various norms (and induced metrics) can be used to quantify this. A prominent example is the diamond norm $||\cdot||_{\Diamond}$, which has a clear operational meaning in terms of the statistical distinguishability of two channels (e.g. determines the maximal probability of success in a single-shot channel discrimination task). We denote the induced metric as
$d_{\Diamond}\left(\mathbf{U},\mathbf{V}\right)=||\mathbf{U}-\mathbf{V}||_{\Diamond}$.



We define $d(\cdot, \cdot)$ to be a metric on $\mathrm{SU}(d)$ induced by the operator norm

\begin{align}
    d(U, V) & \coloneqq \norm{U -  V}_\infty.
\end{align}
Since we want to work with the group $\mathrm{SU}(d)$, we define the metric on $\mathrm{PU}(d)$ in terms of the former
\begin{align}
    d_P(\mathbf{U}, \mathbf{V}) &\coloneqq \min_{\gamma\in\Gamma} d(U, \gamma V) \mathrm{.}
\end{align}
Clearly, due to the unitary invariance of the operator norm, the metrics $d(\cdot, \cdot)$ and $d_P(\cdot, \cdot)$ are translation-invariant.

One may show \cite{9614165} that $d_{\Diamond}(\cdot, \cdot)$ and $d_P(\cdot, \cdot)$ are related as

\begin{equation}
    d_P(\mathbf{U},\mathbf{V}) \leq d_{\Diamond}(\mathbf{U},\mathbf{V}) \leq 2 \, d_P(\mathbf{U},\mathbf{V}) \mathrm{.}
\end{equation} 

We say that a finite subset of channels $\mathcal{A} \subset \mathrm{PU}(d)$ is an $\epsilon$-net if for every channel $\mathbf{U} \in \mathrm{PU}(d)$, there exists a channel $\mathbf{V} \in \mathcal{A}$, such that $d_P(\mathbf{U}, \mathbf{V}) \leq \epsilon$. In other words, $\mathcal{A}$ represents all the possible channels, up to the error $\epsilon$.

To consider unitary designs, we need to define integration of functions on $\mathrm{PU}(d)$. The Haar measure $\mu_P$ on $\mathrm{PU}(d)$ is the pushforward of the Haar measure $\mu_S$ on $\mathrm{SU}(d)$, i.e. $\mu_P(\mathcal{A})=\mu_S(\pi^{-1}(\mathcal{A}))$, whenever $\pi^{-1}(\mathcal{A})$ is $\mu_S$-measurable.

Every function $f$ on $\mathrm{PU}(d)$ can be lifted to a unique function $\tilde{f}$ on $\mathrm{SU}(d)$, so that 
$\tilde{f}(U)=f(\mathbf{U})$. Clearly, such a function is constant on the equivalence classes (fibres of $\pi$), i.e. all the elements $U$ that define the same unitary channel. Conversely, every function $\tilde{f}$ on $\mathrm{SU}(d)$ which is constant on the equivalence classes, descends to a unique function $f$ on $\mathrm{PU}(d)$, so that
$\tilde{f}(U)=f(\mathbf{U})$. Hence, we can write
\begin{align}
    \int_{\mathrm{PU}(d)} f\,d\mu_P &= \int_{\mathrm{SU}(d)} \tilde{f}\,d\mu_S.\label{eqn:pu-integration-su-integration2}
\end{align}
If $X\subseteq \mathrm{PU}(d)$ is some Haar-measurable set then inserting indicator functions into~\eqref{eqn:pu-integration-su-integration2} we obtain
\begin{align}
    \int_{X} f\,d\mu_P &= \int_{\tilde{X}} \tilde{f}\,d\mu_S \mathrm{,}
\end{align}
where $\tilde{X}=\pi^{-1}(X)$.
This allows us to move freely between the $\mathrm{PU}(d)$ and $\mathrm{SU}(d)$ settings.

The (unitary) $t$-design on $\mathrm{PU}(d)$ is the probability measure $\nu$ on $\mathrm{PU}(d)$ which mimics the averaging properties of the Haar-measure with respect to the polynomials of degree at most $t$. Specifically, let $\mathcal{H}_t$ denote the space of homogeneous polynomials of degree $t$ in the matrix elements of $U$ and in $\overline{U}$.

A probability measure $\nu$ on $G$ is a unitary $t$-design if for any $f\in \mathcal{H}_t$ we have

\begin{equation}
    \int_{G}d\nu(U)f(U)=\int_{G}d\mu(U)f(U) \mathrm{.}
    \label{eq:tde}
\end{equation}

From the practical point of view, one is often interested in the case of $\nu$ being a discrete finitely supported measure, so that the averaging takes place over a finite set of elements $\{\nu_i, U_i\}$

\begin{equation}
    \sum_i \nu_i f(U_i) = \int_{G}d\mu(U)f(U)  \mathrm{.}
    \label{eq:discr_du}
\end{equation}

For example, $\nu$ can be the probability measure supported on a finite universal set of quantum gates $\mathcal{S}=\{U_1, U_2, \ldots, U_k\}$, which we denote as $\nu_{\mathcal{S}}$. In this work, we assume all $t$-designs have finite support since we directly apply Lemma~2 of Ref.~\cite{9614165}. However, this Lemma can be generalised to infinitely supported or even continuous measures, so that our results hold for such cases as well, if the definition of the $\epsilon$-net is relaxed by removing the finiteness condition. 

Moreover, it is useful to consider the cases in which~\eqref{eq:tde} is satisfied only approximately. To do so, it is useful to define so-called $t$-moment operators

\begin{equation}
     T_{\mu,t}\coloneqq\int_{G} d\mu(U) U^{ t,t},\,\,\,\,\,T_{\nu,t}\coloneqq \int_G d\nu(U) U^{ t,t}  \mathrm{.}
\end{equation}

One may check that the space $\mathcal{H}_t$ is spanned by the entries of
$U^{t,t}:=U^{\otimes t}\otimes \bar{U}^{\otimes t}$. Indeed for every $f \in \mathcal{H}_t$ there exists a matrix $A$ such that $f(U)=\mathrm{Tr}(A U^{t,t})$.

This way, the deviation from $\nu$ being a $t$-design~\eqref{eq:tde} can be measured as (see Ref.~\cite{9614165})
\begin{equation}
   \delta(\nu,t):=\left\|T_{\nu,t}-T_{\mu,t}\right\|_{\infty}\in [0,1] \mathrm{,}
   \label{eq:delta-metric}
\end{equation}
where $\|\cdot\|_{\infty}$ is an operator norm, leading to the notion of $\delta$-approximate $t$-designs, for which $\delta(\nu,t) < 1$ and exact $t$-designs, for which $\delta(\nu,t)=0$.

Finally, we recall that for $s < t$ we have $\mathcal{H}_s \subset \mathcal{H}_t$, hence $t$-designs are also $s$-designs.


The techniques used in this paper are similar to the ones from Ref.~\cite{9614165} and include the usage of the approximations to the  Dirac delta on compact groups. However, in this paper, we employ approximations based on the heat kernel - a natural and well-known object, contrary to the periodised Gaussian construction from Ref.~\cite{9614165}. We will first introduce the heat kernel with an elementary classical example.

\begin{exa}[Heat equation on a circle and the Poisson summation formula]
\label{ex:circle}
Consider a circle $S^1 \cong \mathbb{R}/\mathbb{Z}$ as an example of a 1-dimensional flat torus. Denoting the coordinate as $\phi$, the metric tensor induced from the Euclidean metric on $\mathbb{R}$ is $g= d\phi^2$. Hence $\Delta = \frac{\partial^2}{\partial \phi^2} $ and the heat equation \footnote{Physically we consider heat equations with unit conductivity.} on such a manifold reads $ u_t(t,\phi)=u_{\phi\phi}(t,\phi)$ with the initial condition $u(0,\phi)=f(\phi)$. We assume that $f(\phi)$ is square integrable, i.e. $f(\phi) \in L^2(S^1)$. Such a problem is typically solved by considering the corresponding equation on $\mathbb{R}$ with the periodic boundary conditions, separation of variables and the expansion of the initial data $f(\phi)$ to the Fourier series. Here, we take a different approach - we find the fundamental solution to the corresponding problem on $\mathbb{R}$ and periodise it. We use the definition of the Fourier transform of a (complex) function in $L^2(\mathbb{R})$ as the unique unitary extension of the map $g \mapsto \hat{g}$, $g \in L^1(\mathbb{R}) \cap L^2(\mathbb{R})$, where
\begin{equation}
   \hat{g}(\xi) = \int_{-\infty}^{\infty}   e^{-i2 \pi \xi x} g(x) dx, \quad \forall \xi \in \mathbb{R} \mathrm{.}
\end{equation}
From now on, we fix $t$ and consider the corresponding single-variable functions on $\mathbb{R}$. We unwrap the initial datum $f$ into an interval $[0, 1) \subset \mathbb{R}$. We denote the corresponding functions on $\mathbb{R}$ using the same symbols as for $S^1$. Applying the Fourier transform, we obtain
\begin{equation}
\label{eq:he_fourier}
    \hat{u}_t(\xi,t)+4 \pi^2\xi^2 \hat{u}(\xi,t)=0
\end{equation}
with the initial condition $\hat{u}(\xi,0)=\hat{f}(\xi)$, where $4 \pi^2\xi^2$ is the eigenvalue of $-\Delta$.
Multiplying~\eqref{eq:he_fourier} by $e^{4 \pi^2\xi^2t}$ we obtain $\frac{\partial}{\partial t}\left(  e^{4 \pi^2\xi^2t}\hat{u}(\xi,t)\right)=0$. Hence $e^{4 \pi^2\xi^2t}\hat{u}(\xi,t)$ is some function of $\xi$ and from the initial condition we see that $\hat{u}(\xi,t)=e^{-4\pi^2\xi^2t} \hat{f}(\xi)$. Denoting the inverse Fourier transform of $e^{-4\pi^2\xi^2t}$ as $H_{\mathbb{R}}(\phi,t)$ we obtain
\begin{equation}
    H_{\mathbb{R}}(\phi,t)=\frac{1}{\sqrt{4 \pi t}}e^{-\frac{\phi^2}{4t}} \mathrm{.}\label{eq:heat-kernel-real-line}
\end{equation}
Thus, the solution on $\mathbb{R}$ is the convolution (with respect to the $\phi$ variable)
\begin{equation}
    u(\phi,t)= \left(H_{\mathbb{R}}(\cdot,t) \ast f\right)(\phi) = \int_{-\infty}^\infty K_{\mathbb{R}}(\phi,\phi^\prime,t) f(\phi^\prime) d\phi^\prime \mathrm{,}
\end{equation}
where 
\begin{equation}
    K_{\mathbb{R}}(\phi,\phi^\prime,t)=H_{\mathbb{R}}(\phi-\phi^\prime,t)=\frac{1}{\sqrt{4 \pi t}}e^{-\frac{(\phi-\phi^{\prime})^2}{4t}} \mathrm{,}
\end{equation}
and $u(\phi, t)$ is smooth for all $t>0$.
To find the fundamental solution on $S^1$ we periodise $H_{\mathbb{R}}(\phi,t)$ obtaining a 1-periodic function on $\mathbb{R}$ and an equivalent function on $\mathbb{R}/\mathbb{Z}$
\begin{equation}
\label{eq:H_S1}
    H_{S^1}(\phi,t) = \frac{1}{\sqrt{4 \pi t}} \sum_{n\in \mathbb{Z}} e^{-\frac{(\phi+n)^2}{4t}}, \quad \phi \in \mathbb{R}/\mathbb{Z} \mathrm{.}
\end{equation}
To rewrite~\eqref{eq:H_S1} we can use the Poisson summation formula, which states that for a complex-valued function $s(x)$ on $\mathbb{R}$ whose all derivatives decay at infinity (i.e. a Schwartz function)

\begin{equation}
    \sum_{n=-\infty}^{\infty} s(n) =  \sum_{k=-\infty}^{\infty} \hat{s}(k) \mathrm{.}
\end{equation}
Treating~\eqref{eq:H_S1} as a 1-periodic function on $\mathbb{R}$ we apply the Poisson summation formula with $s(x)= e^{-\frac{(\phi+x)^2}{4t}}$ and obtain
\begin{equation}
    H_{S^1}(\phi,t) =  \sum_{k\in \mathbb{Z}} e^{-i 2 \pi k \phi} e^{-(2\pi k)^2 t}, \quad \phi \in \mathbb{R}/\mathbb{Z} \mathrm{,}
\end{equation}
which is the complex Fourier series expansion. Rewriting it into the sine-cosine form yields
\begin{equation}
    H_{S^1}(\phi,t) = 1 + 2 \sum_{k=1}^{\infty} \mathrm{cos}(2\pi k \phi) e^{-(2\pi k)^2 t}    \quad \phi \in \mathbb{R}/\mathbb{Z} \mathrm{,}
\end{equation}
which is a linear combination of eigenfunctions $2 \mathrm{cos}(2 \pi k \phi)$ with eigenvalues $-4 \pi^2 k^2$ and is of the same form as the fundamental solution obtained via the typical Fourier series expansion approach.
\end{exa}
Generalizing the heat kernel on $\mathbb{R}$ shown in equation~\eqref{eq:heat-kernel-real-line}, the heat kernel on $\mathbb{R}^d$ has the form
\begin{equation}
\label{heat-kernel-RN}
    K(t,x,y) = \frac{1}{(4 \pi t)^{d/2}} e^{-||x-y||^2/4t} \mathrm{,}
\end{equation}
where $||\cdot||$ is the Euclidean norm, defined for any $x,y \in \mathbb{R}^d$ and $t>0$. This is the fundamental solution to the heat equation
\begin{equation}
\label{eq:heat_kernel_on_R_N}
   u_t(t,x)=\Delta u(t,x) \mathrm{,}
\end{equation}
where $\Delta$ is the Laplacian on $\mathbb{R}^d$. One can consider the generalisation of the heat equation~\eqref{eq:heat_kernel_on_R_N} to other spaces, e.g. Riemannian manifolds $(M,g)$, by replacing $\Delta$  with the Laplace-Beltrami operator (in local coordinates)

\begin{equation}
    \Delta f = \frac{1}{\sqrt{|g|}} \partial_i  \left( \sqrt{|g|} g^{ij} \partial_j f\right)  \mathrm{,}
\end{equation}
acting on differentiable functions $f$ on $M$.

Following the method demonstrated in Example~\ref{ex:circle}, we can derive the heat kernel for the flat $d$-dimensional torus $\mathbb{R}^d/\Lambda$
by taking the heat kernel on $\mathbb{R}^d$ shown in equation~\eqref{heat-kernel-RN} and periodising the solution over the lattice $\Lambda \cong \mathbb{Z}^d$. The Poisson formula also generalises to higher dimensions. A thorough introduction to this topic may be found in Ref.~\cite{Maher-thesis}.

However, in our work, we are interested in heat kernels on Lie groups. To make sense of the heat equation on a Lie group, the proper Riemannian structure needs to be chosen. For compact semi-simple Lie group $G$, the Riemannian structure $(G, g)$ can be defined via Ad-invariant positive definite inner product $(\cdot , \cdot)$ stemming from the Killing form~\footnote{Taking the negative of the negative-definite Killing form leads to the positive-definite scalar product.}.

Notice that although the group $\mathrm{U}(d)$ is compact, it is not semi-simple. Hence, the metric tensor stemming from the Killing form is only positive semi-definite. Indeed, one can check that such a metric tensor for $\mathrm{U}(1) \cong S^1$ is identically zero, so it does not equip $S^1$ with the Riemannian structure. This is in contrast with the construction from Example \ref{ex:circle}.

Of course, general Lie groups are not commutative. Hence, in order to study the heat equation on a compact Lie group $G$, non-commutative Fourier/harmonic analysis is needed. Fourier coefficients on a compact Lie group are calculated with respect to the irreducible representations (irreps) of the group. Generally, the object being transformed is the regular Borel measure on $G$. However, we focus on the related case of integrable functions $f$. In this case (see e.g.~\cite{faraut_2008}), the Fourier coefficient $\hat{f}(\lambda)$ is the operator in $\mathrm{End}(V_{\pi_{\lambda}})$ defined via
\begin{equation}
\label{eq:fourier-nonab}
    \hat{f}(\lambda)=\int_G  \pi_{\lambda}(g^{-1}) f(g) d\mu(g)  \mathrm{,}
\end{equation}
where by $V_{\pi_{\lambda}}$ we denote the representation space of irrep $\pi_{\lambda}$ with highest weight $\lambda$.
Equipping the space $\mathrm{End}(V_{\pi_{\lambda}})$ with the norm $\sqrt{d_{\lambda}}||\cdot||_{HS}$, where $d_{\lambda} \coloneqq \mathrm{dim} \left(V_{\pi_{\lambda}}\right)$ and the Hilbert-Schmidt norm $||u||^2_{HS}=\mathrm{Tr}(uu^{*})$, one can show that such the Fourier transform is an isomorphism of Hilbert spaces 

\begin{equation}
     L^2(G) \cong \bigoplus_{\pi \in \hat{G}} \mathrm{End}(V_{\pi_{\lambda}}) \mathrm{,}
\end{equation}
where $\hat{G}$ is the set of equivalence classes of irreps of $G$. Namely, we obtain a generalisation of the Plancherel's theorem
\begin{equation}
    ||f||^2_2= \int_G |f(g)|^2 d\mu(g) = \sum_{\lambda \in \hat{G}} d_{\lambda} ||\hat{f}(\lambda)||_{HS}^2 \mathrm{.}
    \end{equation}
This is a consequence of the Peter-Weyl theorem.

\begin{rem}
    The transform~\eqref{eq:fourier-nonab} is a generalization of the Fourier series. Indeed, suppose a compact group $G$ is additionally abelian and connected (so is a torus). Take one-dimensional torus $\mathrm{U}(1) \cong S^1$ for example. The unitary irreps $\pi_{\lambda}$ of $\mathrm{U}(1)$ are the homomorphisms $\mathrm{U}(1) \rightarrow \mathrm{U}(1)$ so they are of the form $e^{i\phi} \mapsto  e^{i \lambda \phi}$ for some integer $\lambda$. All irreps are one-dimensional and $\hat{S^1} \cong \mathbb{Z}$. The Fourier coefficients of a function $f: \mathrm{U}(1) \rightarrow \mathbb{C}$ are
\begin{equation}
    \hat{f}(\lambda)=\frac{1}{2 \pi} \int_{-\pi}^{\pi} e^{-i \lambda \phi} f(e^{i\phi}) d \phi \mathrm{,}
\end{equation}
which coincides with the Fourier coefficients of the corresponding $2\pi$-periodic complex-valued function $ \tilde{f}: \mathbb{R} \rightarrow \mathbb{C}$, $\tilde{f}(x)=f(e^{i x})$.
Similarly, other results such as the completeness, orthogonality relations and Plancherel's theorem generalise to the non-abelian case via the Peter-Weyl theorems and representation theory.
\end{rem}

Heat kernels on simply-connected compact semi-simple Lie groups were studied in Ref.~\cite{urakawa-1974}, together with a useful Poisson form. In Section \ref{sec:connecting-pun-and-sun} we show how to apply those results for $\mathrm{PU}(d)$, which is not simply-connected.

\section{Main results and applications}
\label{sec:results}
Below we summarise the main results of this paper and outline some of their applications.
\begin{res}
     The main technical result of the paper is the construction of the polynomial approximation to the Dirac delta function $H_P^{(t)}(\cdot, \sigma)$ on $\mathrm{PU}(d)$, together with some of its properties.
    This allows us to summarise the key properties of the family of polynomial approximations of Dirac delta based on the trimmed heat kernels (see Theorem \ref{th:theorem3} for a precise statement).
\end{res}

\begin{res}
     The supports of exact $t$-designs in $\mathrm{PU}(d)$ with $d\geq 2$ are $\epsilon$-nets for $t \simeq \frac{d^{\frac{5}{2}}}{\epsilon}$ (see Theorem \ref{th:theorem1} for a precise statement).
\end{res}

\begin{res}
\label{res:main}
The supports of approximate $t$-designs in $\mathrm{PU}(d)$ with $d\geq 2$ are $\epsilon$-nets for $t \simeq \frac{d^{\frac{5}{2}}}{\epsilon}$ and $\delta \simeq \left(\frac{\epsilon}{d^{1/2}}\right)^{d^2}$.
(see Theorem \ref{th:theorem2} and its proof for a precise statement). This provides an ``essentially yes'' answer to the conjecture about the optimal scaling of $t(\epsilon,d)$ from Section IV in Ref.~\cite{9614165}.
\end{res}

\begin{app}[Efficiency of quantum gates]
\label{appl:1}

    This result is analogous to Proposition 2 from Ref.~\cite{9614165} and is a simple consequence of Result \ref{res:main}. For example, using the bound~\eqref{eqn:delta-bound-with-kappa} from the proof of Theorem \ref{th:theorem2}, one may prove that if $\nu$ is a discrete probability measure on $\mathrm{PU}(d)$ with $d \geq 2$, which is a $\delta$-approximate $t$-design with $\delta=\delta(\nu,t)$ for
    \begin{equation}
    \label{eq:t_cond}
        t \geq  32\frac{d^{\frac{5}{2}}}{\epsilon}\log(d)\log\left(\frac{4}{a_v \epsilon}\right) \mathrm{,}
    \end{equation}
    where $C=9 \pi$,
    then the support of $\nu^{*\ell}$ forms an $\epsilon$-net in $\mathrm{PU}(d)$ for
    \begin{equation}
        \ell \geq \frac{\mathrm{log}(1/\kappa(d))+(d^2-1)\left(\frac{5}{4}\mathrm{log}\left(\frac{1}{\epsilon}\right)+\frac{3}{4}\mathrm{log}(Dd)\right)}{\mathrm{log}\left(1/\delta(\nu,t)\right)} \mathrm{,}
    \end{equation}
    where
    \begin{equation}
        D =8C^{2/3} \mathrm{log}^{1/3}\left(2 C\right)  \mathrm{,}
    \end{equation}
    and $\mathrm{log}(1/\kappa(d)) < 5$. Moreover $\mathrm{log}(1/\kappa(d))< 0$ for $d \geq 9$. Hence, in the case of the measure $\nu_{\mathcal{S}}$, the support of $\nu_{\mathcal{S}}^{*\ell}$ are the length $\ell$ words built out of the elements of $\mathcal{S}$ and this result is the SKL theorem with $\mathrm{log}(\frac{1}{\epsilon})$ term but also the multiplicative factor $\mathrm{log}^{-1}(1/\delta(\nu,t))$, which depends on $t$ (or $\epsilon$ e.g. by taking~\eqref{eq:t_cond} as equality). Such SKL theorems can be used to bound the overhead of quantum circuits \cite{Slowik_qco_2024}. 
\end{app}

\begin{app}[Inverse-free SK theorem]
\label{appl:2}
Similarly as in Ref.~\cite{9614165}, Application \ref{appl:1} can be turned into the inverse-free non-constructive SKL theorem without the $\epsilon$-dependent multiplicative factor $\mathrm{log}^{-1}(1/\delta(\nu,t))$, by bounding the decay of $1-\delta(\nu,t)$ with growing $t$, using the results from Ref.~\cite{varjú2015randomwalkscompactgroups}. Namely, let $\nu_{\mathcal{S}}$ be a uniform probability measure on $\mathcal{S} \subset \mathrm{PU}(d)$. Then the support of $\nu_{\mathcal{S}}^{*\ell}$ is an $\epsilon$-net in $\mathrm{PU}(d)$ for

\begin{equation}\label{eqn:application2-with-group-constants}
    \ell \geq A \frac{\mathrm{log}^3\left(\frac{1}{\epsilon}\right)+B}{\mathrm{log}(1/\delta(\nu,t_0))}  \mathrm{,}
\end{equation}
where $A, B$ and $t_0$ are some positive group constants. However, the constants are unknown due to the ambiguity of constants presented in Ref.~\cite{varjú2015randomwalkscompactgroups}.
\end{app}

\begin{app}[Quantum complexity and black hole physics]

    This application comes from the Ref.~\cite{PhysRevX.14.041068} in which the authors use the approximation of Dirac delta construction from Ref.~\cite{9614165} to prove the results about the approximate equidistribution of $\delta$-approximate $t$-designs in the space $\mathbf{U}(d)$. This is then used to obtain results about the saturation and recurrence of the complexity of random local quantum circuits with gate set $\mathcal{S}$ without the assumptions on the spectral gap or inverse-closeness of $\mathcal{S}$. Such circuits can be used to model the chaotic dynamics of quantum many-body systems, which may be applicable in areas such as the physics of black hole interiors.
    
    We believe that after some work, using our construction, one may obtain the approximate equidistribution of $\delta$-approximate $t$-designs (Theorem 16 from Ref.~\cite{PhysRevX.14.041068}) with better scaling in $\epsilon$ and $d$, which translates to the saturation and recurrence results. 
\end{app}

\section{The heat kernel on the projective unitary group}
\label{sec:connecting-pun-and-sun}

In the sequel, we employ formulae which are known for the heat kernel on $\mathrm{SU}(d)$, but which do not appear to be readily available for that on $\mathrm{PU}(d)$. Using standard techniques, we are able to write the latter in terms of the former in order to generalise the formulae we need.

Before we do so, we recall some facts from the representation theory of Lie groups (see e.g. Ref.~\cite{Ha15, FuHa91, Ki08}) and fix some notation and relevant conventions. 

We work over the field of complex numbers. Let $K$ be a (real) compact simply-connected Lie group (e.g. $\mathrm{SU}(d)$). Due to compactness, we can restrict ourselves to unitary complex representations. The complex representation theory of $K$ is equivalent to the complex representation theory of its Lie algebra $\mathfrak{k}$, which is equivalent to the complex representation theory of its complexification $\mathfrak{g} = \mathfrak{k} + i\mathfrak{k}$. 

The Cartan subalgebra of Lie algebra $\mathfrak{g}$ is an abelian and diagonalisable subalgebra of $\mathfrak{g}$, which is maximal under set inclusion. In general, there are many ways to choose the Cartan subalgebra. In our case, we can fix it by choosing the maximal torus in the Lie group. Let $T$ be the maximal torus in $K$ with Lie algebra $\mathfrak{t}$. Then the corresponding Cartan subalgebra $\mathfrak{h}$ of $\mathfrak{g}$ is $\mathfrak{h}= \mathfrak{t}+i\mathfrak{t}$.

For $K=\mathrm{SU}(d)$, we have $\mathfrak{g}=\mathfrak{sl}(d, \mathbb{C})$, which is a finite-dimensional complex semi-simple Lie algebra. The theory of finite-dimensional complex representations of such algebras is well-known and particularly nice. For example, such algebras are classified by their root systems/Dynkin diagrams and such representations are characterised by the theorem of the highest weight. Here, to match the notation of  Ref.~\cite{urakawa-1974}, we take a slightly different approach than usual, which is more suitable for compact groups $K$. In particular, we consider real weights and roots. 

Let $(\Pi, V)$ be a (finite-dimensional) representation of $K$ and $\pi$ be the associated representation of $\mathfrak{g}$. The (real) weight of $V$ with respect to $\mathfrak{t}$ is an element $\lambda$ from the dual space $\mathfrak{t}^*$, such that the corresponding weight space

\begin{equation}
    V_{\lambda} \coloneqq \{v \in V| \quad  \pi(H)v=i \lambda(H)v, \forall H \in \mathfrak{t}\}
\end{equation}
is not zero. Hence, the (real) root of $\mathfrak{g}$ with respect to $\mathfrak{t}$ is the non-zero element $\alpha$ from the dual space $\mathfrak{t}^*$, such that the corresponding root space
\begin{equation}
   \mathfrak{g}_{\alpha}:= \{E \in \mathfrak{g}| \quad  [H,E]=i \alpha(H)E, \forall H \in \mathfrak{t}\} 
\end{equation}
is not zero. We denote the root system of $\mathfrak{g}$ as $\Phi$, the set of all positive roots as $\Phi^+$ and the set of simple roots as $\Delta$.

Additionally, we assume that $K$ is simply-connected. The algebra $\mathfrak{k}$ is equipped with $Ad(K)$-invariant positive-definite inner product $(\cdot, \cdot)$ defined as the negative of its Killing form (which is non-degenerate and negative-definite)

\begin{equation}
\label{eq:killing_tensor}
    (X,Y) \coloneqq - \mathrm{Tr} \left( \mathrm{ad}(X) \circ \mathrm{ad}(Y) \right) \mathrm{.}
\end{equation}
The restriction of $(\cdot, \cdot)$ to $\mathfrak{t}$ is non-degenerate (hence, it defines the inner product on $\mathfrak{t}$). Thus, can use $(\cdot, \cdot)$ to identify $\mathfrak{t} \cong \mathfrak{t}^{*}$ via $X \mapsto \lambda_X$ for $X \in \mathfrak{t}$, where $\lambda_X(Y)=(X,Y)$ for any $Y \in \mathfrak{t}$ and $\lambda \mapsto X_{\lambda}$ for $\lambda \in \mathfrak{t}^{*}$, where $\lambda(Y)=(X_\lambda,Y)$ for any $Y\in \mathfrak{t}$. This way we also define $(\cdot, \cdot)$ on $\mathfrak{t}^{*}$ as $(\lambda,\kappa)=(X_{\lambda}, X_{\kappa})$ for $\lambda, \kappa \in \mathfrak{t}^{*}$ and the induced norm $||\cdot||$. The inner product (\ref{eq:killing_tensor}) defines the Riemannian metric on $K$, hence also the Laplace-Beltrami operator $\Delta$. Thus, we can study the corresponding heat kernels.

Additionally for $\lambda \in \mathfrak{t}^*, \lambda \neq 0$ we define

\begin{equation}
\label{eq:coroot}
    \lambda^* \coloneqq \frac{2}{(\lambda, \lambda) } \lambda 
\end{equation}
and the Weyl vector
\begin{equation}
\label{eqn:weyl-vector-delta-defn}
    \delta \coloneqq \frac{1}{2} \sum_{\alpha \in \Phi^+ } \alpha \mathrm{.}
\end{equation}

We aim to describe the heat kernel on $\mathrm{PU}(d)$ in terms of the heat kernel on $\mathrm{SU}(d)$. Specialising to the case $K = \mathrm{SU}(d)$, we introduce
\begin{align}
    \Gamma = \left\{\exp\left(\frac{2k\pi}{d}\right) I \,\middle|\, k \in \mathbb{Z}\right\} \cong \mathbb{Z}_d,\label{eqn:SU(D)-center}
\end{align}
so that $K/\Gamma \cong \mathrm{PU}(d)$. Our approach is based on the averaging map

\begin{equation}
    f(x) \mapsto \frac{1}{|\Gamma|}\sum_{\gamma \in \Gamma} f(\gamma x) \mathrm{.}
\end{equation}
Every irrep of $\mathrm{PU}(d)$ extends to an irrep of $\mathrm{SU}(d)$ by making it constant on $\Gamma$-cosets. It follows from Lemma~\ref{lem:averaging-irreps-is-nice} in Appendix \ref{app:heat_kenrel_on_su_and_pu} that every irrep of $\mathrm{PU}(d)$ is obtained by applying the averaging map to a corresponding irrep of $\mathrm{SU}(d)$. Let's consider an elementary example.

\begin{exa}[Irreps of $\mathrm{SU}(2)$ and $\mathrm{SO}(3)$]
    In this example we consider the irreps of $\mathrm{SU}(2)$ and aim to find the corresponding representations of $\mathrm{PU}(2)\cong \mathrm{PSU}(2)\cong \mathrm{SO}(3)$.
    
    The irreps of $\mathrm{SU}(2)$ can be enumerated by the corresponding particle spin $j=0,\, \frac{1}{2},\, 1, \ldots$ and have dimensions $2j+1$ (i.e. single irrep in each dimension). The centre $\Gamma \cong \mathbb{Z}_2$ acts by $\pi$-shifts. We want to find the irrep obtained via the averaging map applied to the irrep with spin $j$. We start with the spin $j$ character of $\mathrm{SU}(2)$
    \begin{equation}
    \label{eq:char}
\chi_j(\theta)=\frac{\mathrm{sin}\left((2j+1)\theta\right)}{\mathrm{sin}(\theta)} \mathrm{.}
    \end{equation}
    Averaging~\eqref{eq:char} yields
    \begin{equation}
        \frac{1}{2}\left(\frac{\mathrm{sin}\left((2j+1)\theta\right)-\mathrm{sin}\left((2j+1)\theta+(2j+1)\pi\right)}{\mathrm{sin}(\theta)}\right)= \begin{cases}
\chi_j(\theta) , \quad &\text{for j being full-integer,}\\
0, \quad &\text{for j being half-integer.}
\end{cases}
    \end{equation}
Hence, we obtain a well-known fact that the full-integer spin irreps of $\mathrm{SU}(2)$ are projective.
\end{exa}

Thus, we can focus on the description of the heat kernel on $\mathrm{SU}(d)$. The first formula we employ is the standard expression for the heat kernel as the combination of characters, valid for compact semi-simple simply-connected Lie groups 
\begin{align}
\label{eq:heat_kernel_original}
    H_{S}(g, \sigma) &= \sum_\lambda d_\lambda \exp\left(-k_\lambda\sigma\right)\chi_\lambda( g),
\end{align}
where $\lambda$ is the highest weight vector and the sum is over complex irreps, $d_\lambda$ is the dimension of the irrep, $\chi_{\lambda}$ is the character and $k_\lambda \coloneqq (\lambda + 2\delta, \lambda)$ - see Ref.~\cite{faraut_2008,urakawa-1974,ojm/1200693127}. The parameter $\sigma > 0$ plays the role of time and the subscript $S$ indicates that this is the heat kernel on $\mathrm{SU}(d)$; later we will use $H_P$ to denote the equivalent object on $\mathrm{PU}(d)$. The formula (\ref{eq:heat_kernel_original}) is, in fact, the decomposition in terms of the eigenfunctions of the Laplace-Beltrami operator $\Delta$, which are the characters $\chi_{\lambda}$, where

\begin{equation}
    \Delta \chi_{\lambda} = - k_{\lambda} \chi_{\lambda} \mathrm{.}
\end{equation}

In order to describe the highest weights $\lambda$ for $\mathrm{SU}(d)$ using vectors, we introduce the linear functionals on $\mathfrak{t}$ (see~\eqref{eqn:maximal-torus-parameterisation}) acting as
\begin{align}
    L_j : \begin{pmatrix}
        i \phi_1\\&i \phi_2\\&&\ddots\\&&& i \phi_d
    \end{pmatrix} \mapsto  \phi_j \mathrm{,}
\end{align}
so that $\lambda=\sum_{i=1}^d \lambda_i L_i $. Then the highest weights of $\mathrm{U}(d)$ can be labelled by integer-valued vectors $(\lambda_1, \lambda_2, \ldots, \lambda_d)$ with non-increasing entries, i.e. $\lambda_i \geq \lambda_{i+1}$ for $1 \leq i \leq d-1$. One can show that any irreducible representation of $\mathrm{U}(d)$ restricts to an irreducible representation of $\mathrm{SU}(d)$, while any irreducible representation of $\mathrm{SU}(d)$ extends to one of $\mathrm{U}(d)$. However, this mapping is not one-to-one. Since $\sum_{i=1}^d L_i (x)= 0$ for any $x\in\mathfrak{sl}(d,\mathbb{C})$ any irreducible representations of $\mathrm{U}(d)$ labelled by vectors which differ by a constant vector $(n, n, \ldots, n)$ for some $n\in \mathbb{Z}$ correspond to the same irreducible representation of $\mathrm{SU}(d)$.

We will also need to consider the irreducible representations of $\mathrm{PU}(d)$, which consists of equivalence classes of members of $\mathrm{U}(d)$ under the equivalence relation $U\sim e^{i\phi} U$. Any irrep of $\mathrm{PU}(d)$ extends to an irrep of $\mathrm{U}(d)$ by choosing it to be constant on equivalence classes so we can again label irreps of $\mathrm{PU}(d)$ with the same labels as those of $\mathrm{U}(d)$. An irrep of $\mathrm{U}(d)$ corresponds to an irrep of $\mathrm{PU}(d)$ exactly when it is constant on equivalence classes, which happens when the highest weight vector satisfies $\sum_j \lambda_j = 0$.
We denote \footnote{Not to be confused with the norm $||\cdot||$ stemming from the Killing form}
\begin{equation}
   ||\lambda||_1 \coloneqq \sum_{i=1}^d |\lambda_i|. 
\end{equation}
By restricting to $||\lambda||_1 \leq 2t$, we obtain the set of vectors labelling the projective irreps corresponding to the $t$-design, i.e. appearing in the decomposition of the representation $U^{t,t}$ \cite{348523741eed4f2b81d45ac8169a1432}.
For $\mathrm{SU}(d)$ the dimension of the representation $d_\lambda$ and the eigenvalue $k_\lambda$  may be expressed as (see Ref.~\cite{faraut_2008})
\begin{align}
    d_\lambda &= \chi_\lambda(e) = \frac{\prod_{j<l} (\lambda_j - \lambda_l +l - j) }{\prod_{j<l} (l - j) } \leq\left(1 + \norm{\lambda}_1 \right)^{d(d-1)/2}\label{eqn:irrep-dimension-bound}  \mathrm{,}\\
    k_\lambda &= \frac{1}{2d}\sum_j\left( \lambda_j^2 + (d - 2j+1)\lambda_j \right) - \frac{1}{2d^2} (\sum_j \lambda_j)^2.
\end{align}
If we have $\sum_j\lambda_j = 0$, so the $\mathrm{SU}(d)$ irrep is also a $\mathrm{PU}(d)$ irrep then we have the bound
\begin{align}
    k_\lambda \geq  \frac{\norm{\lambda}_1^2}{2d^2} + \frac{1}{4}\norm{\lambda}_1 \mathrm{.}\label{eqn:irrep-casimir-bound}
\end{align}
Since $\mathrm{SU}(d)$ and $\mathrm{PU}(d)$ share a Lie algebra and $d_{\lambda}$ with $k_{\lambda}$ can be computed in terms of properties of the Lie algebra, these are identical for special and projective unitary representations. Therefore, the averaging map may be applied term-wise to the formula for the heat kernel on $\mathrm{SU}(d)$ to obtain the corresponding one for $\mathrm{PU}(d)$
\begin{align}
    \frac{1}{\abs{\Gamma}}\sum_{\gamma\in \Gamma} H_\mathrm{S}(\gamma g, \sigma) &= \frac{1}{d}\sum_{\gamma\in \Gamma} \sum_\lambda d_\lambda \exp\left(-k_\lambda\sigma\right)\chi_\lambda(\gamma g) \label{eqn:shifted-heat-kernels}\\
    &=  \sum_\lambda d_\lambda \exp\left(-k_\lambda\sigma\right)\frac{1}{d}\sum_{\gamma\in \Gamma}\chi_\lambda(\gamma g) \\
    &=  \sum_\lambda d_\lambda \exp\left(-k_\lambda\sigma\right)\delta_\mathrm{P}(\lambda)\chi_\lambda(g) \\
    &=  H_\mathrm{P}( g, \sigma),\label{eqn:heat-kernel-character-formula}
\end{align}
where $H_\mathrm{S}$ and $H_\mathrm{P}$ are the special and projective unitary heat kernels, respectively, and from equation~\eqref{eqn:SU(D)-center}, we have $\abs{\Gamma} = d$. Here $\delta_\mathrm{P}$ is a Kronecker-delta like function, taking value $1$ for irreps of $\mathrm{SU}(d)$ which are also irreps of $\mathrm{PU}(d)$ (i.e. projective representations) and value $0$ otherwise. In fact,
\begin{align}
    \delta_P(\lambda) &=\begin{cases}1, &\sum_j\lambda_j = 0\\0,&\text{ otherwise}.\end{cases}
\end{align}
Notice that the heat kernel~\eqref{eq:heat_kernel_original} is a class function, hence it can be defined instead on the maximal torus of $\mathrm{SU}(d)$. With a mild abuse of notation, we will not distinguish between the two descriptions.

Let us formulate a second formula for the heat kernel on compact, semi-simple, simply-connected Lie groups from Ref.~\cite{urakawa-1974}
\begin{align}
    j(\exp(X)) &=(2i)^m \prod_{\alpha\in\Phi^+} \sin\left(\frac{\alpha(X)}{2}\right) \mathrm{,}\label{eqn:urakawa-j}\\
    K(X,\sigma) &= \sum_{\gamma\in\Gamma} \pi(\lambda_X + \gamma) \exp\left(-\frac{1}{4\sigma} \norm{\lambda_X + \gamma}^2\right)\label{eqn:urakawa-K} \mathrm{,}\\
    H(\exp(X),\sigma) &= \frac{c}{\pi(\delta)} (2\pi)^{l+m} i^m j(\exp(X))^{-1}\exp\left(\norm{\delta}^2 \sigma\right) (4\pi \sigma)^{-N/2} K(X,\sigma)\label{eqn:urakawa-H},
\end{align}
where 
\begin{equation}
    \pi(\lambda) \coloneqq \prod_{\alpha\in\Phi^+} (\lambda, \alpha) \mathrm{,} \quad \lambda \in \mathfrak{t}^* \label{eqn:urakawa-pi}\mathrm{,}
\end{equation}
$m=\abs{\Phi^+}$, $N$ is the group dimension, $c$ is a (known) dimension-dependent group constant and $\Gamma$ is a lattice generated by $l= \mathrm{dim}(\mathfrak{t})$ elements $\alpha_j^*$ (see~\eqref{eq:coroot}) corresponding to the simple roots $\Delta=\{\alpha_1, \ldots, \alpha_l\}$
\begin{equation}
     \Gamma \coloneqq 2\pi \sum_{j=1}^l \mathbb{Z} \alpha_j^* \mathrm{.}
\end{equation}
Formally, the heat kernel given by~\eqref{eqn:urakawa-H} is only defined for the regular elements $X$ from $\mathfrak{t}$, i.e. the ones with distinct eigenvalues. However, the corresponding set of group elements for which formula~\eqref{eqn:urakawa-H} is not well defined is of Haar-measure zero.  The function defined by the formula~\eqref{eqn:urakawa-H} extends 
to a unique continuous function,
that defined by~\eqref{eq:heat_kernel_original} on the whole group. Hence, with a slight abuse of notation, we treat~\eqref{eqn:urakawa-H} as defined on the whole group, e.g. when integrating. More explicitly, one can see that the non-definedness of~\eqref{eqn:urakawa-H} at non-regular elements arises exactly from the factor of $j(\exp(X))^{-1}$ giving factors of $\sin(\alpha(X)/2)^{-1}$ as $\alpha(X)\to 0$. However, these apparently singular terms are balanced by terms linear in $\alpha(X)$ arising from the term $\pi(\lambda_X + \gamma)$. We will sketch this in more detail in Appendix~\ref{sec:well-defined-heat-kernel}.

The formula~\eqref{eqn:urakawa-H} is equivalent to~\eqref{eq:heat_kernel_original} via the Poisson summation formula and we refer to it as the Poisson form (of the heat kernel). The Poisson form is relevant to us exactly because of the factor of $\sigma^{-1}$ appearing in the exponent in~\eqref{eqn:urakawa-K}. Roughly speaking, this formula is useful for bounding the behaviour of the heat kernel when the $\sigma$ is small, while equation~\eqref{eq:heat_kernel_original} is useful when $\sigma$ is large. 

The maximal torus $T$ of $\mathrm{SU}(d)$ may be identified with the group of determinant $1$ diagonal matrices parametrised by a vector $\phi \in \mathbb{R}^{d-1}$ as
\begin{align}
    T(\phi) \coloneqq \begin{pmatrix}
        e^{i\phi_1}\\ &e^{i\phi_2}\\&&\ddots\\&&&e^{i\phi_{d-1}}\\&&&&e^{-i\sum_{j=1}^{d-1}\phi_j}
    \end{pmatrix} \mathrm{.}
\end{align}

The Lie algebra $\mathfrak{t}$ of $T$ consists of traceless diagonal purely imaginary matrices parametrised by $\phi \in \mathbb{R}^{d-1}$ as
\begin{align}\label{eqn:maximal-torus-parameterisation}
    X(\phi) \coloneqq i\begin{pmatrix}
        \phi_1\\&\phi_2\\&&\ddots\\&&&\phi_{d-1}\\&&&&-\sum_{j=1}^{d-1}\phi_j
    \end{pmatrix} \mathrm{.}
\end{align}
Clearly, one can restrict the parameters e.g. $\phi_i \in (-\pi, \pi]$ for $1 \leq i \leq d-1$.

The complexified Lie algebra of $K$ is $\mathfrak{g}=\mathfrak{sl}(d, \mathbb{C})$ and consists of traceless complex matrices. Let $E_{ij} \in \mathfrak{sl}(d, \mathbb{C})$ where $i\neq j$ denote the matrix with 1 in the $(i,\,j)$ position and 0 elsewhere. The root system of $\mathfrak{g}$ with respect to $\mathfrak{t}$ is
$\Phi = \{\alpha_{ij}| \quad 1 \leq i \neq j \leq d\}$
where the linear functionals $\alpha_{ij}$ act as
\begin{align}
    \alpha_{ij}: X(\phi) \mapsto \phi_i - \phi_j 
\end{align}
 and the corresponding one-dimensional root spaces are $\mathfrak{g}_{\alpha_{ij}}= \mathbb{C} E_{ij}$. Noting that $\alpha_{ji} = -\alpha_{ij}$ we choose positive roots $\Phi^+=\{\alpha_{ij} | \quad 1 \leq i < j \leq d\}$ and a set of simple roots to be $\Delta=\{\alpha_{i, i+1} | \quad 1 \leq i \leq d-1\}$. We identify the Lie algebra $\mathfrak{t}$ with its dual $\mathfrak{t}^*$ under the inner product obtained from the Killing form
\begin{align}
\label{eq:killing_form}
    (X,Y) &= -2d\tr\left(XY\right) \mathrm{.}
\end{align}
 Under this identification $\alpha_{ij}$ is mapped to a diagonal matrix $X_{\alpha_{ij}}$ from $\mathfrak{t}$ with $\pm i/2d$ appearing as the only two non-zero entries of the $i^\text{th}$ and $j^\text{th}$ positions on the diagonal, respectively. Let $X_{\delta}$ be the element of $\mathfrak{t}$ which is identified with the Weyl vector $\delta$, defined in equation~\eqref{eqn:weyl-vector-delta-defn}. Then 
\begin{align}
    X_{\delta} &= \frac{1}{2}\sum_{i<j} X_{\alpha_{ij}} \mathrm{,} \\
    (X_{\delta})_{kk} &=  i\left(\frac{d+1}{4d} - \frac{k}{2d}\right)\label{eq:delta} \mathrm{,}\\
    \norm{\delta}^2 &= \norm{X_{\delta}}^2\\
    &=2d \sum_{k=1}^d\left(\frac{d+1}{4d} - \frac{k}{2d}\right)^2\\
    &= \frac{d^2 -1}{24}.
\end{align}
The duals of elements of $\Gamma$ may be indexed by length $d-1$ integer vectors $k$
\begin{align}
    X(k) \coloneqq   2\pi i\begin{pmatrix}
        k_1 \\&k_2\\&&\ddots\\&&&k_{d-1}\\&&&&-\sum_{j=1}^{d-1} k_{j}
    \end{pmatrix} \mathrm{.}
\end{align}

To simplify the notation we rename $X(\phi)$ and $X(k)$ as $X_{\phi}$ and $X_k$. Specialising the Poisson form of the heat kernel~\eqref{eqn:urakawa-H} to this parametrisation of the maximal torus of $\mathrm{SU}(d)$, one obtains
\begin{align}
     H_S(\exp(X_{\phi}),\sigma)\coloneqq C(d, \sigma) j(\exp(X_\phi))^{-1} \sum_{k\in\mathbb{Z}^{d-1}} \pi(X_\phi + X_k) \exp\left(-\frac{1}{4\sigma} \norm{X_\phi + X_k}^2\right) \mathrm{,}\label{eqn:heat-kernel-poisson-specialised}
\end{align}
where
\begin{equation}
\label{eq:C_def}
    C(d, \sigma) \coloneqq \frac{c}{\pi(\delta)} (2\pi)^{l+m} i^m \exp\left(\norm{\delta}^2 \sigma\right) (4\pi \sigma)^{-N/2}
\end{equation}
and for convenience we have written everything in terms of elements of the Lie algebra, converting elements of the dual where necessary, e.g. $\pi(X) = \prod_{\alpha\in\Phi^+} (\alpha, \lambda_X) = \prod_{\alpha\in\Phi^+} \alpha(X)$.

In order to obtain the corresponding heat kernel on $\mathrm{PU}(d)$ we proceed as above, and again we average this expression over the normal subgroup $\Gamma$ given by  $d^\text{th}$ roots of unity. We obtain an expression for the heat kernel on $\mathrm{PU}(d)$ in the Poisson form

\begin{align}H_P(\exp(X_{\phi}),\sigma) &\coloneqq \frac{1}{\abs{\Gamma}}\sum_{\gamma \in\Gamma}  H_S(\gamma \exp(X_{\phi}),\sigma) \\
&= 
     \frac{C(d, \sigma)}{\abs{\Gamma}}\sum_{\gamma\in\Gamma} j(\exp(X_\phi))^{-1} \sum_{k\in\mathbb{Z}^{d-1}} \pi(X_\phi + X_k +\log(\gamma)) \exp\left(-\frac{1}{4\sigma} \norm{X_\phi + X_k+\log(\gamma)}^2\right),
\end{align}
where we have used that $j(\gamma e^X) = j(e^X)$ and to match how we parametrised the torus in~\eqref{eqn:maximal-torus-parameterisation} we choose the logarithm to be 
\begin{align}
    \log\left(e^{i\frac{2\pi r}{d}} I \right) = i\frac{2\pi}{d}\begin{pmatrix}
        r\\&\ddots\\&&r\\&&&-r(d-1)
    \end{pmatrix}.
\end{align}

We stress that formally $H_P(\exp(X_{\phi}),\sigma)$, is a function on $\mathrm{SU}(d)$ that is a lift of the heat kernel on $\mathrm{PU}(d)$.

\section{Bounds for exact \texorpdfstring{$t$}{t}-designs}
\label{sec:t-design-eps-net-bound}
In this section, we prove an error bound for a polynomial approximation to the heat kernel. In order to connect to projective $t$-designs, it is necessary for this approximation to be in terms of \emph{balanced polynomials}. We call a function on $\mathrm{PU}(d)$ a balanced polynomial of order $t$ if 
\begin{align}
    f(\mathbf{U}) = \tr\left(\left(U \otimes U^{*}\right)^{\otimes t} A\right),
\end{align}
holds for all $U\in \pi^{-1}(\mathbf{U})$, where $A$ is some fixed matrix and recalling that in our present notation each $ \mathbf{U} \in\mathrm{PU}(d)$ is an equivalence class of elements  $\pi^{-1}(\mathbf{U}) \subset \mathrm{SU}(d)$.

The approximation we seek follows directly from the formula
\begin{align}
    H_P(g,\sigma) = \sum_\lambda d_\lambda \exp(-\sigma k_\lambda) \chi_\lambda(g),\label{eqn:heat-kernel-character-formula-repeated}
\end{align}
where $\sum_i \lambda_i=0$, given by~\eqref{eqn:heat-kernel-character-formula}, upon noticing that each character $\chi_\lambda$, of the projective unitary group is a balanced polynomial of order $\frac{\norm{\lambda}_1}{2}$. This follows since they may be written in terms of Schur functions, see e.g. Ref.~\cite{faraut_2008} for details. Let us denote by $H_P^{(t)}(g,\sigma)$ the restriction of the sum in~\eqref{eqn:heat-kernel-character-formula-repeated} to balanced polynomials of order at most $t$, that is
\begin{align}
\label{eq:trimmed_character}
    H_P^{(t)}(g,\sigma) &= \sum_{\lambda,\, \norm{\lambda}_1 \leq 2t} d_\lambda \exp(-\sigma k_\lambda) \chi_\lambda(g) \mathrm{,}
\end{align}
where $\sum_i \lambda_i=0$.
We refer to the polynomial approximations $H_P^{(t)}$ of the heat kernel $H_P$ as trimmed heat kernels.

We seek to bound the $2$-norm of the difference between the trimmed heat kernel $H_P^{(t)}$ and the full heat kernel $H_P$, where the $2$-norm here is the one induced by the Haar measure
\begin{align}
    \norm{f}_2^2 = \int_{\mathrm{SU}(d)} \abs{f}^2\,d\mu.
\end{align}
Using expressions \eqref{eqn:heat-kernel-character-formula-repeated} and \eqref{eq:trimmed_character} one may bound the trimming error $\norm{H_P(\cdot, \sigma) - H_P^{(t)}(\cdot,\sigma)}_2$ for $t$ large enough (for fixed $\sigma$ and $d$). For a precise statement and proof, see Lemma \ref{lem:trim-heat-kernel} from Appendix \ref{app:trimming}. This allows us to focus on the properties of the full heat kernel $H_P$.

\begin{rem} [Optimality of the trimming procedure]
    The trimming procedure given by~\eqref{eq:trimmed_character} is optimal in the following sense. The trimmed heat kernel $H_P^{(t)}(\cdot, \sigma)$ is the unique function in $\mathcal{H}_t$ closest to the heat kernel $H_P(\cdot, \sigma)$ in $L^2$-norm. Indeed, $H_P^{(t)}(\cdot, \sigma)$ is the orthogonal projection of $H_P(\cdot, \sigma)$ onto a finite-dimensional subspace $\mathcal{H}_t$ of the Hilbert space $L^2(\mathrm{PU}(d))$. Hence, the result follows from Hilbert's projection theorem.
\end{rem}

The next step is to bound the complement of the integral of an absolute value of a heat kernel on $\mathrm{PU}(d)$ over the complement of a small ball $B_{P,\epsilon}$. As explained in Section~\ref{sec:connecting-pun-and-sun}, we reduce this problem to considerations on $\mathrm{SU}(d)$. Recall that an element of $\mathrm{PU}(d)$ consists of an equivalence class of elements of $\mathrm{SU}(d)$ where two matrices are equivalent if they differ by an element of $\Gamma$.
By $B_{\epsilon}(V)$ we denote the closed operator-norm $\epsilon$-ball in $\mathrm{SU}(d)$ centred at $V$

\begin{align}
    B_{\epsilon}(V) &\coloneqq \left\{U\in\mathrm{SU}(d)\,\middle|\,d(U,V) \leq \epsilon\right\}\label{eqn:special-epsilon-ball}.
\end{align}
By $B_{P,\epsilon}(\mathbf{V})$ be denote a closed $\epsilon$-ball centred at $\mathbf{V}$ in metric $d_P(\cdot, \cdot)$
\begin{align}
    B_{P,\epsilon}(\mathbf{V}) &\coloneqq\left\{\mathbf{U}\in\mathrm{PU}(d)\,\middle|\,d_P(\mathbf{U},\mathbf{V}) \leq \epsilon\right\}\label{eqn:projective-epsilon-ball}.
\end{align}
By $\tilde{B}_{P,\epsilon}(V)\subseteq \mathrm{SU}(d)$, where $V\in \pi^{-1}(\mathbf{V})$, we denote the inverse image of $B_{P,\epsilon}(\mathbf{V})$ under the quotient map $\pi: \mathrm{SU}(d) \rightarrow \mathrm{PU}(d)$
\begin{align}
\tilde{B}_{P,\epsilon}(V) &\coloneqq \pi^{-1}({B}_{P,\epsilon}(\mathbf{V}))= \bigcup_{\gamma\in\Gamma} \gamma B_\epsilon(V).\label{eqn:projective-epsilon-ball-in-sun}
\end{align}
If the centre $\mathbf{V}$ is not specified, the ball is centred at the group identity.
By $\mathcal{H}^d_r$ we denote an $\infty$-norm closed ball/hypercube in $\mathbb{R}^d$ of radius $r$
\begin{equation}
  \mathcal{H}^d_r \coloneqq \{v \in \mathbb{R}^d \,|\, ||v||_{\infty} \leq r \}  
\end{equation}
and by $\mathcal{Z}^{d-1}$ we denote the hyperplane in $\mathbb{R}^{d}$ consisting of vectors $y$ with $\sum_{j=1}^d y_j = 0$.

Every element $U \in \mathrm{SU}(d)$ can be written as $U=VDV^{-1}$ for some $V \in \mathrm{SU}(d)$ and $D \in T$. Since the operator norm is unitary invariant, a ball $B_{\epsilon}$ corresponds to a unique ball $B_{T,\epsilon} \subset T$, via  $B_{\epsilon} = \bigcup_{V \in \mathrm{SU}(d)} VB_{T,\epsilon} V^{-1} $, where
$B_{T,\epsilon} = \{D \in T\, |\, d(D,I) \leq \epsilon\}$.

Hence, a ball $B_{\epsilon} \subset \mathrm{SU}(d)$ corresponds to a ball in  $T \subset \mathrm{SU}(d)$ which is an image of $\mathcal{H}^{d-1}_{\tilde{\epsilon}}$ (identified with a subset of $\mathfrak{t}$), under the exponential map $\exp: \mathfrak{t} \rightarrow T$, where
\begin{equation}
    \tilde{\epsilon} = 2 \cdot \mathrm{arcsin}(\epsilon/2) \in [0, \pi] \mathrm{,}
\end{equation}
so $\epsilon \leq \tilde{\epsilon}$.

We first prove a lemma allowing us to remove the summation over $\Gamma$ obtained when we express the $\mathrm{PU}(d)$ heat kernel in terms of that of $\mathrm{SU}(d)$.
\begin{lem}
\label{lem:intP}
    Let $\varphi$ be a non-negative function on $\mathrm{SU}(d)$ Haar-normalised to 1. Fix $\epsilon > 0$ and consider a set $\tilde{B}_{P,\epsilon}$ defined by~\eqref{eqn:projective-epsilon-ball-in-sun}, let its complement be $\tilde{B}_{P,\epsilon}^c$. Then
    \begin{align}
        \int_{\tilde{B}_{P, \epsilon}^c} \frac{1}{|\Gamma|} \sum_{\gamma \in \Gamma}   \varphi (\gamma g) d \mu ( g)  \leq \int_{B_{\epsilon}^c}  \varphi (g) d \mu(g).
    \end{align}
\end{lem}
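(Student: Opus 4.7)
The strategy is almost entirely set-theoretic: exploit the $\Gamma$-invariance of $\tilde{B}_{P,\epsilon}$, combined with the translation-invariance of the Haar measure on $\mathrm{SU}(d)$, to collapse the averaged sum and then appeal to monotonicity of integration of a non-negative function over nested sets.

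First, I would swap the finite sum with the integral, obtaining
\begin{align}
    \int_{\tilde{B}_{P, \epsilon}^c} \frac{1}{|\Gamma|} \sum_{\gamma \in \Gamma}   \varphi (\gamma g)\, d \mu ( g)  = \frac{1}{|\Gamma|}\sum_{\gamma\in\Gamma} \int_{\tilde{B}_{P,\epsilon}^c} \varphi(\gamma g)\, d\mu(g).
\end{align}
Then for each fixed $\gamma\in \Gamma$, the change of variables $g' = \gamma g$, together with left-invariance of $\mu$, rewrites the integrand as $\varphi(g')$ and the integration region as $\gamma \tilde{B}_{P,\epsilon}^c$.

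The key observation is that $\tilde{B}_{P,\epsilon} = \bigcup_{\gamma'\in\Gamma} \gamma' B_\epsilon$ is by construction $\Gamma$-invariant under left multiplication (for any $\gamma\in\Gamma$ one has $\gamma \tilde{B}_{P,\epsilon} = \tilde{B}_{P,\epsilon}$, since multiplication by $\gamma$ merely permutes the cosets $\gamma' B_\epsilon$). Consequently its complement is also $\Gamma$-invariant, so $\gamma \tilde{B}_{P,\epsilon}^c = \tilde{B}_{P,\epsilon}^c$ for every $\gamma$. This makes every summand identical, and the factor $1/|\Gamma|$ cancels the sum, yielding
\begin{align}
    \frac{1}{|\Gamma|}\sum_{\gamma\in\Gamma} \int_{\tilde{B}_{P,\epsilon}^c} \varphi(\gamma g)\, d\mu(g) = \int_{\tilde{B}_{P,\epsilon}^c} \varphi(g)\, d\mu(g).
\end{align}

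Finally, since $B_\epsilon \subseteq \tilde{B}_{P,\epsilon}$ (take $\gamma=e$ in the union~\eqref{eqn:projective-epsilon-ball-in-sun}), we have the reverse inclusion $\tilde{B}_{P,\epsilon}^c \subseteq B_\epsilon^c$. Because $\varphi\geq 0$, monotonicity of the Lebesgue integral over nested measurable sets gives the desired bound. The proof is essentially a one-liner once the $\Gamma$-invariance of $\tilde{B}_{P,\epsilon}$ is recognised; I do not anticipate any real obstacle, and the only place to be careful is the bookkeeping in the substitution step (left- versus right-multiplication), but both sides are governed by translation-invariant Haar measure so either convention works.
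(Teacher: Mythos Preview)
Your proof is correct and follows essentially the same approach as the paper: swap the sum with the integral, perform the change of variables $g'=\gamma g$, invoke $\Gamma$-invariance of $\tilde{B}_{P,\epsilon}$ to collapse the average, and finish with monotonicity via $B_\epsilon\subseteq\tilde{B}_{P,\epsilon}$. The only cosmetic difference is that the paper carries out these steps on $\tilde{B}_{P,\epsilon}$ itself and then subtracts from $1$ using the normalisation hypothesis, whereas you work directly on the complement; your route is slightly more economical and in fact shows the normalisation assumption is not needed for the inequality.
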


\begin{proof}
    \begin{align}
  \int_{\tilde{B}_{P, \epsilon}} \frac{1}{|\Gamma|} \sum_{\gamma \in \Gamma}   \varphi (\gamma g) d \mu (g) &=  \frac{1}{|\Gamma|} \sum_{\gamma \in \Gamma} \int_{\cup_{\kappa \in \Gamma} \kappa B_{\epsilon}}  \varphi (\gamma g)  d\mu (g)\\
  &=  \frac{1}{|\Gamma|} \sum_{\gamma \in \Gamma} \int_{\cup_{\kappa \in \gamma\Gamma}  \kappa B_{\epsilon}} \varphi (g)  d\mu (\gamma^{-1}g)\\
   &=  \frac{1}{|\Gamma|} \sum_{\gamma \in \Gamma} \int_{\cup_{\kappa \in \Gamma}  \kappa B_{\epsilon}} \varphi (g)  d\mu (g)\\
  &= \int_{\cup_{\kappa \in \Gamma} \kappa B_{\epsilon}} \varphi (g)  d\mu (g) \\
  &\geq \int_{ B_{\epsilon}} \varphi (g)  d\mu (g) 
\end{align}
hence
\begin{equation}
   1- \int_{\tilde{B}_{P, \epsilon}} \frac{1}{|\Gamma|} \sum_{\gamma \in \Gamma}   \varphi (\gamma g) d \mu (g) \leq  1 - \int_{ B_{\epsilon}} \varphi (g)  d\mu (g) \mathrm{.} 
\end{equation}

The bound in Lemma \ref{lem:intP} may seem crude, however, the more of a mass of $\varphi$ is concentrated in a ball $B_{\epsilon}$ the tighter it becomes. This corresponds e.g. to the heat kernel $H_S(g, \sigma)$ with decreasing $\sigma$ (see also Fig. \ref{fig:summation}).
\end{proof}
\begin{figure}[H]
  \centering
      \includegraphics[width=0.75\textwidth]{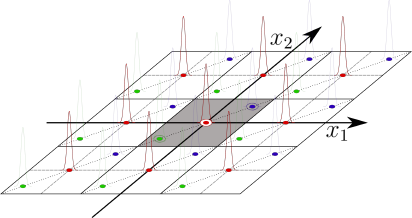}
  \caption{Illustration of the distribution of the components of a heat kernel $H_P$ on $\mathrm{PU}(3)$, obtained via the averaging map applied to a heat kernel $H_S$ in Poisson form. Actual shapes and relative sizes are not depicted. The averaging takes place over $\Gamma$, which consists of three roots of unity, denoted by red, green and blue points in the central square region. The elements of $\Gamma$ act by shifting by the roots of unity along the dotted grey lines, corresponding to a torus. The heat kernel $H_S$ corresponds to the red peaks. Each repeated square region corresponds to the contribution from a different $k$-vector in the Poisson form, which lies on the grey dashed grid. Notice that only the central square region ($k=0$) corresponds to points in a group. However, the tails of the peaks from non-central square regions ($k \neq 0$) overlap with the central square region, contributing to the heat kernel. A ball $\tilde{B}_{P, \epsilon}$ corresponds to a sum of three balls in a central region, denoted by dotted lines. A ball $B_{\epsilon}$ corresponds to the red ball at the origin, and the grey region corresponds to its complement. Lemma \ref{lem:intP} states that the integral of $H_P$ over $\tilde{B}_{P, \epsilon}$ can be upper bounded by the integral of $H_S$ (proportional to the red component) over the grey region. This is outlined by the opacity of the blue and green peaks. Lemma \ref{lem:tail-bound-smaller-than-dominant} shows that this integral can be bounded by bounding the contribution from the central ($k=0$) red peak, which is obtained in Lemma \ref{lem:I0_bound}.}
  \label{fig:summation}
\end{figure}

Applying Lemma \ref{lem:intP} with $\varphi = H_S$ and the Weyl integration formula (see e.g. Ref.~\cite{faraut_2008}) we can bound the integral of $H_P(g, \sigma)$ over $\tilde{B}^c_{P,\epsilon}$ as follows
\begin{align}
  \int_{ \tilde{B}_{P, \epsilon}^c}  H_P(g,\sigma) d \mu(g) &\leq \int_{B^c_{\epsilon}}  H_S(g,\sigma) d \mu(g) \\
  &=  \frac{C(d, \sigma)}{\abs{W}} \sum_{k\in\mathbb{Z}^{d-1}}\int_{\mathcal{H}^{d-1}_{\pi} \setminus \mathcal{H}^{d-1}_{\tilde{\epsilon}}} d\mu(\phi) j(\exp(X_\phi))^* \pi(X_\phi + X_k) \exp\left(-\frac{1}{4\sigma} \norm{X_\phi + X_k}^2\right),
\end{align}
where $d\mu(\phi)=\frac{d \phi_1 d \phi_2 \ldots d \phi_{d-1}}{(2\pi)^{d-1}}$ stems from the Haar measure on $T$, $W$ is the Weyl group
and we have cancelled the $j^{-1}$ with part of the $\abs{j}^2$ term in the Weyl integration formula. 

Using the triangle inequality, we obtain
\begin{align}
\label{eq:triangle}
 \int_{ \tilde{B}_{P, \epsilon}^c}  |H_P(g,\sigma)| d \mu(g)  \leq  \frac{C(d, \sigma)}{\abs{W}} \sum_{k\in\mathbb{Z}^{d-1}}\int_{\mathcal{H}^{d-1}_{\pi} \setminus \mathcal{H}^{d-1}_{\tilde{\epsilon}}} d\mu(\phi) |j(\exp(X_\phi)) \pi(X_\phi + X_k)| \exp\left(-\frac{1}{4\sigma} \norm{X_\phi + X_k}^2\right).
\end{align}

We seek to express the right-hand side of~\eqref{eq:triangle} in terms of the dominant term $\mathcal{I}_0$ ($k=0$) and some smaller correction $\mathcal{R}$ which we will bound in terms of $\mathcal{I}_0$
\begin{align}
\label{eq:triangle2}
  \int_{ \tilde{B}_{P, \epsilon}^c} | H_P(g,\sigma)| d \mu(g) \leq  \mathcal{I}_0 + \mathcal{R},
\end{align}
where
\begin{equation}
    \mathcal{I}_0 \coloneqq \frac{C(d, \sigma)}{\abs{W}} \int_{\mathcal{H}^{d-1}_{\pi} \setminus \mathcal{H}^{d-1}_{\tilde{\epsilon}}} d\mu(\phi) |j(\exp(X_\phi)) \pi(X_\phi)| \exp\left(-\frac{1}{4\sigma} \norm{X_\phi }^2\right),
    \label{eq:I0}
\end{equation}
and 
\begin{equation}
    \mathcal{R} \coloneqq \frac{C(d, \sigma)}{\abs{W}} \sum_{k\in\mathbb{Z}^{d-1}\setminus \{0\}}\int_{\mathcal{H}^{d-1}_{\pi} \setminus \mathcal{H}^{d-1}_{\tilde{\epsilon}}} d\mu(\phi) |j(\exp(X_\phi)) \pi(X_\phi + X_k)| \exp\left(-\frac{1}{4\sigma} \norm{X_\phi + X_k}^2\right).
      \label{eq:R}
\end{equation}

We provide the bounds on $\mathcal{I}_0$ and $\mathcal{R}$ via Lemma \ref{lem:I0_bound} proved in Appendix \ref{app:I0} and Lemma \ref{lem:tail-bound-smaller-than-dominant} in Appendix \ref{app:R} respectively. Our bounds apply for $\sigma$ small enough (for fixed $\epsilon$ and $d$).

By combining the trimming error bound with the vanishing properties of the full heat kernel $H_P$ (Lemmas~\ref{lem:trim-heat-kernel},~\ref{lem:I0_bound} and~\ref{lem:tail-bound-smaller-than-dominant}) we obtain a bound for the vanishing of the absolute value of the trimmed heat kernel  $H_P^{(t)}$, stated in Lemma  \ref{lem:combined-bounds}.

\begin{lem}\label{lem:combined-bounds}
Provided that

\begin{align}
    2t \geq \frac{d^2}{\sqrt{\sigma}}\sqrt{2\log\left(\frac{d^4}{\sigma}\right)}\label{eqn:t-bound}
\end{align}
and 
\begin{align}
    \sigma \leq \frac{\epsilon^2}{ 32 d \mathrm{log}(d)}\label{eqn:sigma-bound}
\end{align}
for any
\begin{align}
   \eta \geq \frac{1}{\prod_{k=1}^{d}k!}
\end{align}
we have
\begin{align}
    \int_{\tilde{B}_{P, \epsilon}^c} \abs{H_P^{(t)}(g, \sigma)} d\mu(g) &\leq  2^{\frac{d}{2}}\exp\left(-\sigma\frac{t^2}{d^2} - \frac{1}{2}\sigma t\right) +  \frac{1+\eta}{2}\exp\left(-\frac{d}{16 \sigma} \epsilon^2 + \frac{d^2-1}{24} \sigma \right) \mathrm{.}
\end{align}
\end{lem}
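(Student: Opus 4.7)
The plan is to apply the triangle inequality inside the integral, writing
\begin{align}
    \int_{\tilde{B}_{P,\epsilon}^c} \abs{H_P^{(t)}(g,\sigma)}\,d\mu(g) \leq \int_{\tilde{B}_{P,\epsilon}^c} \abs{H_P^{(t)}(g,\sigma) - H_P(g,\sigma)}\,d\mu(g) + \int_{\tilde{B}_{P,\epsilon}^c} \abs{H_P(g,\sigma)}\,d\mu(g),
\end{align}
so that the ``trimming error'' and the ``tail of the full heat kernel'' can be bounded independently and the resulting bounds identified with the two explicit exponentials appearing in the statement.

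For the first term I would promote the $L^2$ estimate from Lemma~\ref{lem:trim-heat-kernel} to an $L^1$ estimate by Cauchy--Schwarz: since the Haar measure on $\mathrm{SU}(d)$ is a probability measure,
\begin{align}
    \int_{\tilde{B}_{P,\epsilon}^c} \abs{H_P^{(t)} - H_P}\,d\mu \leq \norm{H_P^{(t)}(\cdot,\sigma) - H_P(\cdot,\sigma)}_2,
\end{align}
and the hypothesis~\eqref{eqn:t-bound} is precisely what is required for Lemma~\ref{lem:trim-heat-kernel} to yield the bound $2^{d/2}\exp(-\sigma t^2/d^2 - \sigma t/2)$. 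This step is a direct import of the already-proved trimming bound, with the only subtlety being the passage from the $L^2$ to the $L^1$ norm, which is free here because the domain has measure at most $1$.

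For the second term I would apply Lemma~\ref{lem:intP} with $\varphi = H_S(\cdot,\sigma)$, which is non-negative and Haar-normalised to $1$, reducing the projective tail integral to the $\mathrm{SU}(d)$ tail integral over $B_\epsilon^c$. I then pass to the maximal torus via the Weyl integration formula, producing the decomposition~\eqref{eq:triangle2} into the dominant $k=0$ contribution $\mathcal{I}_0$ from~\eqref{eq:I0} and the correction $\mathcal{R}$ from~\eqref{eq:R}. Lemma~\ref{lem:tail-bound-smaller-than-dominant} bounds $\mathcal{R}$ by $\eta\,\mathcal{I}_0$ for any $\eta \geq 1/\prod_{k=1}^{d} k!$, and Lemma~\ref{lem:I0_bound} bounds $\mathcal{I}_0$ by $\tfrac{1}{2}\exp\bigl(-d\epsilon^2/(16\sigma) + (d^2-1)\sigma/24\bigr)$; both lemmas require the smallness condition~\eqref{eqn:sigma-bound} on $\sigma$. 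Combining them gives $\mathcal{I}_0 + \mathcal{R} \leq (1+\eta)\mathcal{I}_0$, which reproduces the second term of the stated bound.

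The main obstacle is almost entirely bookkeeping rather than conceptual: one has to make sure that (i)~the hypotheses of the three previously stated lemmas are really satisfied under the joint assumptions~\eqref{eqn:t-bound} and~\eqref{eqn:sigma-bound}, and (ii)~the restriction $\sum_i \lambda_i = 0$ inherent to the projective heat kernel is correctly carried through the trimming in~\eqref{eq:trimmed_character}, so that the character $\chi_\lambda$'s being balanced polynomials of order $\norm{\lambda}_1/2 \leq t$ plays no role in the absolute-value estimate and we may genuinely reuse the full heat-kernel tail bounds after subtracting the trimming error. Once these are verified, the two contributions simply add to give the claimed inequality.
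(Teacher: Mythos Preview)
Your proposal is correct and follows essentially the same route as the paper: triangle inequality, then H{\"o}lder/Cauchy--Schwarz to promote the trimming $L^2$ bound of Lemma~\ref{lem:trim-heat-kernel} (with $\gamma=1/2$, which is exactly what~\eqref{eqn:t-bound} encodes) to an $L^1$ bound, and then Lemmas~\ref{lem:intP},~\ref{lem:I0_bound} and~\ref{lem:tail-bound-smaller-than-dominant} for the full-kernel tail. The only cosmetic imprecision is that Lemma~\ref{lem:tail-bound-smaller-than-dominant} bounds $\overline{\mathcal{R}}$ by $\eta\,\overline{\mathcal{I}_0}$ rather than $\mathcal{R}$ by $\eta\,\mathcal{I}_0$, but since $\mathcal{I}_0 + \mathcal{R} \leq \overline{\mathcal{I}_0} + \overline{\mathcal{R}} \leq (1+\eta)\overline{\mathcal{I}_0}$ this makes no difference to the conclusion.
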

\begin{proof}
   
    \begin{align}
      \int_{\tilde{B}_{P, \epsilon}^c} \abs{H_P^{(t)}(g, \sigma)} d\mu(g) &\leq \int_{\tilde{B}_{P, \epsilon}^c} |H_P^{(t)}(g, \sigma) - H_P(g, \sigma)| d \mu(g) + \int_{\tilde{B}_{P, \epsilon}^c} | H_P(g, \sigma)| d \mu(g)\\
        &= \int_{\mathrm{SU}(d)} |H_P^{(t)}(g, \sigma) - H_P(g, \sigma)| \chi_{\tilde{B}_{P, \epsilon}^c}(g)d \mu(g) + \int_{\tilde{B}_{P, \epsilon}^c} | H_P(g, \sigma)| d \mu(g), \label{eq:ho_splitted}
    \end{align}
    where $\chi_X$ denotes the indicator function of the set $X$. Applying H{\"o}lder's inequality to the first term of~\eqref{eq:ho_splitted} gives
    \begin{align}
         \int_{\tilde{B}_{P, \epsilon}^c} \abs{H_P^{(t)}(g, \sigma)} d\mu(g) &\leq \norm{H_P^{(t)}(g, \sigma) - H_P(g, \sigma)}_2  + \int_{\tilde{B}_{P, \epsilon}^c} | H_P(g, \sigma)| d \mu(g).
    \end{align}
    Finally, substituting the bounds from Lemmas~\ref{lem:trim-heat-kernel},~\ref{lem:I0_bound} and~\ref{lem:tail-bound-smaller-than-dominant} with $\gamma=1/2$ and applying $\epsilon \leq \tilde{\epsilon}$ gives the result. The condition~\eqref{eqn:sigma-bound} is the result of multiplying the bounds on $\sigma$ we require for each lemma; one could obtain a slightly improved, but more complicated, bound by taking the minimum rather than the product.
\end{proof}

We are now able to prove our first theorem
\begin{thm}
\label{th:theorem1}
    Let $\nu$ be an exact $t$-design in $\mathrm{PU}(d)$, $d \geq 2$, then $\operatorname{supp}(\nu{)}$ is an $\epsilon$-net provided
    \begin{align}
        t \geq 32\frac{d^{\frac{5}{2}}}{\epsilon}\log(d)\log\left(\frac{4}{a_v \epsilon}\right)\mathrm{,}\label{eqn:theorem-1-t-bound}
    \end{align}
    where $C =  9 \pi$.
\end{thm}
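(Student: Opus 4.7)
The plan is to combine the exact $t$-design hypothesis with the concentration of the trimmed heat kernel near identity, applied via translation to an arbitrary target point. Given $\mathbf{V}\in\mathrm{PU}(d)$ with a lift $V\in\mathrm{SU}(d)$, define $f_V(\mathbf{g})\coloneqq H_P^{(t)}(gV^{-1},\sigma)$. The truncation $\norm{\lambda}_1\leq 2t$ together with the projective constraint $\sum_i\lambda_i=0$ place $f_V$ in $\mathcal{H}_t$ (descent from $\mathrm{SU}(d)$ to $\mathrm{PU}(d)$ follows from the averaging of Section~\ref{sec:connecting-pun-and-sun}), and $f_V$ is real-valued because the involution $\lambda\mapsto\lambda^*$ preserves $d_\lambda$, $k_\lambda$ and the trimming condition. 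Character orthogonality yields $\int H_P^{(t)}\,d\mu=1$ since only $\lambda=0$ survives; together with the translation invariance of $\mu$ and the exact $t$-design identity~\eqref{eq:tde} this gives $\int f_V\,d\nu=\int f_V\,d\mu=1$.

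Suppose towards contradiction that $\mathrm{supp}(\nu)\cap B_{P,\epsilon}(\mathbf{V})=\emptyset$. Then for every $\mathbf{g}\in\mathrm{supp}(\nu)$ and every lift $g\in\pi^{-1}(\mathbf{g})$ we have $gV^{-1}\in\tilde{B}_{P,\epsilon}^c$, so
\begin{equation}
1=\int f_V\,d\nu\ \leq\ \sup_{h\in \tilde{B}_{P,\epsilon}^c}H_P^{(t)}(h,\sigma)\ \leq\ \sup_{h\in \tilde{B}_{P,\epsilon}^c}|H_P^{(t)}(h,\sigma)|.
\end{equation}
Hence the theorem reduces to establishing the pointwise upper bound $\sup_{h\in\tilde{B}_{P,\epsilon}^c}|H_P^{(t)}(h,\sigma)|<1$ for the prescribed $\sigma$ and $t$.

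To obtain this pointwise bound I would use the triangle-inequality decomposition
\begin{equation}
|H_P^{(t)}(h,\sigma)|\leq H_P(h,\sigma)+\sum_{\norm{\lambda}_1>2t}d_\lambda^2\,e^{-\sigma k_\lambda},
\end{equation}
applying $H_P\geq 0$ to the first term and $|\chi_\lambda(h)|\leq d_\lambda=\chi_\lambda(I)$ to the tail. The tail is bounded by the same Gaussian-summation technique as in Lemma~\ref{lem:trim-heat-kernel}, using the dimension bound~\eqref{eqn:irrep-dimension-bound} and the Casimir lower bound~\eqref{eqn:irrep-casimir-bound}. The first term is handled via the Poisson form~\eqref{eqn:urakawa-H}: the factor $\exp(-\norm{X_\phi+X_k+\log\gamma}^2/4\sigma)$ enforces $\exp(-\tilde{\epsilon}^2/c\sigma)$ decay outside the $\epsilon$-ball, which dominates the prefactor $C(d,\sigma)$ that scales like $\sigma^{-(d^2-1)/2}$. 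With $\sigma\sim\epsilon^2/(d\log d)$ (matching the constraint of Lemma~\ref{lem:combined-bounds}) and $t$ as in the theorem, both pieces fall below $\tfrac{1}{2}$. The main obstacle is controlling the Poisson-form prefactor $j(\exp X_\phi)^{-1}$, a product of sines that diverges on the Weyl-chamber walls; I would handle this via elementary estimates such as $|\sin(x)|\geq|x|/\pi$ on $[-\pi,\pi]$, reducing the supremum to a polynomial bound over the fundamental domain of the maximal torus. Balancing $\tilde{\epsilon}^2/\sigma$ against $\log C(d,\sigma)\sim(d^2-1)\log(1/\sigma)$ then produces the logarithmic factor $\log(4/(a_v\epsilon))$ in the theorem, where $a_v$ is proportional to the Haar volume of small balls in $\mathrm{PU}(d)$.
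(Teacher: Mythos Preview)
Your reduction is valid but your route diverges substantially from the paper's. The paper does \emph{not} pursue a pointwise supremum bound on $H_P^{(t)}$ outside the ball. Instead it invokes Lemmas~1--2 of Ref.~\cite{9614165}: if $\nu$ fails to be an $\epsilon$-net then there is a $\mathbf{V_0}$ with
\[
\mathrm{Vol}(B_{P,\kappa})\ \leq\ \int_{\tilde{B}^c_{P,\epsilon-\kappa}} \abs{H_P^{(t)}(g,\sigma)}\,d\mu(g),
\]
and then contrasts Lemma~\ref{lem:combined-bounds} (an \emph{integral} bound) against the ball-volume lower bound $\mathrm{Vol}(B_{P,\kappa})\geq (a_v\kappa)^{d^2-1}$ with $a_v=1/(9\pi)$, taking $\kappa=\epsilon/2$. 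The constant $a_v$ in the theorem is literally this ball-volume constant; in your scheme it has no natural origin, so the specific bound~\eqref{eqn:theorem-1-t-bound} would not fall out as stated. The paper's integral route also has a structural advantage: after the Weyl integration formula the Jacobian $|j|^2$ partially cancels the Poisson prefactor $j^{-1}$, leaving only $j^*\pi(X_\phi+X_k)$ under the integral, which is bounded termwise.

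Your proposal has a genuine gap at exactly the point you flag as the ``main obstacle.'' The estimate $|\sin x|\gtrsim |x|$ gives $|j(\exp X_\phi)^{-1}|\lesssim\prod_{\alpha>0}|\alpha(X_\phi)|^{-1}$, which \emph{diverges} on the Weyl walls; this blow-up is compensated by $\pi(X_\phi)$ only for the $k=0$ Poisson term. For $k\neq 0$ the individual products $j^{-1}\pi(X_\phi+X_k)$ are unbounded on the complement of the $\epsilon$-ball (which contains points arbitrarily close to walls, e.g.\ eigenphase vectors with repeated entries such as $(\tilde{\epsilon},-\tilde{\epsilon}/(d-1),\ldots,-\tilde{\epsilon}/(d-1))$). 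The finiteness of $H_P$ at such points comes from Weyl-antisymmetry cancellations in the full $k$-sum, not from termwise bounds, and your outline gives no mechanism to exploit that cancellation. Without it, you cannot obtain the pointwise bound $\sup_{\tilde{B}^c_{P,\epsilon}}H_P(\cdot,\sigma)<\tfrac12$ from the Poisson form, and the argument stalls.
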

\begin{proof}
    We proceed via a proof by contradiction. Assume $\operatorname{supp}(\nu)$ is not an $\epsilon$-net, then according to Ref.~\cite{9614165}, Lemmas~1 and~2, we know there exists a $\mathbf{V_0}\in\mathrm{PU}(d)$ such that for any $\kappa\leq \epsilon$
    \begin{equation}
     \mathrm{Vol}(B_{P,\kappa}(\mathbf{V_0}))  \leq \max_{V \in  \tilde{B}^c_{P,\epsilon} } \int_{\tilde{B}_{P,\kappa}(V)} H_P^{(t)}(g, \sigma)d\mu(g).
\end{equation}

Note that $B_{P,\kappa}(\mathbf{V}) \subset B^c_{P,\epsilon-\kappa}$, hence also $\tilde{B}_{P,\kappa}(V) \subset \tilde{B}^c_{P,\epsilon-\kappa}$, so
    \begin{align}
        \max_{V \in  \tilde{B}^c_{P,\epsilon}} \int_{B_{\kappa}(V)} H_P^{(t)}(g, \sigma) d\mu(g) &\leq  \max_{V \in  \tilde{B}^c_{P,\epsilon}} \int_{\tilde{B}_{P,\kappa}(V)}\abs{ H_P^{(t)}(g, \sigma) }d\mu(g) \\&\leq\int_{\tilde{B}^c_{P,\epsilon-\kappa}} \abs{H_P^{(t)}(g, \sigma)} d\mu(g) \mathrm{.}
\end{align}

The Haar ($\mu_P$) volume of $\kappa$-ball (described in metric $d_P(\cdot, \cdot)$ ) in $\mathrm{PU}(d)$ can be bounded from below as follows:
\begin{equation}
    \mathrm{Vol}(B_{P,\kappa}) \geq \left( a_v \kappa \right)^{d^2-1},
    \label{eq:lbv}
\end{equation}
where $a_v = \frac{1}{9 \pi}$ (see Ref.~\cite{9614165}). Such a volume does not depend on the centre of the ball, due to the translation-invariance of the Haar measure and the metric $d_P(\cdot, \cdot)$. We take $\kappa=\frac{\epsilon}{2}$ and therefore have a contradiction if

\begin{equation}
   \int_{\tilde{B}^c_{P,\epsilon/2}} \abs{H^{(t)}(g, \sigma)} d\mu(g) < \left(\frac{1}{2} a_v \epsilon \right)^{d^2-1} \mathrm{.}
\end{equation}

Hence, under the assumptions of Lemma~\ref{lem:combined-bounds} (with $\epsilon/2$ instead of $\epsilon$), with the choice of $\eta=1$, in order to get a contradiction, we can demand for example
\begin{align}
\exp\left(-\frac{d}{64 \sigma} \epsilon^2 + \frac{d^2-1}{24} \sigma \right)&<\frac{1}{2} \left(\frac{1}{2} a_v \epsilon \right)^{d^2-1}\label{eqn:contradiction-ineq-1}\\
   2^{\frac{d}{2}}\exp\left(-\sigma\frac{t^2}{d^2} - \frac{1}{2}\sigma t\right) &< \frac{1}{2}\left(\frac{1}{2}  a_v \epsilon \right)^{d^2-1}\label{eqn:contradiction-ineq-2}.
\end{align}
The inequality~\eqref{eqn:contradiction-ineq-1} constrains $\sigma$ as a function of $\epsilon$ and $d$ and is satisfied whenever
\begin{align}
\sigma < \sigma_{*} = \frac{ \epsilon^2 }{128 d\log(d)\log\left(\frac{2}{a_v \epsilon}\right)},
\end{align}
which may be seen by taking logarithms of both sides of~\eqref{eqn:contradiction-ineq-1} and bounding the term containing $(d^2-1)\frac{\sigma}{24}$ using assumption~\eqref{eqn:sigma-bound}. We can now bound the sufficient $t$, assuming $\sigma = \sigma^*$, using~\eqref{eqn:contradiction-ineq-2}.  Simply taking logarithms of~\eqref{eqn:contradiction-ineq-2} and substituting in $\sigma=\sigma^*$ we obtain
\begin{align}
    t^2 \geq 128\frac{d^5}{\epsilon^2}\log(d)\log^2\left(\frac{4}{a_v \epsilon}\right) \mathrm{,}
\end{align}
however we additionally need $t$ to satisfy the assumption of Lemma \ref{lem:combined-bounds}, so we obtain a final scaling
\begin{align}
    t^2 \geq 1024\frac{d^5}{\epsilon^2}\log^2(d)\log^2\left(\frac{4}{a_v \epsilon}\right) \mathrm{,}
\end{align}
\end{proof}

\section{Bounds for \texorpdfstring{$\delta$}{delta}-approximate \texorpdfstring{$t$}{t}-designs}
\label{sec:delta-t-design-eps-net-bound}

In order to derive the version of Theorem \ref{th:theorem1} for $\delta$-approximate $t$-designs, we bound the $L^2$-norm of the heat kernel.

Since we want to apply the results from the previous sections, we use the Poisson form of the heat kernel, which allows us to group the terms as follows

\begin{align}
\label{eq:l2_split}
     \norm{ H_S(\cdot,\sigma)}_2^2 &=  \frac{C(d, \sigma)^2}{|W|} \int \sum_{k\in\mathbb{Z}^{d-1}} \sum_{l\in\mathbb{Z}^{d-1}} \pi(X_\phi + X_k ) \pi^*(X_\phi + X_l ) e^{-\frac{1}{4\sigma} \left( \norm{X_\phi + X_k)}^2+\norm{X_\phi + X_l)}^2\right)} d\mu(\phi)  \\
     &\leq  \frac{C(d, \sigma)^2}{|W|} \int \sum_{k\in\mathbb{Z}^{d-1}} \sum_{l\in\mathbb{Z}^{d-1}} \abs{ \pi(X_\phi + X_k ) \pi^*(X_\phi + X_l )} e^{-\frac{1}{4\sigma} \left( \norm{X_\phi + X_k)}^2+\norm{X_\phi + X_l)}^2\right)} d\mu(\phi)  \\
     &= \mathcal{I}_{0,0}^2 + \mathcal{R}_{*,0}^2 + \mathcal{R}_{0,*}^2 + \mathcal{R}_{*,*}^2 \mathrm{,}
\end{align}
 where $\mathcal{I}_{0,0}^2$ is the $k=0$ and $l=0$ term, $\mathcal{R}_{*,0}^2$ is the sum of the terms with $k \neq 0$ and $l=0$, $\mathcal{R}_{0,*}^2$ is the sum of the terms with $k = 0$ and $l \neq 0$ and $\mathcal{R}_{*,*}^2$ is the sum of the terms with $k \neq 0$ and $l \neq 0$. We bound the contributions from $\mathcal{I}_{0,0}^2$, $\mathcal{R}_{*,0}^2$ and $\mathcal{R}_{*,*}^2$ separately in Appendix \ref{app:L2}. The joint bound for $ \norm{ H_P^{(t)}(\cdot,\sigma)}_2$ is provided in Lemma \ref{lem:L2_bound} from Appendix \ref{app:L2}.
 
We now have all the prerequisites to prove our main theorem, which is a generalisation of Theorem \ref{th:theorem1} to $\delta$ approximate $t$-designs.
\begin{thm}
\label{th:theorem2}
    Let $\nu$ be a $\delta$-approximate $t$-design in $\mathrm{PU}(d)$, $d \geq 2$, with
\begin{equation}
      \delta \leq \left(\frac{1}{4C \mathrm{log}^{1/4}\left(\frac{2 C}{  \epsilon}\right) \mathrm{log}^{1/4}(d) }\frac{\epsilon}{d^{1/2}}\right)^{d^2-1}
\end{equation}
where
\begin{equation}
    C = 9\pi \mathrm{,}
\end{equation}
    then $\operatorname{supp}(\nu)$ is an $\epsilon$-net provided
    \begin{align}
         t \geq 32\frac{d^{\frac{5}{2}}}{\epsilon}\log(d)\log\left(\frac{4}{a_v \epsilon}\right)\mathrm{.}
    \end{align}
\end{thm}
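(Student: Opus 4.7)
The strategy is to adapt the contradiction argument of Theorem~\ref{th:theorem1} to accommodate the $\delta$-deviation from an exact design. Assuming $\nu$ is not an $\epsilon$-net, invoke Lemmas~1 and~2 of Ref.~\cite{9614165} to obtain $\mathbf{V_0}\in\mathrm{PU}(d)$ with $B_{P,\kappa}(\mathbf{V_0}) \cap \mathrm{supp}(\nu) = \emptyset$ for every $\kappa \le \epsilon$; I will take $\kappa = \epsilon/2$.

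The key test object is the balanced polynomial of order $t$ obtained by convolving the indicator of the ball with the trimmed heat kernel,
\[
    F(g) \;\coloneqq\; \int_{\tilde{B}_{P,\kappa}(V_0)} H_P^{(t)}(h^{-1}g,\sigma)\,d\mu(h).
\]
Interchanging the integrals and using that only the trivial representation survives Haar-averaging against $H_P^{(t)}$ gives $\int F\,d\mu = \mathrm{Vol}(B_{P,\kappa})$. On the other hand, since $h \in \tilde{B}_{P,\kappa}(V_0)$ and $U\in\mathrm{supp}(\nu) \subseteq \tilde{B}^c_{P,\epsilon}(V_0)$ force $h^{-1}U \in \tilde{B}^c_{P,\epsilon-\kappa}$, one obtains $\int F\,d\nu \le \int_{\tilde{B}^c_{P,\epsilon-\kappa}} |H_P^{(t)}(g,\sigma)|\,d\mu(g)$. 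Writing $F(g) = \mathrm{Tr}(A_F\,g^{t,t})$ and applying the $\delta$-design property via matrix H{\"o}lder, $|\int F\,d\mu - \int F\,d\nu|\le\delta\,\|A_F\|_1$, yields the key inequality (to be cited as \texttt{eqn:delta-bound-with-kappa} in Application~\ref{appl:1})
\[
    \mathrm{Vol}(B_{P,\kappa}) \;\le\; \int_{\tilde{B}^c_{P,\epsilon-\kappa}} |H_P^{(t)}(g,\sigma)|\,d\mu(g) \;+\; \delta\,\|A_F\|_1,
\]
where $\|A_F\|_1$ is controlled, via Schur orthogonality and Young's convolution inequality applied to $F = \chi_{\tilde{B}_{P,\kappa}(V_0)} * H_P^{(t)}$, by $\mathrm{Vol}(B_{P,\kappa})$ together with $\|H_P^{(t)}(\cdot,\sigma)\|_2$.

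The first error term is bounded using Lemma~\ref{lem:combined-bounds}, and demanding it to lie below $\tfrac12(a_v\epsilon/2)^{d^2-1}$ reproduces exactly the scaling of $t$ and the choice $\sigma \simeq \epsilon^2/(d\log d\,\log(1/\epsilon))$ from Theorem~\ref{th:theorem1}. The second error term, after substituting the bound of Lemma~\ref{lem:L2_bound} (which gives $\|H_P^{(t)}\|_2 \simeq \sigma^{-(d^2-1)/4}$ up to logarithmic factors) and the volume lower bound $\mathrm{Vol}(B_{P,\kappa}) \ge (a_v\epsilon/2)^{d^2-1}$, rearranges into an upper bound of the form $\delta \lesssim (\epsilon/\sqrt d)^{d^2-1}$, matching the statement.

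The main obstacle is keeping the bound on $\|H_P^{(t)}(\cdot,\sigma)\|_2$ tight enough to yield the $\sqrt d$ (rather than $d$) improvement over Ref.~\cite{9614165}. The separate treatment of the four terms $\mathcal{I}_{0,0}^2,\,\mathcal{R}_{*,0}^2,\,\mathcal{R}_{0,*}^2,\,\mathcal{R}_{*,*}^2$ arising from the Poisson form of the heat kernel is precisely what enables this saving: any looser estimate on the $\mathcal{R}$-type correction terms would degrade the exponent of $\delta$ by an extra factor of order $\sqrt d$. A secondary technical point is bookkeeping of the logarithmic factors so that the same asymptotic bound on $t$ as in Theorem~\ref{th:theorem1} is preserved when balancing the two error terms against the volume lower bound.
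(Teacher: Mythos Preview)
Your strategy is essentially the same as the paper's: contradiction, trimmed heat kernel as test polynomial, Lemma~\ref{lem:combined-bounds} for the concentration term, and the $L^2$-norm estimate of Lemma~\ref{lem:L2_bound} for the $\delta$-deviation term, with the same choice $\sigma=\sigma^*$ and $\kappa=\epsilon/2$.

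Two points where your write-up diverges from the paper and should be tightened. First, the paper invokes Lemmas~2 \emph{and 3} of Ref.~\cite{9614165} (not Lemmas~1 and~2); Lemma~3 is the one that produces the extra term in the $\delta$-approximate case, and it gives it in the specific form
\[
\delta\,\sqrt{\mathrm{Vol}(B_{P,\kappa})}\,\|H_P^{(t)}(\cdot,\sigma)\|_2,
\]
not $\delta\,\|A_F\|_1$ with $\|A_F\|_1$ bounded by ``$\mathrm{Vol}$ together with $\|H_P^{(t)}\|_2$''. The square root on the volume is what makes the third contradiction inequality read $\delta\,d\,\mathcal I_{0,0}<\tfrac14(a_v\epsilon/2)^{(d^2-1)/2}$, and after substituting $\sigma=\sigma^*$ this is exactly what yields $\delta\lesssim(\epsilon/\sqrt d)^{d^2-1}$. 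If your Young/H\"older route instead gave $\|A_F\|_1\le \mathrm{Vol}\cdot\|H_P^{(t)}\|_2$, the resulting $\delta$-bound would have the wrong $\epsilon$-exponent, so you should verify that your argument actually recovers the $\sqrt{\mathrm{Vol}}$ factor (or simply cite Lemma~3). Second, the label \texttt{eqn:delta-bound-with-kappa} in the paper refers to the \emph{final} bound on $\delta$ involving the function $\kappa(d)$, not to the volume-versus-integral inequality you describe; your parenthetical cross-reference is therefore misplaced.
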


\begin{proof}
    We proceed as in the proof of Theorem \ref{th:theorem1}. Assume $\operatorname{supp}(\nu)$ is not an $\epsilon$-net, then according to Ref.~\cite{9614165}, Lemma~2 and 3, we know there exists a $\mathbf{V_0}\in\mathrm{PU}(d)$ such that for any $\kappa\leq \epsilon$
    \begin{align}
     \mathrm{Vol}(B_{P,\kappa}(\mathbf{V_0})) &\leq \int_{\tilde{B}^c_{P,\epsilon-\kappa}} \abs{H_P^{(t)}(g, \sigma)} d\mu(g) + \delta \sqrt{\mathrm{Vol}(B_{P,\kappa}(\mathbf{V_0}))} ||H_P^{(t)}(\cdot, \sigma)||_2 \\
     &\leq  2^{\frac{d}{2}}\exp\left(-\sigma\frac{t^2}{d^2} - \frac{1}{2}\sigma t\right) + \frac{1+ \left( 1+ \delta \sqrt{\mathrm{Vol}(B_{P,\epsilon/2}}) \frac{d\sqrt{d!}}{2^{m-1}}\right) \frac{1}{\prod_k k!} }{2}  \exp\left(-\frac{d}{64  \sigma} \epsilon^2 + \frac{d^2-1}{24} \sigma \right)\\ &+ \delta \sqrt{\mathrm{Vol}(B_{P,\epsilon/2}}) d \mathcal{I}_{0,0} \\
     &\leq  2^{\frac{d}{2}}\exp\left(-\sigma\frac{t^2}{d^2} - \frac{1}{2}\sigma t\right) + \exp\left(-\frac{d}{64  \sigma} \epsilon^2 + \frac{d^2-1}{24} \sigma \right) + \delta \sqrt{\mathrm{Vol}(B_{P,\epsilon/2}}) d \mathcal{I}_{0,0}  
\end{align}
where we put $\kappa=\epsilon/2$ and used Lemmas \ref{lem:combined-bounds} and \ref{lem:L2_bound}  with $\eta =\frac{1}{\prod_{k=1}^d k!} $. Moreover, we assumed $\delta$ is not too large, so that
\begin{equation}
    \left( 1+ \delta \sqrt{\mathrm{Vol}(B_{P,\epsilon/2}}) \frac{d\sqrt{d!} }{2^{m-1}}\right) \frac{1}{\prod_{k=1}^d k!}\leq 1 \mathrm{,}
\end{equation}
e.g.
\begin{equation}
    \label{eq:delta_l2_cond}
    \delta \leq\frac{1}{2\sqrt{2}} \leq \frac{2^{m-1}}{d \sqrt{d!} \sqrt{\mathrm{Vol}(B_{P,\epsilon/2})}}  \mathrm{.}
\end{equation}
We take $\kappa=\frac{\epsilon}{2}$ and therefore have a contradiction if, under the assumptions of Lemma~\ref{lem:combined-bounds} we have three inequalities
\begin{align}
\exp\left(-\frac{d}{64 \sigma} \epsilon^2 + \frac{d^2-1}{24} \sigma \right)&<\frac{1}{2} \left(\frac{1}{2} a_v \epsilon \right)^{d^2-1}\label{eqn:contradiction-delta-ineq-1}\\
     2^{\frac{d}{2}}\exp\left(-\sigma\frac{t^2}{d^2} - \frac{1}{2}\sigma t\right) &< \frac{1}{4}\left(\frac{1}{2}  a_v \epsilon \right)^{d^2-1}\label{eqn:contradiction-delta-ineq-2}\\
  \delta d  \mathcal{I}_{0,0} &< \frac{1}{4}\left(\frac{1}{2}  a_v \epsilon \right)^{\frac{d^2-1}{2}}\label{eqn:contradiction-delta-ineq-3}.
\end{align}
Inequality~\eqref{eqn:contradiction-delta-ineq-1} is the same same as~\eqref{eqn:contradiction-ineq-1} from the proof of Theorem \ref{th:theorem1}, hence it is satisfied for the same $\sigma = \sigma^{*}$. Inequality~\eqref{eqn:contradiction-delta-ineq-2} differs from~\eqref{eqn:contradiction-ineq-2} by the factor of $1/4$ instead of $1/2$, one may check that this inequality is still satisfied as long as $t$ satisfies the bound in equation~\eqref{eqn:theorem-1-t-bound}.

It remains to ensure that~\eqref{eqn:contradiction-delta-ineq-3} is satisfied.
Using Lemmas \ref{lem:I^2}, \ref{lem:C_value} and \ref{lem:Theta}, then taking the logarithms of both sides of~\eqref{eqn:contradiction-delta-ineq-3} and bounding the terms that do not depend on $\epsilon$ or $\sigma$ by the $\Theta(d^2 \mathrm{log}(d))$ term, we obtain

\begin{equation}
    \mathrm{log}\left(\frac{1}{\delta}\right) \geq \left(\frac{d^2-1}{4} \right) \mathrm{log}\left(\frac{1}{\sigma}\right) +  \frac{d^2-1}{24} \sigma + \left(\frac{3d^2}{16} + 4\right) \mathrm{log}(d) + \frac{d^2-1}{2}\mathrm{log}\left(\frac{2}{a_v \epsilon}\right) \mathrm{.}
\end{equation}
Plugging $\sigma = \sigma^{*}$ leads to

\begin{equation}
      \delta \leq \left(\frac{ a_v}{ 2^{9/2}} \right)^{\frac{d^2-1}{2}} \left(\frac{\epsilon}{\mathrm{log}^{1/4}\left(\frac{2}{a_v \epsilon}\right) \mathrm{log}^{1/4}(d)}\right)^{d^2-1} \frac{\mathrm{exp}\left(-\frac{(d^2-1)\epsilon^2}{3072 d  \mathrm{log}(d) \mathrm{log}\left(\frac{2}{a_v \epsilon}\right)}\right)}{d^{\frac{7}{16}d^2+\frac{15}{4}}}\label{eqn:delta-bound-with-exponential} \mathrm{.}
\end{equation}
One may check that~\eqref{eqn:delta-bound-with-exponential} is stronger than~\eqref{eq:delta_l2_cond}. Moreover, since $\epsilon \leq 2$, we can lower bound the exponential term as

\begin{equation}
    \mathrm{exp}\left(-\frac{(d^2-1)\epsilon^2}{3072 d  \mathrm{log}(d) \mathrm{log}\left(\frac{2}{a_v \epsilon}\right)}\right) \geq \mathrm{exp}\left(-\frac{d}{768 \mathrm{log}(d) \mathrm{log}\left(\frac{1}{a_v }\right)}\right) = d^{-\frac{d}{768 \mathrm{log}^2(d) \mathrm{log}\left(\frac{1}{a_v } \right)}} \mathrm{.}
\end{equation}
Hence,

\begin{equation}
      \delta \leq \left(\frac{ a_v}{ 2^{9/2}} \right)^{\frac{d^2-1}{2}}\left(\frac{\epsilon}{\mathrm{log}^{1/4}\left(\frac{2}{a_v \epsilon}\right) \mathrm{log}^{1/4}(d) d^{1/2}}\right)^{d^2-1} \kappa(d) \label{eqn:delta-bound-with-kappa} \mathrm{,}
\end{equation}
where
\begin{equation}
    \kappa(d) \coloneqq d^{\frac{d^2}{16}-\frac{17}{4}-\frac{d}{768 \mathrm{log}^2(d) \mathrm{log}\left(\frac{1}{a_v} \right)}} \mathrm{.}
\end{equation}
\end{proof}
We now observe that the function $\kappa(d)^{\frac{1}{d^2-1}}$ is increasing for $d\geq 2$, which may be demonstrated by computing the derivative. We may, therefore, lower bound it by bounding the value at $d=2$. For example,
\begin{align}
    \kappa(d)^{\frac{1}{d^2-1}} \geq 2^{-\frac{3}{2}}. 
\end{align}
Combining this bound with the above reasoning, we obtain
\begin{align}
    \delta \leq \left(\frac{1}{16\sqrt{9\pi} \mathrm{log}^{1/4}\left(\frac{2}{a_v \epsilon}\right) \mathrm{log}^{1/4}(d) }\frac{\epsilon}{d^{1/2}}\right)^{d^2-1} \mathrm{,}
\end{align}
which may easily be seen to imply the bound shown in the Theorem statement.
\begin{rem}
\label{rem:main_result}
The bound on $\delta$ provided in Theorem~\ref{th:theorem2} is significantly looser than e.g.~\eqref{eqn:delta-bound-with-exponential} or~\eqref{eqn:delta-bound-with-kappa}. In the provided form Theorem~\ref{th:theorem2} is appropriate for comparison with the results of Ref.~\cite{9614165}, however in applications we expect one of the more precise bounds to be more appropriate.
\end{rem}

\FloatBarrier
\section{Trimmed heat kernel as a polynomial approximation of Dirac delta}
\label{sec:approx-delta}

In this section, we summarise various properties of our construction of the polynomial approximation of the Dirac delta. These properties are either obvious or were addressed in previous sections. The only missing property was the behaviour of the $L^1$-norm of the trimmed heat kernel, which also bounds its negativity. 

\begin{thm}
 \label{th:theorem3}
     The trimmed heat kernel $H_P^{(t)}(\cdot, \sigma)$ for $\mathbf{U}(d)$ with $d \geq 2$ has the following properties:
     \begin{enumerate}
       \item $H_P^{(t)}(\cdot, \sigma) \in \mathcal{H}_t$.
        \item $H_P^{(t)}(\cdot, \sigma)$ is Haar-normalised to 1 and also approximately non-negative for $t$ large enough (see point \ref{p:L1}).
        
        \item Controllable vanishing outside the ball of radius $\epsilon$ as $\sigma \to 0$ 
        \begin{align}
    \int_{\tilde{B}_{P, \epsilon}^c} \abs{H_P^{(t)}(g, \sigma)} d\mu(g) &\leq  2^{\frac{d}{2}}\exp\left(-\sigma\frac{t^2}{d^2} - \frac{1}{2}\sigma t\right) +  \frac{1+\eta}{2}\exp\left(-\frac{d}{16 \sigma} \epsilon^2 + \frac{d^2-1}{24} \sigma \right)
\end{align}
        for \begin{align}
    t \geq t_{*} \coloneqq  \frac{d^2}{2 \sqrt{\sigma}}\sqrt{2\log\left(\frac{d^4}{\sigma}\right)}
\end{align}
and 
\begin{align}
    \sigma \leq \frac{\epsilon^2}{ 32 d \mathrm{log}(d)}
\end{align} with $\eta \geq \frac{1}{\prod_{k=1}^{d}k!} $. Hence, for any $\epsilon > 0$, 
        \begin{equation}
           \lim_{\sigma \to 0} \int_{ \tilde{B}_{P, \epsilon}^c}  |H_P^{(t_*)}(g,\sigma)| d \mu(g) =0 \mathrm{.}
        \end{equation}
        \item Controllable blow-up of the $L^2$-norm
        \begin{equation}
            ||H_P^{(t)}(\cdot, \sigma)||_2 \leq c \left(\frac{d}{\sigma}\right)^{\frac{d^2-1}{4}} 
        \end{equation}
         for $\sigma \leq \frac{1}{d \mathrm{log}(d)}$ and some positive group constant $c$. One can take $c = 8$ for $d\geq 2$ and  $c=1$ for $d\geq 12$.
        \item \label{p:L1}  Bounded $L^1$-norm
        \begin{equation}
         ||H_P^{(t)}(\cdot, \sigma)||_1 \leq 1 +  2^{\frac{d}{2}}\exp\left(-\sigma\frac{t^2}{d^2} - \frac{1}{2}\sigma t\right)
        \end{equation}
         for $t \geq t_{*} \mathrm{.}$ Hence, for fixed $d$
        \begin{equation}
             \lim_{\sigma \to 0} ||H_P^{(t_*)}(\cdot, \sigma)||_1 =1 \mathrm{.}
        \end{equation}
     \end{enumerate}
 \end{thm}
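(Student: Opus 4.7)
The plan is to dispatch the five properties in order, noting that most follow essentially for free from earlier lemmas and that only the $L^1$ bound requires a genuinely new (but very short) argument.

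Property 1 is immediate from the definition in equation~\eqref{eq:trimmed_character}: each character $\chi_\lambda$ appearing in the truncated sum is a balanced polynomial of order $\|\lambda\|_1/2 \leq t$, so any linear combination of them lies in $\mathcal{H}_t$. Property 2 follows from the orthogonality of irreducible characters with respect to the Haar measure: $\int_{\mathrm{PU}(d)}\chi_\lambda\,d\mu = \delta_{\lambda, 0}$, and the trivial representation $\lambda = 0$ (with $d_0 = 1$ and $k_0 = 0$) satisfies $\|0\|_1 = 0 \leq 2t$ trivially, so it is retained in the truncation and contributes $1$. Property 3 is exactly Lemma~\ref{lem:combined-bounds}; taking $\sigma \to 0$ while scaling $t = t_*(\sigma)$ as indicated makes both exponentials vanish. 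Property 4 is exactly Lemma~\ref{lem:L2_bound}, whose constants give the stated $c$.

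The only substantive work is Property 5. The plan is a two-line argument based on two standard facts together with Lemma~\ref{lem:trim-heat-kernel}. First, the full heat kernel $H_P(\cdot, \sigma)$ is pointwise non-negative (as is standard for heat kernels on compact Riemannian manifolds, and is inherited by $H_P$ from $H_S$ via the averaging in~\eqref{eqn:shifted-heat-kernels}) and Haar-normalised to $1$ by the same character-orthogonality argument as in Property 2, so $\|H_P(\cdot, \sigma)\|_1 = 1$. Second, since $\mu_P$ is a probability measure, Cauchy--Schwarz gives $\|f\|_1 \leq \|f\|_2$ for every $f \in L^2(\mathrm{PU}(d))$. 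Combining these with the triangle inequality yields
\begin{align}
\|H_P^{(t)}(\cdot, \sigma)\|_1 &\leq \|H_P(\cdot, \sigma)\|_1 + \|H_P(\cdot, \sigma) - H_P^{(t)}(\cdot, \sigma)\|_1 \\
&\leq 1 + \|H_P(\cdot, \sigma) - H_P^{(t)}(\cdot, \sigma)\|_2.
\end{align}
Plugging in the trimming error bound from Lemma~\ref{lem:trim-heat-kernel}, valid under the hypothesis $t \geq t_*$, produces the stated inequality. The limit as $\sigma \to 0$ with $t = t_*(\sigma)$ then follows because $\sigma t_*^2 / d^2 \to \infty$.

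The only place any care is needed is in justifying the pointwise non-negativity of $H_P$: the character expansion~\eqref{eqn:heat-kernel-character-formula-repeated} does not make this manifest (individual terms may be negative), but non-negativity is a classical consequence of the heat semigroup being a Markov semigroup on a compact Riemannian manifold, and it transfers from $\mathrm{SU}(d)$ to $\mathrm{PU}(d)$ because the averaging map~\eqref{eqn:shifted-heat-kernels} preserves non-negativity. Everything else is bookkeeping with previously established bounds.
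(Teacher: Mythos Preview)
Your proof is correct and follows essentially the same route as the paper's own argument: Points 1--3 are dispatched identically, and Point 5 is handled by the same triangle-inequality-plus-$\|f\|_1 \le \|f\|_2$ step (the paper phrases this as H\"older's inequality rather than Cauchy--Schwarz, but on a probability space these are the same statement). One small correction: for Property 4 the precise reference is Corollary~\ref{lem:L2_bound2} rather than Lemma~\ref{lem:L2_bound}, since the lemma gives the bound in terms of $\mathcal{I}_{0,0}$ and $\overline{\mathcal{I}_0}$ while the corollary is what packages it into the stated form $c\,(d/\sigma)^{(d^2-1)/4}$ with the explicit constants.
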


\begin{proof}
    Point 1 follows from the construction detailed above. Point 2 is a consequence of the orthogonality of characters and point 5. Point 3 is the Lemma \ref{lem:combined-bounds}. Point 4 is Corollary \ref{lem:L2_bound2} from Appendix \ref{app:L2}. Point 5 can be proved using the triangle inequality and H{\"o}lder's inequality. Indeed, since $H_P(\cdot, \sigma)$ is normalised to 1 and non-negative, using Lemma \ref{lem:trim-heat-kernel} we can write
    \begin{align}
        ||H_P^{(t)}(\cdot, \sigma)||_1 &\leq ||H_P(\cdot, \sigma)||_1 +  ||(H_P^{(t)}(\cdot, \sigma)-H_P(\cdot, \sigma)) \chi_{\mathrm{SU}(d)}||_1\\
        &\leq 1 + ||H_P^{(t)}(\cdot, \sigma)-H_P(\cdot, \sigma)||_2\\
        & \leq 1 +  2^{\frac{d}{2}}\exp\left(-\sigma\frac{t^2}{d^2} - \frac{1}{2}\sigma t\right)
    \end{align}
    for \begin{align}
    t \geq  \frac{d^2}{2\sqrt{\sigma}}\sqrt{2\log\left(\frac{d^4}{\sigma}\right)} \mathrm{.}
    \end{align}
\end{proof}


\begin{rem}
One can also write down similar properties of the (full) heat kernel $H_P(\cdot, 
\sigma)$, which follow from the proofs of the same Lemmas as Theorem \ref{th:theorem3}. In this case, point 1 is not true for any $t$. The normalisation from point 2 is true, but $H_P(\cdot, 
\sigma)$ is non-negative so $||H_P(\cdot, \sigma)||_1=1$. The bound from point 4 is valid. The bound from point 3 simplifies to

\begin{align}
    \int_{\tilde{B}_{P, \epsilon}^c} \abs{H_P(g, \sigma)} d\mu(g) &\leq  \frac{1+\eta}{2}\exp\left(-\frac{d}{16 \sigma} \epsilon^2 + \frac{d^2-1}{24} \sigma \right)
\end{align}
for
\begin{align}
    \sigma \leq \frac{\epsilon^2}{ 32 d \mathrm{log}(d)}
\end{align} and $\eta \geq \frac{1}{\prod_{k=1}^{d}k!} $. Hence, for any $\epsilon > 0$
        \begin{equation}
           \lim_{\sigma \to 0} \int_{ \tilde{B}_{P, \epsilon}^c}  |H_P(g,\sigma)| d \mu(g) =0 \mathrm{.}
        \end{equation}
\end{rem}

\begin{rem}
Formally, Theorem \ref{th:theorem3} shows in particular that for fixed $d\geq 2$, the family of $L^1$-integrable functions $\{k_{\lambda}\}_{\lambda > 0}$, where $k_{\lambda}(g) \coloneqq H_P^{t_*(\lambda)}(g, 1/\lambda)$ and $t_*(\lambda) \coloneqq \frac{d^2}{2}\sqrt{2 \lambda \log\left( d^4 \lambda \right)}$, is an approximate identity on $\mathrm{PU}(d)$.
\end{rem}

\section{Summary and future work}
\label{sec:summary}

We have improved on the state-of-the-art for constructing $\epsilon$-nets in the group of unitary channels from unitary $t$-designs, that of Ref.~\cite{9614165}. Our method involves the construction of a polynomial approximation to a Dirac delta on the space of quantum unitary channels $\mathrm{PU}(d)$ stemming from the natural object - a heat kernel on $\mathrm{SU}(d)$. 

In the case of exact $t$-designs we obtain results very close to those of Ref.~\cite{9614165}, but in the more practically relevant approximate case, our results significantly improve on the state-of-the-art, showing that $\delta$-approximate $t$-designs form $\epsilon$ nets for much larger values of $\delta$ than was previously known.

While our scaling of $\delta$ with $d$ and $\epsilon$ for approximate $t$ designs substantially improves on prior work, it is not obvious that it is optimal. We leave for future work the task of either improving this scaling even further or proving that no improvements are possible. 

The construction of Ref.~\cite{9614165} is used in Ref.~\cite{PhysRevX.14.041068} to prove saturation and recurrence results for the complexity of random quantum circuits without the assumptions on the gap of the universal set $\mathcal{S}$. In future work, it may be possible to apply our construction to improve the known results in this setting. It would also be interesting to understand the unknown constants $A$ and $B$ appearing in~\eqref{eqn:application2-with-group-constants}, obtaining an inverse-free non-constructive Solovay-Kitaev like theorem from our results. This amounts to deriving explicit constants of the polylogarithmic spectral gap decay (see Theorem 6 from Ref.~\cite{varjú2015randomwalkscompactgroups}), especially $r_0$. We expect that the trimmed heat kernel construction may be applied to obtain explicit bounds on such decay.

\section*{Acknowledgments}
This research was funded by the National Science Centre, Poland under the grant OPUS: UMO2020/37/B/ST2/02478. AS would like to thank Luz Roncal for inspiring discussions regarding heat kernels on compact Lie groups. OS would like to thank Michał Oszmaniec for the discussion about our results. 
\appendix

\section{Connecting the heat kernels on the projective and special unitary groups}
\label{app:heat_kenrel_on_su_and_pu}

In this Appendix, we prove Lemma \ref{lem:averaging-irreps-is-nice}, which can be used to show how one can obtain a heat kernel on $\mathrm{PU}(d)$ using the heat kernel on $\mathrm{SU}(d)$ via averaging.

\begin{lem}\label{lem:averaging-irreps-is-nice}
    Let $K$ be a simply-connected compact Lie group and $\Gamma$ a finite normal subgroup so that $K/\Gamma$ is a compact Lie group. Let $\rho$ be a finite-dimensional unitary irrep of $K$. Then the function
\begin{align}
    \tilde{\rho}:g\mapsto \frac{1}{\abs{\Gamma}}\sum_{\gamma\in \Gamma} \rho(\gamma g),
\end{align}
is either 
\begin{enumerate}
    \item identical to $\rho$ if $\rho$ is constant on $\Gamma$-cosets in $K$ or
    \item identically zero.
\end{enumerate}
\end{lem}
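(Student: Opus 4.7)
The plan is to exploit two classical facts: first, that any finite normal subgroup of a connected Lie group must lie in the centre, and second, Schur's lemma applied to central elements.

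First I would observe that since $K$ is simply connected (hence in particular connected) and $\Gamma \trianglelefteq K$ is finite, the conjugation map $K \to \operatorname{Aut}(\Gamma)$ is continuous from a connected space into the discrete finite group $\operatorname{Aut}(\Gamma)$, so it is the constant map sending every element to the identity automorphism. Therefore $\Gamma \subseteq Z(K)$. (This step is the ``soft'' part of the argument; if one wishes, one can also verify it by hand in the case of interest, $K = \mathrm{SU}(d)$, where $\Gamma$ consists of $d$-th roots of unity times $I$.)

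Next, since each $\gamma \in \Gamma$ is central, the operator $\rho(\gamma)$ commutes with every $\rho(h)$ for $h\in K$. By Schur's lemma applied to the irreducible representation $\rho$, there is a scalar $\chi(\gamma)\in \mathbb{C}^{\times}$ with $\rho(\gamma) = \chi(\gamma) I$. The assignment $\gamma \mapsto \chi(\gamma)$ is a group homomorphism $\Gamma \to \mathbb{C}^{\times}$, i.e., a character of the finite group $\Gamma$. Using $\rho(\gamma g) = \rho(\gamma)\rho(g) = \chi(\gamma)\rho(g)$, we can pull the sum in the definition of $\tilde{\rho}$ outside $\rho(g)$ to obtain
\begin{align}
    \tilde{\rho}(g) = \left(\frac{1}{\abs{\Gamma}}\sum_{\gamma\in\Gamma}\chi(\gamma)\right)\rho(g).
\end{align}

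Finally, by the standard orthogonality relation for characters of finite groups, the scalar in parentheses equals $1$ when $\chi$ is the trivial character and $0$ otherwise. In the first case $\rho(\gamma) = I$ for every $\gamma\in\Gamma$, which is precisely the statement that $\rho$ is constant on $\Gamma$-cosets, and $\tilde{\rho} = \rho$; in the second case $\tilde{\rho}\equiv 0$. This covers the dichotomy in the statement.

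I do not foresee a genuine obstacle: the only step that deserves a careful word is the centrality of $\Gamma$, which is where simple connectedness of $K$ (or at least its connectedness) enters. Everything else is a direct application of Schur's lemma and orthogonality of characters of finite abelian groups.
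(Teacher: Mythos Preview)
Your proof is correct, but it follows a different route from the paper's. You first argue that $\Gamma$ is central (from connectedness of $K$ and discreteness of $\Gamma$), then apply Schur's lemma to each $\rho(\gamma)$ individually to obtain a character $\chi:\Gamma\to\mathbb{C}^\times$, and finish by invoking orthogonality of characters on the finite abelian group $\Gamma$. The paper instead never invokes centrality: it works directly with normality to show that $\tilde\rho$ is multiplicative, observes that $\tilde\rho(e)=\frac{1}{|\Gamma|}\sum_\gamma\rho(\gamma)$ is a self-adjoint idempotent intertwiner for $\rho$, applies Schur's lemma once to conclude $\tilde\rho(e)\in\{0,I\}$, and then uses normality again to deduce $\tilde\rho=\rho$ in the nonzero case. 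Your argument is arguably cleaner and more informative in this setting, since it identifies precisely which irreps survive (those whose central character is trivial on $\Gamma$); the paper's argument has the minor advantage that it would go through verbatim for a finite normal subgroup of a possibly disconnected compact group, where centrality of $\Gamma$ is not automatic.
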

\begin{proof}
We observe
\begin{align}
    \tilde{\rho}(g)\tilde{\rho}(g^\prime) &= \tilde{\rho}(g g^\prime), \label{eqn:rho-tilde-respects-multiplication}
\end{align}
so $\tilde{\rho}$ is a group homomorphism exactly if it maps the identity in $K$ to the identity operator. Since $\tilde{\rho}(e)$ is self-adjoint and is easily seen to be idempotent, it is an orthogonal projector. Since $\tilde{\rho}(e)$ commutes with every operator in the image of $K$ under $\rho$ and $\rho$ is irreducible by Schur's lemma, $\rho(e) \propto I$ and is therefore equal to either $I$ or $0$.

In the case that $\tilde{\rho}(e)$ is the identity operator compute
\begin{align}
    \rho(g^{-1}) \tilde{\rho}(g) &= \frac{1}{\abs{\Gamma}}\sum_{\gamma\in \Gamma} \rho(g^{-1}) \rho(\gamma g)\\
    &= \frac{1}{\abs{\Gamma}}\sum_{\gamma\in \Gamma} \rho(g^{-1} \gamma g)\\
    &= \frac{1}{\abs{\Gamma}}\sum_{\gamma^\prime \in \Gamma} \rho(g^{-1} g \gamma^\prime g^{-1} g)\\
    &= \tilde{\rho}(e) = I,
\end{align}
so that for all $g\in K$ we have $\rho(g) = \tilde{\rho}(g)$ implying that the $\rho$ is constant on $\Gamma$-cosets.
\end{proof}

\begin{proof}[Proof of~\eqref{eqn:rho-tilde-respects-multiplication}]
    \begin{align}
    \tilde{\rho}(g)\tilde{\rho}(g^\prime) &= \frac{1}{\abs{H}^2}\sum_{h,h^\prime\in H} \rho(hg)\rho(h^\prime g^\prime)\\
    &= \frac{1}{\abs{H}^2}\sum_{h,h^\prime\in H} \rho(hgh^\prime g^\prime )\\
    &= \frac{1}{\abs{H}^2}\sum_{h\in H} \sum_{h^{\prime\prime}\in H} \rho(hg g^{-1} h^{\prime\prime} gg^\prime)\\
    &= \frac{1}{\abs{H}^2}\sum_{h\in H} \sum_{h^{\prime\prime}\in H} \rho(h h^{\prime\prime} gg^\prime)\\
    &= \frac{1}{\abs{H}}\sum_{h\in H}  \rho(h gg^\prime) = \tilde{\rho}(g g^\prime)
    \end{align}
\end{proof}
 
\section{Bounding the polynomial approximation of the heat kernel}
\label{app:trimming}
In this Appendix, we prove Lemma \ref{lem:trim-heat-kernel}, which quantifies the $L^2$-norm difference between the heat kernel $H_P$ and the trimmed heat kernel $H_P^{(t)}$.

\begin{lem}[``Trimming'' the heat kernel]\label{lem:trim-heat-kernel}
    The trimmed heat kernel $H_P^{(t)}$ satisfies 
    \begin{align}
    \norm{H_P(\cdot, \sigma) - H_P^{(t)}(\cdot,\sigma)}_2 \leq 2^{\frac{d}{2}}\exp\left(-2\sigma(1-\gamma)\frac{t^2}{d^2} - \frac{1}{2}\sigma t\right),\label{eqn:bound-on-trimmed-kernel-approx}
    \end{align}
    for any $0 < \gamma < 1$, provided that 
    \begin{align}
        2t \geq \frac{d^2}{\sqrt{\gamma\sigma}}\sqrt{\log\left(\frac{d^4}{2\gamma\sigma}\right)}.
    \end{align}
\end{lem}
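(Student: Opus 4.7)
The plan is to proceed via Parseval's identity for characters, exploiting the fact that $H_P^{(t)}$ is exactly the truncation of the character expansion of $H_P$ appearing in~\eqref{eqn:heat-kernel-character-formula-repeated}. First, since the characters $\chi_\lambda$ are orthonormal with respect to the Haar measure on $\mathrm{SU}(d)$, the difference $H_P(\cdot,\sigma) - H_P^{(t)}(\cdot,\sigma)$ is simply the tail of the character series, so
\begin{align}
\norm{H_P(\cdot,\sigma) - H_P^{(t)}(\cdot,\sigma)}_2^2 = \sum_{\substack{\lambda:\,\sum_i \lambda_i = 0 \\ \norm{\lambda}_1 > 2t}} d_\lambda^2\, e^{-2\sigma k_\lambda}.
\end{align}
I then substitute the dimension bound~\eqref{eqn:irrep-dimension-bound} and the Casimir lower bound~\eqref{eqn:irrep-casimir-bound}, so that each term is controlled by $(1+\norm{\lambda}_1)^{d(d-1)}\exp(-\sigma\norm{\lambda}_1^2/d^2 - \sigma\norm{\lambda}_1/2)$.

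Second, I group highest weights by their $1$-norm. The count of projective $\mathrm{SU}(d)$-highest weights with $\norm{\lambda}_1 = n$ is bounded by a polynomial in $n$ of degree at most $d-1$ (for instance, via the bijection with pairs of partitions of $n/2$ with at most $d-1$ parts), which can be absorbed into an overall factor of the form $(1+n)^{d^2}$ with a harmless multiplicative constant. This yields the envelope
\begin{align}
\norm{H_P(\cdot,\sigma) - H_P^{(t)}(\cdot,\sigma)}_2^2 \leq C_d \sum_{n > 2t} (1+n)^{d^2} \exp\left(-\frac{\sigma n^2}{d^2} - \frac{\sigma n}{2}\right).
\end{align}

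Third, I introduce the splitting parameter $\gamma \in (0,1)$ via
\begin{align}
-\frac{\sigma n^2}{d^2} = -\frac{(1-\gamma)\sigma n^2}{d^2} - \frac{\gamma\sigma n^2}{d^2}, \qquad -\frac{\sigma n}{2} = -\sigma t - \frac{\sigma(n-2t)}{2},
\end{align}
valid for $n \geq 2t$. Bounding the $(1-\gamma)$-piece of the quadratic term below by $4(1-\gamma)\sigma t^2/d^2$ using $n \geq 2t$ extracts exactly the prefactor $\exp(-4(1-\gamma)\sigma t^2/d^2 - \sigma t)$ appearing (squared) in the target bound~\eqref{eqn:bound-on-trimmed-kernel-approx}. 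The hypothesis on $t$ is calibrated so that at $n = 2t$ one has $\gamma\sigma(2t)^2/d^2 \geq d^2\log(1+2t)$, which makes the polynomial factor $(1+n)^{d^2}$ bounded above by $\exp(\gamma\sigma n^2/d^2)$ for all $n \geq 2t$; the residual sum $\sum_{n > 2t} e^{-\sigma(n-2t)/2}$ is a convergent geometric series. Collecting constants and taking a square root then produces the $2^{d/2}$ prefactor in~\eqref{eqn:bound-on-trimmed-kernel-approx}.

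The main obstacle is the precise matching of constants in the third step: verifying that the condition $2t \geq (d^2/\sqrt{\gamma\sigma})\sqrt{\log(d^4/(2\gamma\sigma))}$ is exactly strong enough to push the maximizer $n^* \sim d^2/\sqrt{2\gamma\sigma}$ of the function $(1+n)^{d^2}\exp(-\gamma\sigma n^2/d^2)$ below the threshold $2t$, so that the sum over $n > 2t$ is dominated by its value at the left endpoint, and to confirm that the multiplicative constants coming from the polynomial counting of highest weights, together with the geometric-series tail, collapse neatly into the stated factor of $2^{d/2}$. The bookkeeping will need to be careful, but the mechanism is the standard interplay between polynomial growth in $n$ and Gaussian decay in $n$, with the threshold $2t$ chosen on the supercritical side of the crossover.
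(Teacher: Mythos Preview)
Your overall strategy matches the paper's exactly: Parseval on characters, the dimension bound~\eqref{eqn:irrep-dimension-bound} and Casimir bound~\eqref{eqn:irrep-casimir-bound}, a polynomial count of highest weights with given $1$-norm, and then using the hypothesis on $t$ to absorb the polynomial prefactor into a fraction $\gamma$ of the Gaussian. The paper even invokes a Lambert $W$ estimate for that absorption step, which is the precise mechanism behind what you describe heuristically.

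The one genuine gap is in your handling of the residual sum. After you extract $\exp\bigl(-4(1-\gamma)\sigma t^2/d^2 - \sigma t\bigr)$ and cancel the polynomial against $\exp(\gamma\sigma n^2/d^2)$, you are left with $\sum_{n>2t} e^{-\sigma(n-2t)/2}$, a geometric series summing to $(e^{\sigma/2}-1)^{-1}\sim 2/\sigma$ for small $\sigma$. This is $\sigma$-dependent and cannot ``collapse neatly'' into the $d$-dependent constant $2^{d/2}$. The paper avoids this by \emph{not} freezing the $(1-\gamma)$-piece of the quadratic at its value $n=2t$: it keeps the full exponent $-\sigma\bigl((1-\gamma)j^2/d^2 + j/2\bigr)$ inside the tail sum, bounds the sum by the integral $\int_{2t}^\infty \exp\bigl(-\sigma((1-\gamma)x^2/d^2 + x/2)\bigr)\,dx$, and then applies the standard $\erfc$ tail estimate. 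That produces a prefactor $\tfrac{2d^2}{\sigma(d^2+8t(1-\gamma))}$ carrying an explicit $t$ in the denominator, which under the hypothesis on $t$ is combined with the extracted constants $(2+1/(2t))^{d-1}(1+1/(2t))^{d(d-1)}$ to reach the claimed $2^d$. Your outline is easily repaired along these lines: retain the quadratic Gaussian decay in the residual and integrate, rather than discarding it and summing geometrically.
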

\begin{proof}
     In terms of $L^2$-norm, as a function of $x$ the approximation error may be computed using
\begin{align}
    \norm{H_P(\cdot, \sigma) - H_P^{(t)}(\cdot,\sigma)}_2^2 &= \norm{\sum_{\lambda , \norm{\lambda}_1 > 2t} d_\lambda \exp(-\sigma k_\lambda) \chi_\lambda(x)}_2^2  \\
    &= \int_G d\mu(x) \abs{\sum_{\lambda , \norm{\lambda}_1 > 2t} d_\lambda \exp(-\sigma k_\lambda) \chi_\lambda(x)}^2\\ 
    &= \sum_{\nu , \norm{\nu}_1 > 2t} \sum_{\lambda , \norm{\lambda}_1 > 2t} d_\lambda d_\nu \exp(-\sigma (k_\lambda + k_\nu) )\int_G d\mu(x)  \chi^*_\nu(x)\chi_\lambda(x)\\ 
    &= \sum_{\lambda , \norm{\lambda}_1 > 2t} d_\lambda^2 \exp(-2 \sigma k_\lambda),\label{eqn:2-norm-trimmed-difference-bound}
\end{align}
using the orthonormality of the characters. The weights $\lambda$ are integer-valued $d$ dimensional vectors with non-increasing entries, which satisfy the condition $\sum_j \lambda_j=0$. We now need a bound for $\alpha(j)$, the number of highest weights $\lambda$ satisfying $\norm{\lambda}_1 = j$. Each highest weight is uniquely determined by an integer-valued $d-1$ dimensional vector with non-increasing entries, since e.g. the last element of the vector is fixed by the constraint $\sum_j \lambda_j=0$. In order to simplify the reasoning we will ignore the non-increasing property and obtain a slightly looser bound than we would by including it. Such $d-1$ dimensional vector clearly has 1-norm less than $||\lambda||_1$. Since the infinity-norm lower bounds the one-norm it follows that the number of highest-weight vectors with $1$-norm equal to $j$ is upper bounded by the number of integer vectors with infinity norm less than $j$, which is exactly the number of integer points in an $d-1$ dimensional hypercube of side length $2j$. We therefore obtain the very crude upper bound
\begin{align}
    \alpha(j) \leq (1+2j)^{d-1}.
\end{align}
Substituting this, along with the bounds from~\eqref{eqn:irrep-dimension-bound} and~\eqref{eqn:irrep-casimir-bound} into~\eqref{eqn:2-norm-trimmed-difference-bound} we obtain
\begin{align}
    \norm{H_P(\cdot, \sigma) - H_P^{(t)}(\cdot,\sigma)}_2^2 &=  \sum_{j > 2t} \alpha(j) d_\lambda^2  \exp(-2 \sigma k_\lambda)\\
    &\leq \sum_{j > 2t}(1+2j)^{(d-1)} (1+j)^{d(d-1)} \exp\left(-2 \sigma \left(\frac{j^2}{2d^2}  + \frac{j}{4}\right)\right)\\
    &\leq \left(2+\frac{1}{2t}\right)^{d-1} \left(1+\frac{1}{2t}\right)^{d(d-1)}  \sum_{j > 2t} j^{(d+1)(d-1)}\exp\left(-2 \sigma \left(\frac{j^2}{2d^2}  + \frac{j}{4}\right)\right).
\end{align}
Our approach is to bound the expression by a Gaussian integral, so we first bound the polynomial term in the sum by a Gaussian. An easy bound follows from the bound on the Lambert $W$ function
\begin{align}
    W_{-1}(-e^{u-1}) > -1 - \sqrt{2u} - u,
\end{align}
obtained in Ref.~\cite{Chatzigeorgiou-2013-lambert-w}, namely that
\begin{align}
    j^2 \geq \frac{ d^4}{\gamma\sigma}\log\left(\frac{d^4}{2\gamma\sigma}\right)  \implies (1+2j)^{(d+1)(d-1)} \leq \exp\left(\gamma \sigma \frac{j^2}{d^2} \right).
\end{align}
Assuming that
\begin{align}
    j > 2t \geq \frac{d^2}{\sqrt{\gamma\sigma}}\sqrt{\log\left(\frac{d^4}{2\gamma\sigma}\right)}  \implies (1+2j)^{(d+1)(d-1)} \leq \exp\left(\gamma \sigma \frac{j^2}{d^2} \right),
\end{align}
where $0<\gamma<1$ is a constant we have introduced. We substitute this bound to obtain 
\begin{align}
    \norm{H_P(\cdot, \sigma) - H_P^{(t)}(\cdot,\sigma)}_2^2 \leq \left(2+\frac{1}{2t}\right)^{d-1} \left(1+\frac{1}{2t}\right)^{d(d-1)} \sum_{j > 2t}\exp\left(- \sigma \left((1-\gamma)\frac{j^2}{d^2}  + \frac{j}{2}\right)\right).
\end{align}
The summand is decreasing in $j$, so we may bound the sum by an integral to obtain
\begin{align}
    \norm{H_P(\cdot, \sigma) - H_P^{(t)}(\cdot,\sigma)}_2^2 &\leq \left(2+\frac{1}{2t}\right)^{d-1} \left(1+\frac{1}{2t}\right)^{d(d-1)} \int_{2t}^\infty\exp\left(- \sigma \left((1-\gamma)\frac{x^2}{d^2}  + \frac{x}{2}\right)\right)dx.\\
\end{align}
We compute the integral and employ the standard bound
\begin{align}
    \erfc(x) &\leq \frac{1}{x\sqrt{\pi}}\exp\left(-x^2\right),
\end{align}
to obtain
\begin{align}
    \norm{H_P(\cdot, \sigma) - H_P^{(t)}(\cdot,\sigma)}_2^2 &\leq \left(2+\frac{1}{2t}\right)^{d-1} \left(1+\frac{1}{2t}\right)^{d(d-1)} \frac{2 d^2}{\sigma} \frac{1}{d^2+ 8t(1-\gamma)} \exp\left(-4\sigma(1-\gamma)\frac{t^2}{d^2} - \sigma t\right)\mathrm{.}
\end{align}
Recalling that we are still assuming that $2t \geq \frac{d^2}{\sqrt{\gamma\sigma}}\sqrt{\log\left(\frac{d^4}{2\gamma\sigma}\right)}$ we can obtain the very simple bound
\begin{align}
    \norm{H_P(\cdot, \sigma) - H_P^{(t)}(\cdot,\sigma)}_2^2 &\leq 2^d \frac{e }{1+d^2}\frac{1}{5 - 4\gamma}\exp\left(-4\sigma(1-\gamma)\frac{t^2}{d^2} - \sigma t\right)\\
    &\leq 2^d\exp\left(-4\sigma(1-\gamma)\frac{t^2}{d^2} - \sigma t\right).
\end{align}

\end{proof}

\section[Bounding the dominant term I0]{Bounding the dominant term \texorpdfstring{$\mathcal{I}_0$}{}}
\label{app:I0}
In this Appendix, we provide a proof of Lemma \ref{lem:I0_bound}, which bounds the $\mathcal{I}_0$ term~\eqref{eq:I0}. To do that, we use the following Lemmas \ref{lem:GUE0}, \ref{lem:szarek} and \ref{lem:C_value}.

\begin{lem}
\label{lem:GUE0}
Let $\mathrm{GUE}_d^0$ denote a GUE ensemble of traceless $d \times d$ matrices. Then for any $r \geq 0$ (see e.g. Ref.~\cite{Tracy-Widom-2001})

\begin{align}
    \prob_{A\sim \mathrm{GUE}_d^0}\left(\norm{A}_\infty \leq r\right) = \left(\prod_{j=1}^d\frac{1}{j!}\right) (2\pi)^{-\frac{d-1}{2}}2^{\frac{d^2-1}{2}}\int_{\mathcal{Z}^{d-1} \cap \mathcal{H}_r^d}dy\exp\left(-\sum_{j=1}^d y_j^2\right)\prod_{1\leq i < j\leq d}(y_i - y_j)^2 \mathrm{.}\label{eqn:gue-integral}
\end{align}
\end{lem}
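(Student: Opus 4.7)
\textbf{Proof proposal for Lemma~\ref{lem:GUE0}.} The plan is to compute the probability as an integral over the space of traceless Hermitian matrices, diagonalize, and then integrate out the angular degrees of freedom using the Weyl integration formula. First, I would fix a convention for the traceless GUE: the underlying space is the real vector space of traceless $d\times d$ Hermitian matrices, which has real dimension $d^2 - 1$, and its density with respect to the Lebesgue measure induced by the Hilbert--Schmidt inner product is proportional to $\exp\left(-\tr(A^2)\right)$. The normalisation constant $Z_d$ is fixed by requiring the total measure to equal $1$.

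Next, I would perform the eigenvalue decomposition $A = U \,\mathrm{diag}(y_1,\ldots,y_d)\,U^{\dagger}$ where $U \in \mathrm{U}(d)/T$ and $y \in \mathcal{Z}^{d-1}$, so the eigenvalues are constrained to the hyperplane $\sum_j y_j = 0$. The Jacobian of this change of variables gives a factor $\prod_{i<j}(y_i-y_j)^2$, which is the classical Vandermonde squared arising in the Weyl integration formula for $\mathrm{U}(d)$ applied to the adjoint action. The angular part integrates to $\mathrm{Vol}\left(\mathrm{U}(d)/T\right)$, and combining this with $\tr(A^2) = \sum_j y_j^2$ yields
\begin{align}
   \prob_{A\sim \mathrm{GUE}_d^0}\!\left(\norm{A}_\infty \leq r\right) = \frac{\mathrm{Vol}\left(\mathrm{U}(d)/T\right)}{d!\, Z_d} \int_{\mathcal{Z}^{d-1}\cap \mathcal{H}_r^d} dy\, \exp\!\left(-\sum_j y_j^2\right) \prod_{i<j}(y_i-y_j)^2,
\end{align}
where the factor of $d!$ comes from the Weyl group (the ordering ambiguity of the eigenvalues) and the condition $\|A\|_\infty \leq r$ is equivalent to $\max_j |y_j|\leq r$ by the unitary invariance of the operator norm, i.e.\ to $y \in \mathcal{H}_r^d$.

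The remaining task is to evaluate the constant $\mathrm{Vol}(\mathrm{U}(d)/T)/(d!\, Z_d)$ and check that it matches $\left(\prod_{j=1}^d\frac{1}{j!}\right)(2\pi)^{-(d-1)/2}\,2^{(d^2-1)/2}$. The cleanest route is to determine $Z_d$ by extending the integration region to all of $\mathcal{Z}^{d-1}$ and appealing to the Selberg--Mehta integral for the traceless Hermite ensemble (or, equivalently, by starting from the full GUE normalisation $Z^\text{full}_d = d!\,(2\pi)^{d/2}\prod_{j=1}^d j!$ in the convention $e^{-\tr(A^2)/2}$ and then projecting onto the hyperplane $\sum_j y_j = 0$, picking up a single $\sqrt{2\pi}$ factor from eliminating the trace direction and powers of $2$ from the change of variance convention $\exp(-\sum y_j^2)$ versus $\exp(-\sum y_j^2/2)$). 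The factor $2^{(d^2-1)/2}$ encodes precisely the $d^2-1$ real dimensions over which we have rescaled the Gaussian, and the factor $(2\pi)^{-(d-1)/2}$ reflects the $d-1$ free eigenvalue directions after imposing tracelessness.

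I expect the main obstacle to be purely bookkeeping: verifying that all factors of $2$, $\pi$, and $d!$ combine exactly to the stated constant, since the literature uses many inequivalent GUE normalisations. Once the constant is pinned down (e.g.\ by cross-checking at $d=2$, where the computation reduces to a one-dimensional Gaussian integral and can be done in closed form), the statement of the lemma follows immediately from the Vandermonde eigenvalue density together with the identity $\|A\|_\infty = \max_j |y_j|$.
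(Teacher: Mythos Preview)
Your approach is the standard derivation of the eigenvalue density for the traceless GUE and is correct in outline. However, the paper does not actually prove this lemma: it is stated with a citation to the literature (Tracy--Widom) and used as a known input. So there is no ``paper's proof'' to compare against; the authors treat the formula as a black box.

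Your proposal is therefore strictly more than what the paper does. The only substantive risk in your write-up is exactly the one you flag yourself: the constant $\left(\prod_{j=1}^d\frac{1}{j!}\right)(2\pi)^{-(d-1)/2}2^{(d^2-1)/2}$ depends on several simultaneous conventions (the Gaussian weight $e^{-\tr A^2}$ versus $e^{-\tr A^2/2}$, the choice of Lebesgue measure on the hyperplane $\mathcal{Z}^{d-1}$ versus on the $d-1$ free coordinates, and the normalisation of the flag-manifold volume). Your suggestion to pin down the constant via the full-GUE Selberg--Mehta normalisation and then project out the trace direction is the right way to do it, and the $d=2$ sanity check is a good safeguard. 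None of this is a gap in the argument; it is just the bookkeeping you already anticipated.
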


\begin{lem}
\label{lem:szarek}
    For any $r \geq 2 \sqrt{d}$ (see Ref.~\cite{abmb-2017})
    \begin{equation}
    \prob_{A\sim \mathrm{GUE}_d^0}\left(\norm{A}_\infty \geq r\right) \leq \frac{1}{2} \exp\left(- \frac{d}{2} (\frac{r}{\sqrt{d}}-2)^2\right) \mathrm{.}
    \label{eq:szarekm}
\end{equation}
\end{lem}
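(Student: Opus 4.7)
The plan is to prove the tail bound via the standard machinery of Gaussian concentration of measure applied to the operator norm. The key observation is that the space of traceless Hermitian $d\times d$ matrices is a real inner-product space of (real) dimension $d^2-1$, and $\mathrm{GUE}_d^0$ is, up to normalization, the standard Gaussian measure on this space — as is made explicit by the eigenvalue density recorded in Lemma~\ref{lem:GUE0}.

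The first ingredient is that the map $A \mapsto \|A\|_\infty$ is $1$-Lipschitz with respect to the Hilbert-Schmidt norm, since for any Hermitian $A,B$ we have $\big|\|A\|_\infty - \|B\|_\infty\big| \leq \|A-B\|_\infty \leq \|A-B\|_2$. Applying the Borell-Sudakov-Tsirelson inequality to this Lipschitz function of a Gaussian vector produces a one-sided tail bound of the schematic form
\begin{equation}
    \prob_{A\sim\mathrm{GUE}_d^0}\left(\|A\|_\infty \geq \mathbb{E}\|A\|_\infty + t\right) \leq \tfrac{1}{2}\exp\!\left(-t^2/(2\sigma^2)\right),
\end{equation}
where $\sigma^2$ is the one-dimensional marginal variance fixed by our normalization. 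The factor $\tfrac{1}{2}$ in Lemma~\ref{lem:szarek} matches the standard symmetrization factor of the Borell-type one-sided estimate.

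The second ingredient is the classical bound $\mathbb{E}\|A\|_\infty \leq 2\sqrt{d}$ in this normalization (essentially the Davidson-Szarek estimate), which can be obtained either by Slepian-Gordon comparison after writing $\|A\|_\infty$ as a supremum of Gaussian processes, or by a direct Laplace-type estimate on the eigenvalue integral in~\eqref{eqn:gue-integral}; the traceless restriction only decreases the expected norm since projection onto the trace-zero subspace is a Hilbert-Schmidt contraction and does not increase the Lipschitz constant of $\|\cdot\|_\infty$. Substituting $t = r - 2\sqrt{d}$, which is non-negative by the assumption $r \geq 2\sqrt{d}$, produces the bound claimed in the Lemma.

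I expect the main obstacle to be not the overall strategy — which is classical — but careful bookkeeping of constants, so that the coefficient $\tfrac{1}{2}$ in front and the $\tfrac{d}{2}$ in the exponent are both recovered exactly. Concretely, one must read off $\sigma^2$ from the Gaussian weight $\exp\!\left(-\sum_j y_j^2\right)$ that appears in~\eqref{eqn:gue-integral}, and then check that this matches the Hilbert-Schmidt inner product used in the Lipschitz bound above; the factor $2^{(d^2-1)/2}$ in the normalization of~\eqref{eqn:gue-integral} is the fingerprint of precisely this rescaling, and tracking it through the concentration inequality is what produces the $\tfrac{d}{2}$ rather than some other constant in the exponent.
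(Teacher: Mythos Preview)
Your approach is correct but genuinely different from the paper's. The paper does not prove the concentration inequality at all: it simply quotes the bound $\prob\big(\tfrac{1}{\sqrt{d}}\|A\|_\infty \geq 2+x\big) \leq \tfrac{1}{2}\exp(-dx^2/2)$ directly from Ref.~\cite{abmb-2017} and substitutes $x = r/\sqrt{d}-2$. Your proposal instead reconstructs the proof that lies behind that citation --- Lipschitz concentration for Gaussian vectors applied to $A\mapsto\|A\|_\infty$, combined with the Davidson--Szarek bound on the expected (or median) operator norm. That is exactly the machinery Aubrun--Szarek use, so you are not taking a wrong turn; you are simply unpacking the black box the paper invokes.

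Two caveats on the constants you flagged. First, the prefactor $\tfrac{1}{2}$ is the one-sided Gaussian isoperimetric bound about the \emph{median}; the version about the mean does not carry the $\tfrac{1}{2}$. You would therefore need $\mathrm{med}\,\|A\|_\infty \leq 2\sqrt{d}$ rather than the expectation bound. Second, the weight $\exp(-\sum_j y_j^2)$ in~\eqref{eqn:gue-integral} corresponds to matrix-entry variance $1/2$ in Hilbert--Schmidt coordinates, so the semicircle edge sits at $\sqrt{2d}$ and concentration gives $\exp(-t^2)$ rather than $\exp(-t^2/2)$; the stated bound with $2\sqrt{d}$ and exponent $\tfrac{1}{2}(r-2\sqrt{d})^2$ is therefore a (harmless) weakening of what your argument actually yields in this normalization. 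Your instinct that the bookkeeping is the only real work is correct.
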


\begin{proof}
    From Ref.~\cite{abmb-2017} we have
    \begin{equation}
    \prob_{A\sim \mathrm{GUE}_d^0}\left(\frac{1}{\sqrt{d}}\norm{A}_\infty \geq 2+x\right) \leq \frac{1}{2} \exp\left(- \frac{d x^2}{2}\right)
\end{equation}
valid for $x \geq 0$. The result follows via $x=\frac{r}{\sqrt{d}}-2$.
\end{proof}

\begin{lem}
\label{lem:C_value}

For $\mathrm{SU}(d)$ we can evaluate (see~\eqref{eq:C_def})
\begin{equation}
   \frac{C(d,\sigma) }{|W|} =  \frac{\sqrt{d} (2d)^{(d-1)/2+m}}{\prod_{k=1}^{d}k!} (2\pi)^{d-1+m} e^{\frac{d^2-1}{24} \sigma} (4 \pi \sigma)^{-(d^2-1)/2} \mathrm{,}
\end{equation}
where $m=d(d-1)/2$.

\end{lem}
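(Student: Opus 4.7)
The plan is to prove Lemma \ref{lem:C_value} by direct substitution of the group-theoretic invariants of $\mathrm{SU}(d)$ into the definition~\eqref{eq:C_def} of $C(d,\sigma)$. First I would collect the invariants of $\mathrm{SU}(d)$ that appear in that formula: $l = \dim \mathfrak{t} = d-1$, $m = |\Phi^+| = d(d-1)/2$, $N = \dim \mathrm{SU}(d) = d^2 - 1$, and $|W| = |S_d| = d!$. The paper has already established $\|\delta\|^2 = (d^2-1)/24$, so the exponential factor $\exp(\|\delta\|^2\sigma)$ and the $(4\pi\sigma)^{-N/2}$ factor can be read off directly, while the $(2\pi)^{l+m}$ factor becomes $(2\pi)^{d-1+m}$, matching the target.

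The only substantive computation is the evaluation of $\pi(\delta) = \prod_{1 \leq a < b \leq d}(\delta, \alpha_{ab})$. Using the identification of $\mathfrak{t}$ with $\mathfrak{t}^*$ via the Killing form $(X,Y) = -2d\tr(XY)$, together with the explicit diagonal forms $(X_\delta)_{kk} = i\bigl((d+1)/(4d) - k/(2d)\bigr)$ from~\eqref{eq:delta} and $X_{\alpha_{ab}}$ with entries $\pm i/(2d)$ in positions $a$ and $b$, a short trace computation gives $(\delta, \alpha_{ab}) = (b-a)/(2d)$. Hence
\begin{equation}
\pi(\delta) \;=\; \prod_{1\leq a<b\leq d}\frac{b-a}{2d} \;=\; \frac{1}{(2d)^m}\,\prod_{1\leq a<b\leq d}(b-a) \;=\; \frac{\prod_{k=1}^{d-1}k!}{(2d)^m},
\end{equation}
where the final equality uses the standard superfactorial identity $\prod_{a<b}(b-a) = \prod_{k=1}^{d-1}k^{d-k} = \prod_{k=1}^{d-1}k!$ (easy to verify by induction on $d$ or by grouping terms according to the value of $b-a$).

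Combining these ingredients, $C(d,\sigma)/|W|$ collapses to
\begin{equation}
\frac{C(d,\sigma)}{|W|} \;=\; \frac{c\,i^m\,(2d)^m}{\prod_{k=1}^{d}k!}\,(2\pi)^{d-1+m}\,e^{(d^2-1)\sigma/24}\,(4\pi\sigma)^{-(d^2-1)/2},
\end{equation}
so matching the claimed expression reduces to verifying $c\,i^m = \sqrt{d}\,(2d)^{(d-1)/2}$ in Urakawa's conventions. The main (and essentially only) obstacle I expect is pinning down this group constant $c$: it encodes the normalization of Urakawa's Poisson form and depends on the precise convention used both for the Haar measure on $\mathrm{SU}(d)$ and for the identification $\mathfrak{t} \cong \mathfrak{t}^*$ induced by the Killing form. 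I would resolve this either by quoting the explicit formula for $c$ from Ref.~\cite{urakawa-1974} specialised to $\mathrm{SU}(d)$, or, as an independent sanity check, by fixing the remaining multiplicative constant via the normalisation condition $\int_{\mathrm{SU}(d)} H_S(g,\sigma)\,d\mu_S(g) = 1$ (heat conservation), which determines it uniquely. Once $c$ is identified, the remaining algebra is purely routine bookkeeping of powers of $2$, $d$, and $\pi$.
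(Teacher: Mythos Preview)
Your approach is essentially the paper's: substitute the $\mathrm{SU}(d)$ invariants $l=d-1$, $m=d(d-1)/2$, $N=d^2-1$, $|W|=d!$, $\|\delta\|^2=(d^2-1)/24$ into~\eqref{eq:C_def} and compute $\pi(\delta)$ explicitly. The paper resolves the constant $c$ by quoting Urakawa's formula $c=2^{l/2}\sqrt{D}/\prod_{i=1}^l|\alpha_i|$ with $D=d$ (the determinant of the Cartan matrix for $A_{d-1}$) and $|\alpha_i|=1/\sqrt{d}$, giving $c=\sqrt{d}\,(2d)^{(d-1)/2}$, and records $\pi(\delta)=(2d)^{-m}i^m\prod_{k=1}^{d-1}k!$. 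Note the discrepancy with your (correct, under the paper's stated real-root conventions) value $\pi(\delta)=(2d)^{-m}\prod_{k=1}^{d-1}k!$: the extra $i^m$ in the paper's expression is exactly what cancels the $i^m$ in~\eqref{eq:C_def}, so if you keep $\pi(\delta)$ real you will \emph{not} obtain $c\,i^m=\sqrt{d}\,(2d)^{(d-1)/2}$ from Urakawa's real formula for $c$. This is a convention mismatch between the paper's real roots and Urakawa's original setup rather than a substantive error, and your proposed sanity check via the normalisation $\int H_S\,d\mu_S=1$ is the right way to pin the overall constant unambiguously.
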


\begin{proof}
From Ref.~\cite{urakawa-1974} we know that
$c=2^{l/2}\frac{\sqrt{D}}{\prod_{i=1}^{l} |\alpha_i|}$, where $D$ is the determinant of the Cartan matrix. It is known that for $A_{d-1}$ root system, $D=d$ and $|W|=d!$. Moreover, we have that
$N=d^2-1$, $l=d-1$, $m=d(d-1)/2$, $|\alpha_i|=1/\sqrt{d}$.
The expression $\pi(\delta)$ can be calculated from equation~\eqref{eq:delta} as

\begin{equation}
 \pi(\delta)=(2d)^{- m} i^m \prod_{1\leq p<q\leq d}(q-p)=(2d)^{- m} i^m \prod_{k=1}^{d-1} k!.  
\end{equation}

\end{proof}

\begin{lem}
\label{lem:I0_bound}
Assume $\sigma \leq \frac{\tilde{\epsilon}^2}{32}$. Then
    \begin{equation}
     \mathcal{I}_0 \leq \overline{\mathcal{I}_0} \coloneqq \frac{1}{2} \exp\left(-\frac{d}{16 \sigma} \tilde{\epsilon}^2 + \frac{d^2-1}{24} \sigma \right) \mathrm{.}
    \end{equation}
\end{lem}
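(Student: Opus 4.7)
The idea is to convert $\mathcal{I}_0$ into an integral over the hyperplane $\mathcal{Z}^{d-1}$ whose integrand, after a rescaling, is exactly the traceless GUE joint eigenvalue density; the required bound then reduces to a GUE tail probability which Lemma~\ref{lem:szarek} controls.

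The first step I would take is to simplify $|j(\exp(X_\phi))\,\pi(X_\phi)|$. Writing $\alpha_{ij}(X_\phi) = \phi_i - \phi_j$ (where I extend $\phi$ to a $d$-vector by $\phi_d := -\sum_{k<d}\phi_k$) and applying $|\sin u|\leq |u|$ to each of the $m$ sine factors in $j(\exp X_\phi) = (2i)^m\prod_{\alpha\in\Phi^+}\sin(\alpha(X_\phi)/2)$, the factor $2^m$ in $|j|$ is exactly absorbed by the $2^{-m}$ coming from the halved arguments, yielding
\begin{align*}
|j(\exp X_\phi)\,\pi(X_\phi)| \;\leq\; \prod_{1\leq i<j\leq d}(\phi_i - \phi_j)^2.
\end{align*}
Setting $y_k := \phi_k$ (so $y\in\mathcal{Z}^{d-1}$), the region $\mathcal{H}^{d-1}_\pi\setminus \mathcal{H}^{d-1}_{\tilde\epsilon}$ corresponds to $\{y\in\mathcal{Z}^{d-1}:\|y\|_\infty > \tilde\epsilon\}$ (justified by the interpretation of $\tilde\epsilon$ through the operator norm on the maximal torus, which does account for the dependent component $y_d$), and a direct computation with $(X,Y)=-2d\,\tr(XY)$ gives $\|X_\phi\|^2 = 2d\sum_k y_k^2$.

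Next I would rescale $z_k := \sqrt{d/(2\sigma)}\,y_k$, which transforms the integrand into the traceless GUE density $\prod_{i<j}(z_i-z_j)^2\exp(-\sum z_k^2)$ and turns the domain into $\{z\in\mathcal{Z}^{d-1}:\|z\|_\infty > R\}$ with $R := \sqrt{d/(2\sigma)}\,\tilde\epsilon$. The Jacobian contributes $(2\sigma/d)^{(d-1)/2}$ while the Vandermonde rescales by $(2\sigma/d)^m$, combining to $(2\sigma/d)^{(d^2-1)/2}$ via the key identity $m+(d-1)/2 = (d^2-1)/2$. Lemma~\ref{lem:GUE0} then identifies the remaining $z$-integral with $\prod_k k!\cdot(2\pi)^{(d-1)/2}\cdot 2^{-(d^2-1)/2}\cdot \prob_{A\sim\mathrm{GUE}_d^0}(\|A\|_\infty > R)$. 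Substituting the closed form of $C(d,\sigma)/|W|$ from Lemma~\ref{lem:C_value} and accounting for the factor $(2\pi)^{-(d-1)}$ in the Haar measure, the factorials $\prod_k k!$ cancel, $(2d)^{(d^2-1)/2}(2\sigma/d)^{(d^2-1)/2} = (4\sigma)^{(d^2-1)/2}$ combines with $(4\pi\sigma)^{-(d^2-1)/2}$ to $\pi^{-(d^2-1)/2}$, and the remaining powers of $2$ and $\pi$ cancel exactly, leaving a clean expression of the form (constant in $d$)$\cdot\exp((d^2-1)\sigma/24)\cdot\prob_{A\sim\mathrm{GUE}_d^0}(\|A\|_\infty > R)$.

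Finally I would invoke Lemma~\ref{lem:szarek}: the hypothesis $\sigma\leq \tilde\epsilon^2/32$ gives $R/\sqrt d = \tilde\epsilon/\sqrt{2\sigma}\geq 4$, ensuring both that Szarek applies and that $R/\sqrt d - 2 \geq R/(2\sqrt d)$, so $(R/\sqrt d - 2)^2 \geq \tilde\epsilon^2/(8\sigma)$ and
\begin{align*}
\prob_{A\sim\mathrm{GUE}_d^0}(\|A\|_\infty > R) \;\leq\; \tfrac{1}{2}\exp\!\bigl(-d\tilde\epsilon^2/(16\sigma)\bigr),
\end{align*}
which combined with the prefactor yields the claimed bound $\overline{\mathcal{I}_0}$. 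The main obstacle I anticipate is not conceptual but purely the bookkeeping in the assembly step: several powers of $2\pi$, $d$ and $\sigma$, together with the factorial $\prod_{k=1}^{d}k!$, must combine and cancel exactly against the GUE normalization from Lemma~\ref{lem:GUE0}, with the identity $m+(d-1)/2=(d^2-1)/2$ playing the pivotal structural role.
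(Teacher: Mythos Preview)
Your proposal is correct and follows essentially the same route as the paper: bound $|\sin|$ by its argument to obtain the squared Vandermonde, rescale to identify the integral with the traceless GUE eigenvalue density via Lemma~\ref{lem:GUE0}, and finish with the Szarek tail estimate (Lemma~\ref{lem:szarek}). Two small remarks: (i) the constant you leave as ``constant in $d$'' is in fact exactly $1$ after the bookkeeping, so the bound $\mathcal{I}_0 \leq e^{(d^2-1)\sigma/24}\prob_{\mathrm{GUE}_d^0}(\|A\|_\infty > R)$ is sharp in that prefactor; (ii) the paper additionally enlarges the integration region from $\mathcal{H}^{d-1}_\pi\setminus\mathcal{H}^{d-1}_{\tilde\epsilon}$ to $(\mathcal{H}^{d-1}_{\tilde\epsilon})^c$ before rescaling, a step you should make explicit since the GUE identification requires the unbounded domain.
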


\begin{proof}
    
The $\mathcal{I}_0$ can be bounded as 
\begin{align}
\label{eq:zero_order}
   \mathcal{I}_0\leq& \frac{C(d,\sigma) 2^m}{|W|} \int_{\mathcal{H}^{d-1}_{\pi} \setminus \mathcal{H}^{d-1}_{\tilde{\epsilon}}}d\mu(\phi) \abs{\prod_{\alpha > 0}\alpha(X_\phi)\sin\left(\frac{\alpha(X_{\phi})}{2}\right)} \exp\left(\frac{-1}{4\sigma}\norm{X_\phi}^2\right)\\
   \leq &\frac{ C(d,\sigma)}{|W|}\int_{(\mathcal{H}^{d-1}_{\tilde{\epsilon}})^c}d\mu(\phi) \left(\prod_{\alpha > 0}\alpha(X_\phi)^2\right) \exp\left(\frac{-1}{4\sigma}\norm{X_\phi}^2\right).\label{eqn:almost-gue-integral}
\end{align}
Recalling~\eqref{eq:killing_form} and~\eqref{eqn:maximal-torus-parameterisation} we can write
\begin{align}
    \norm{X_\phi}^2 &= 2d\left(\sum_{j=1}^{d-1}\phi_j^2 + \left(\sum_{j=1}^{d-1}\phi_j\right)^2\right),
\end{align}
and it is convenient to introduce the variables $y_j \coloneqq \phi_j\sqrt{\frac{d}{2\sigma}}$ and $y_d \coloneqq -\sum_{j=1}^{d-1}y_j$. The expression in~\eqref{eqn:almost-gue-integral} is then equal to
\begin{align}
    &\frac{ C(d,\sigma)}{|W|} \left(\frac{2\sigma}{d}\right)^{m+\frac{l}{2}} (2\pi)^{-(d-1)}\frac{1}{\sqrt{d}}\int_{\mathcal{Z}^{d-1} \cap (\mathcal{H}^d_{\tilde{\epsilon}\sqrt{\frac{d}{2\sigma}}})^c}d\mu_\mathcal{Z}(y) \left(\prod_{1\leq i <j\leq d }(y_i - y_j)^2\right) \exp\left(-\sum_{j=1}^d y_j^2\right)\label{eqn:almost-gue-integral-rescaled-1}\\
    &= e^{\norm{\delta}^2 \sigma} \left(\prod_{k=1}^d\frac{1}{k!}\right) 2^{\frac{1}{2} (d-1) d} \pi ^{\frac{1}{2}-\frac{d}{2}}\int_{\mathcal{Z}^{d-1} \cap (\mathcal{H}^d_{\tilde{\epsilon}\sqrt{\frac{d}{2\sigma}}})^c}d\mu_\mathcal{Z}(y) \left(\prod_{1\leq i <j\leq d }(y_i - y_j)^2\right) \exp\left(-\sum_{j=1}^d y_j^2\right)\label{eqn:almost-gue-integral-rescaled-2},
\end{align}
where $\mu_\mathcal{Z}$ is the Euclidean measure on the hyperplane $\mathcal{Z}^{d-1}$ and a factor of $d^{-\frac{1}{2}}$ appears in~\eqref{eqn:almost-gue-integral-rescaled-1} from changing the measure from the Euclidean one on the $d-1$ variables $y_1\hdots y_{d-1}$ to the Euclidean measure intrinsic to the hyperplane. 
The transition from~\eqref{eqn:almost-gue-integral-rescaled-1} to~\eqref{eqn:almost-gue-integral-rescaled-2} comes from the application of Lemma \ref{lem:C_value}.

Using the normalisation of the probability to 1, we can apply Lemma \ref{lem:GUE0} with $r=\sqrt{\frac{d}{2 \sigma}} \tilde{\epsilon}$ to~\eqref{eqn:almost-gue-integral-rescaled-2} to write 
\begin{align}
\mathcal{I}_{0} \leq e^{\norm{\delta}^2 \sigma} \prob_{A\sim \mathrm{GUE}_d^0}\left(\norm{A}_\infty \geq \tilde{\epsilon} \sqrt{\frac{d}{2\sigma}}\right) 
\label{eq:guebound}  \mathrm{.}
\end{align}
Applying Lemma \ref{lem:szarek} with $r=\sqrt{\frac{d}{2 \sigma}} \tilde{\epsilon}$ to~\eqref{eq:guebound} and assuming

\begin{equation}
\tilde{\epsilon} \geq \frac{2 \sqrt{2 \sigma}}{1-\beta} \mathrm{,}
\label{eq:cond_szarek_beta}
\end{equation} for some $0 < \beta<1$ (which is stronger than the assumption in Lemma \ref{lem:szarek}) we obtain the following bound

\begin{align}
\mathcal{I}_{0}&\leq \frac{1}{2} \exp\left(-\frac{d}{4 \sigma} \tilde{\epsilon}^2 + \frac{4d}{2\sqrt{2\sigma}}\tilde{\epsilon} -2d + \norm{\delta}^2 \sigma \right)\\ 
&\leq   \frac{1}{2} \exp\left(-\frac{d \beta^2}{4 \sigma} \tilde{\epsilon}^2 + \norm{\delta}^2 \sigma \right)\\
&= \frac{1}{2} \exp\left(-\frac{d\beta^2}{4 \sigma} \tilde{\epsilon}^2 + \frac{d^2-1}{24} \sigma \right)  \mathrm{.}
\label{eq:0_order_bound}
\end{align}
We set $\beta=1/2$ and for $\sigma \leq \frac{\tilde{\epsilon}^2}{32}$~\eqref{eq:cond_szarek_beta} we get

\begin{equation}
\label{eq:I0b}
   \mathcal{I}_0 \leq \frac{1}{2} \exp\left(-\frac{d}{16 \sigma} \tilde{\epsilon}^2 + \frac{d^2-1}{24} \sigma \right)  \mathrm{.}
\end{equation}
\end{proof}

\section{Bounding the correction term \texorpdfstring{$\mathcal{R}$}{}}
\label{app:R}

In this Appendix, we bound the remaining terms $\mathcal{R}$, defined in~\eqref{eq:R}. Our strategy is to bound $\mathcal{R}$ by the volume of the complement of $\epsilon$-ball times the upper bound on the integrand outside of the ball (Lemma \ref{lem:R_bound}). We then show that $\mathcal{R}$ is indeed a correction to $\mathcal{I}_0$ (see~\eqref{eq:I0}), which relatively decays very fast with growing $d$  (Lemma \ref{lem:tail-bound-smaller-than-dominant}). To do so, we employ the following Lemmas \ref{lem:ball_points}, \ref{lem:simple_gamma_bound} and \ref{lem:Theta}.

\begin{lem}
\label{lem:ball_points}
     The number of $k$-vectors in each $|| \cdot ||_{\infty}$ norm shell of radius $r > 0$
    \begin{equation}
        S_{r,d} \coloneqq \{(n_1, \ldots, n_{d}) \in \mathbb{Z}^d | \,\,\max_{i}|n_i| = r\}
    \end{equation}
     can be upper bounded as
     \begin{equation}
         |S_{r,d}| \leq 2^{d} (2r)^{d-1}  \mathrm{.}
     \end{equation}
\end{lem}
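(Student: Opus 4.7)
The starting point is to identify $S_{r,d}$ combinatorially. Since $S_{r,d}$ consists of exactly those integer points in the closed $\ell_{\infty}$-hypercube $[-r,r]^d$ which do not lie in the strictly smaller hypercube $[-(r-1), r-1]^d$, one has
\begin{equation*}
    |S_{r,d}| = (2r+1)^d - (2r-1)^d\mathrm{.}
\end{equation*}
Note that if $r$ is not a positive integer then $S_{r,d}$ is empty and the claim is trivial, so I will assume $r\geq 1$ is an integer.

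More useful for induction is a recursion obtained by partitioning $S_{r,d}$ according to whether the last coordinate attains the maximum absolute value. If $|n_d| = r$, the remaining coordinates satisfy $|n_j|\leq r$, giving $2(2r+1)^{d-1}$ choices. Otherwise $|n_d| < r$ (providing $2r-1$ choices for $n_d$) and the truncated vector $(n_1, \ldots, n_{d-1})$ lies in $S_{r,d-1}$. Together these yield
\begin{equation*}
   |S_{r,d}| = 2(2r+1)^{d-1} + (2r-1)\,|S_{r,d-1}|\mathrm{.}
\end{equation*}

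I would then prove the bound $|S_{r,d}|\leq 2^d(2r)^{d-1}$ by induction on $d$. The base case $d=1$ gives $|S_{r,1}| = 2 = 2^{1}(2r)^{0}$ with equality. For the inductive step, assume $|S_{r,d-1}| \leq 2^{d-1}(2r)^{d-2}$ and substitute into the recursion. The elementary analytic input is that $2r+1 \leq 4r$ for all integers $r\geq 1$, which implies $(2r+1)^{d-2} \leq 2^{d-2}(2r)^{d-2}$ (trivially for $d=2$). This permits the estimate $2(2r+1)^{d-1} \leq 2(2r+1)\cdot 2^{d-2}(2r)^{d-2}$; adding the contribution $(2r-1)\cdot 2^{d-1}(2r)^{d-2}$ from the inductive hypothesis and collecting the common factor $2^{d-1}(2r)^{d-2}$ produces a bracketed term $(2r+1)+(2r-1) = 4r$. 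The final product is exactly $2^{d-1}(2r)^{d-2}\cdot 4r = 2^{d}(2r)^{d-1}$.

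The main obstacle to watch for is that the claimed bound is genuinely tight: equality holds at $d=1$ and $d=2$, which rules out coarser counting strategies such as the union bound $|S_{r,d}| \leq 2d(2r+1)^{d-1}$ or bounding $(2r+1)^d - (2r-1)^d$ by $2d$ times the maximum of $x^{d-1}$ on $[2r-1, 2r+1]$; each of these is too loose by a factor growing in $d$. The success of the induction above depends on consuming the exact form of the recursion so that the $(2r+1)$ and $(2r-1)$ contributions combine cleanly into $4r$, rather than being bounded separately.
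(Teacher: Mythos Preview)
Your proof is correct. Both you and the paper begin from the same counting identity $|S_{r,d}| = (2r+1)^d - (2r-1)^d$, but the arguments diverge from there. The paper proceeds directly via the binomial expansion
\[
(2r+1)^d - (2r-1)^d = 2\sum_{k=0}^{\lfloor (d-1)/2\rfloor}\binom{d}{2k+1}(2r)^{d-2k-1},
\]
bounds each $(2r)^{d-2k-1}$ by $(2r)^{d-1}$ (valid since $2r\geq 1$), and uses that the odd binomial coefficients sum to $2^{d-1}$. This is a one-line finish once the expansion is written down. Your route is an induction on $d$ built on the recursion $|S_{r,d}| = 2(2r+1)^{d-1} + (2r-1)|S_{r,d-1}|$, where the key step is arranging the two contributions so that $(2r+1)+(2r-1)=4r$ collapses the bound exactly. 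Your approach is slightly longer but has the virtue of making transparent why the bound is tight at $d=1,2$, and it avoids the need to recognise the odd-binomial-sum identity. The paper's approach is more concise and perhaps more natural if one already sees the binomial structure.
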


\begin{proof}
Consider the corresponding balls   

\begin{equation}
    B_{r,d} \coloneqq \{(n_1, \ldots, n_{d}) \in \mathbb{Z}^d | \,\,\max_{i}|n_i|\leq r\}  \mathrm{.}
\end{equation}
It is easy to see that $|B_{r,d}|=(2r+1)^d$. Thus, 
\begin{equation}
    |S_{r,d}|=(2r+1)^{d}-(2r-1)^{d}  \mathrm{.}
\end{equation}
Using the binomial expansion, we can write

\begin{equation}
   |S_{r,d}| = 2 \sum_{k=0}^{\floor{\frac{d-1}{2}}} \binom{d}{2k+1} (2r)^{d-2k-1} \leq 2 (2r)^{d-1} \sum_{k=0}^{\floor{\frac{d-1}{2}}} \binom{d}{2k+1} = 2^{d} (2r)^{d-1}  \mathrm{.}
\end{equation}
\end{proof}

\begin{lem}  
\label{lem:simple_gamma_bound}
Let $\Gamma(s,x)$ denote the upper incomplete Gamma function

\begin{equation}
    \Gamma(s,x) \coloneqq \int_x^{\infty} t^{s-1}e^{-t} dt \mathrm{.} 
\end{equation}
Then, assuming $s \geq 1$ and $x > s-1$

\begin{equation}
    \Gamma(s,x) \leq \frac{e^{-x} x^s}{x-s+1} \mathrm{.}
\end{equation}

\end{lem}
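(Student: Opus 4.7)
The plan is to establish the bound via a single integration-by-parts-style trick, recognising that on the range of integration the integrand can be written as a controllable multiple of a perfect derivative. First, I would observe that
\[
    \frac{d}{dt}\left(-t^{s-1} e^{-t}\right) = t^{s-2} e^{-t}(t - s + 1),
\]
which is strictly positive for $t > s-1$. Consequently, on $[x,\infty)$ (where $x > s-1$ by hypothesis) the function $-t^{s-1}e^{-t}$ is strictly increasing, and the integrand admits the identity
\[
    t^{s-1} e^{-t} = \frac{t}{t-s+1}\cdot\frac{d}{dt}\left(-t^{s-1} e^{-t}\right).
\]

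Next I would show that the prefactor $\frac{t}{t-s+1}$ is non-increasing on $(s-1,\infty)$ for $s \geq 1$: a direct differentiation gives $-\frac{s-1}{(t-s+1)^2}\leq 0$. Hence, for every $t \geq x$ we have $\frac{t}{t-s+1} \leq \frac{x}{x-s+1}$, and substituting into the integral yields
\[
    \Gamma(s,x) \;=\; \int_x^{\infty} t^{s-1} e^{-t}\,dt \;\leq\; \frac{x}{x-s+1}\int_x^{\infty}\frac{d}{dt}\left(-t^{s-1} e^{-t}\right) dt.
\]
The remaining integral telescopes to $\left[-t^{s-1} e^{-t}\right]_x^{\infty} = x^{s-1}e^{-x}$ (the boundary term at infinity vanishes since $s$ is fixed), producing the claimed bound $\frac{e^{-x} x^s}{x-s+1}$.

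I do not expect any genuine obstacle here; the only subtlety is the degenerate case $s = 1$, where $\frac{t}{t-s+1} = 1$ is constant and the argument collapses to the trivial identity $\Gamma(1,x) = e^{-x}$. As an alternative route that avoids the monotonicity check, one could substitute $u = t - x$ to rewrite $\Gamma(s,x) = e^{-x}\int_0^{\infty}(x+u)^{s-1} e^{-u}\,du$ and then apply the elementary estimate $(1 + u/x)^{s-1} \leq e^{(s-1)u/x}$ (valid for $s \geq 1$ and $u \geq 0$), reducing the problem to a single exponential integral whose convergence is controlled precisely by the assumption $x > s-1$; this route arrives at the same closed-form bound.
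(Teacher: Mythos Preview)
Your proposal is correct. Your primary route differs from the paper's: you bound the integrand pointwise by $\frac{x}{x-s+1}\,\frac{d}{dt}\bigl(-t^{s-1}e^{-t}\bigr)$ via the monotonicity of $t/(t-s+1)$, then integrate exactly. The paper instead shifts the variable by $t\mapsto t+x$ and bounds $(1+t/x)^{s-1}\le e^{(s-1)t/x}$, reducing to an elementary exponential integral --- precisely the ``alternative route'' you sketch at the end. Both arguments land on the same closed form; yours has the mild advantage of producing equality when $s=1$ without a separate case, while the paper's substitution approach is slightly more transparent about where the condition $x>s-1$ enters (as the convergence criterion for the final integral).
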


\begin{proof}
   \begin{equation}
    \Gamma(s, x ) = e^{-x} \int_0^{\infty} (t+x)^{s-1}e^{-t} dt \leq e^{-x} x^{s-1} \int_0^{\infty} e^{\frac{t}{x} (s-1) -t} dt
\end{equation} 
This bound can be improved using continued fraction representation.
\end{proof}

\begin{lem}
\label{lem:Theta}
\begin{equation}
  \left(\frac{d}{4} \right)^{-d^2/8} \geq \frac{1}{\prod_{k=1}^{d}k!}
\end{equation}   
\end{lem}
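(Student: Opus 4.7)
The plan is to take logarithms of both sides and reduce the claim to the additive inequality
\[
\sum_{k=1}^{d} \log(k!) \;\geq\; \frac{d^2}{8}\log\!\left(\frac{d}{4}\right) \;=\; \frac{d^2\log d}{8} \;-\; \frac{d^2\log 4}{8}.
\]
The natural tool is a Stirling-type lower bound. The elementary inequality $k! \geq (k/e)^k$, valid for every integer $k\geq 1$ by an induction that uses $(1+1/k)^k < e$, gives $\log(k!) \geq k\log k - k$.

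Summing this bound over $k=1,\dots,d$ and using $\sum_{k=1}^d k = d(d+1)/2$, the problem reduces to lower-bounding $\sum_{k=1}^d k\log k$. Since $x\log x$ is increasing on $[1,\infty)$, monotone comparison with the integral $\int_1^d x\log x\,dx = \tfrac{d^2\log d}{2} - \tfrac{d^2-1}{4}$ supplies this. Combining the estimates yields
\[
\sum_{k=1}^d \log(k!) \;\geq\; \frac{d^2\log d}{2} - \frac{3d^2}{4} - \frac{d}{2} + \frac{1}{4},
\]
and subtracting the target right-hand side reduces the lemma to the explicit polynomial inequality $3d^2\log d \geq (6-\log 4)\,d^2 + 4d - 2$. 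This holds for all $d \geq 6$ because $3\log 6 \approx 5.38$ already exceeds $6 - \log 4 \approx 4.61$ with ample slack for the lower-order terms.

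The main obstacle is that this asymptotic argument, while eventually very wasteful (the left-hand side of the reduced form grows like $\tfrac{d^2\log d}{2}$, four times faster than $\tfrac{d^2\log d}{8}$), is insufficient for the smallest values of $d$. I would handle $d\in\{1,2,3,4,5\}$ by direct numerical verification, which is immediate because $\prod_{k=1}^d k!$ grows factorially; for example, at $d=4$ the left-hand side of the original statement equals $1$ while the right-hand side equals $1/288$, and at $d=5$ the left-hand side is $(5/4)^{-25/8}\approx 0.50$ while the right-hand side is $1/34560$. Combining the finitely many explicit checks with the $d \geq 6$ argument then closes the proof.
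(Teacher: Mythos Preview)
Your proof is correct, but it takes a genuinely different route from the paper. The paper uses a purely combinatorial ``keep the top half'' trick, applied twice: first observe that $\log(k!)=\sum_{i=1}^k\log i\geq (k-\lceil k/2\rceil+1)\log\lceil k/2\rceil\geq \tfrac{k}{2}\log(k/2)$, then repeat the same idea on the increasing sequence $\log(k!)$ to get $\sum_{k=1}^d\log(k!)\geq \tfrac{d}{2}\cdot\tfrac{d/2}{2}\log(d/4)=\tfrac{d^2}{8}\log(d/4)$. This two-line argument lands exactly on the target with no slack and no case analysis.

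Your approach (Stirling lower bound $k!\geq(k/e)^k$ plus integral comparison for $\sum k\log k$) yields a much sharper asymptotic estimate, roughly $\tfrac{d^2\log d}{2}$ rather than $\tfrac{d^2\log d}{8}$, which is why you see a factor-of-four excess for large $d$. The price is that the Stirling--integral route loses too much for small $d$ and forces you to verify $d\in\{1,\dots,5\}$ by hand. Both arguments are elementary; the paper's is shorter and uniform in $d$, while yours would be the natural choice if a tighter constant were ever needed.
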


\begin{proof}
We first lower bound the value of $\mathrm{log}(k!)$ from below. It is clear that a sum
\begin{equation}
    \mathrm{log}(k!) = \mathrm{log}(1) + \mathrm{log}(2) + \ldots + \mathrm{log}(k)
\end{equation}
can be lower bounded by $(k-j+1)\mathrm{log}(j)$ for any $1 \leq j \leq k$. Picking $j=\ceil{\frac{k}{2}}$ and using the monotonicity of $x \,\mathrm{log}(x)$ we obtain 
\begin{equation}
    \frac{k}{2}\mathrm{log}\left( \frac{k}{2}\right) \leq \mathrm{log}(k!) \mathrm{.}
\end{equation}
Using this bound and repeating the argument for
\begin{equation}
    \mathrm{log}\left(\prod_{k=1}^{d}k!\right) = \mathrm{log}(1!) + \mathrm{log}(2!) + \ldots + \mathrm{log}(d!)
\end{equation}
we obtain
\begin{equation}
     \frac{d^2}{8} \mathrm{log}\left( \frac{d}{4} \right)\leq \mathrm{log}\left(\prod_{k=1}^{d}k!\right)  \mathrm{.}
\end{equation}
The result follows via exponentiation.

\end{proof}

\begin{lem}
\label{lem:R_bound}
    \begin{equation}
        \mathcal{R} \leq \overline{\mathcal{R}} \coloneqq \frac{C(d,\sigma) 2^m}{|W|} 2^{d-1}  \cdot  (2\pi)^{m} \left(1+\frac{d}{2}\right)^m  2^{m+d-1}e^{-\frac{d \pi^2}{2 \sigma}}
    \end{equation}
    for $\sigma \leq \frac{2 \pi^2 d}{d^2+d-2}$.
\end{lem}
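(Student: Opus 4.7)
The plan is to bound $\mathcal{R}$ pointwise on each $\ell^\infty$-shell $\{k \in \mathbb{Z}^{d-1}\setminus\{0\} : \norm{k}_\infty = r\}$ for $r = 1, 2, \dots$, use Lemma~\ref{lem:ball_points} to count the vectors per shell, and exhibit the resulting series as a rapidly decaying one in $r$, dominated by its $r=1$ term provided $\sigma$ is below the stated threshold.

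First I would estimate the three factors of the integrand. Since $j(\exp(X_\phi)) = (2i)^m \prod_{\alpha>0}\sin(\alpha(X_\phi)/2)$, each sine is bounded by $1$ and so $|j(\exp(X_\phi))| \leq 2^m$. Introducing the extended coordinates $y_j \coloneqq \phi_j + 2\pi k_j$ for $j \leq d-1$ together with $y_d \coloneqq -\sum_{j<d}y_j$, each positive root evaluates as $\alpha_{ij}(X_\phi+X_k) = y_i - y_j$ and the Killing-form convention~\eqref{eq:killing_form} gives $\norm{X_\phi + X_k}^2 = 2d\sum_{j=1}^d y_j^2$. For $\phi \in \mathcal{H}^{d-1}_\pi$ and $\norm{k}_\infty = r$, the bounds $|y_j| \leq \pi(1+2r)$ for $j < d$ propagate by the traceless constraint to $|y_d| \leq (d-1)\pi(1+2r)$, and splitting the $\binom{d-1}{2}$ roots not involving the index $d$ from the $d-1$ that do produces the product estimate $|\pi(X_\phi+X_k)| \leq (2\pi)^{\binom{d-1}{2}}(d\pi)^{d-1}(1+2r)^m$. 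Via the inequality $(d/2)^{d-1}\leq (1+d/2)^m$ this can be loosened to the cleaner form $(2\pi)^m (1+d/2)^m (1+2r)^m$. On the exponential side, picking an index $j_0$ with $|k_{j_0}| = r$ yields $|y_{j_0}| \geq 2\pi r - \pi = \pi(2r-1)$, so $\norm{X_\phi + X_k}^2 \geq 2d\pi^2(2r-1)^2$.

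Next I would multiply through, using Lemma~\ref{lem:ball_points} applied to $\mathbb{Z}^{d-1}$ to bound the number of $k$ per shell by $2^{d-1}(2r)^{d-2}$, and arrive at
\begin{equation*}
\mathcal{R} \leq \frac{C(d,\sigma)\,2^m}{|W|}\,(2\pi)^m (1+d/2)^m \cdot 2^{d-1} \sum_{r=1}^{\infty} (2r)^{d-2}(1+2r)^m \exp\!\left(-\frac{d\pi^2(2r-1)^2}{2\sigma}\right).
\end{equation*}
Under the stated hypothesis on $\sigma$, I would verify that the quadratic exponent dominates the polynomial growth in $r$, so that the entire tail is absorbed into at most $2^{m+d-1} \exp(-d\pi^2/(2\sigma))$, matching the claimed form of $\overline{\mathcal{R}}$.

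The main obstacle is calibrating the threshold on $\sigma$ precisely. Since the polynomial factor $(2r)^{d-2}(1+2r)^m$ grows roughly like $r^{m+d-2}$ while the gap between consecutive shells is $\exp(-4\pi^2 d r/\sigma)$, the required estimate is that for every $r\geq 1$ the exponential shrinkage $\exp\!\bigl(-d\pi^2[(2r-1)^2-1]/(2\sigma)\bigr)$ dominates the ratio $(2r)^{d-2}(1+2r)^m/3^m$. Pushing this inequality through --- the delicate step being the uniform-in-$r$ bookkeeping between the $(1+2r)^m$ polynomial and the Gaussian decay --- reduces by routine algebra to exactly the condition $\sigma \leq 2\pi^2 d/(d^2+d-2)$ quoted in the lemma.
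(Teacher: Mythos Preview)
Your approach is the same shell-by-shell decomposition the paper uses, but two details prevent it from closing.

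First, your bound on the root product is too loose to recover the constant $2^{m+d-1}$. You estimate $|\pi(X_\phi+X_k)| \le (2\pi)^m(1+d/2)^m(1+2r)^m$, whereas the paper bounds it by $(2\pi)^m(1+dr)^m$ and then factors $(1+dr)^m = (1+d/2)^m\bigl(1+\tfrac{d}{d+2}u\bigr)^m$ with $u=2r-1$. The point of this factorisation is that at $u=1$ the residual factor is $\le 2^m$, so together with $(1+u)^{d-2}|_{u=1}=2^{d-2}$ the leading shell contributes $2^{m+d-2}$, exactly half the target. With your $(1+2r)^m$ the $r=1$ term alone is $2^{d-2}\cdot 3^m$, and the inequality $2^{d-2}3^m \le 2^{m+d-1}$ is equivalent to $(3/2)^m\le 2$, which fails already for $d=3$ (where $m=3$). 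So the sum you display cannot be absorbed into $2^{m+d-1}e^{-d\pi^2/(2\sigma)}$ with your constants.

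Second, the last paragraph is where the threshold $\sigma \le 2\pi^2 d/(d^2+d-2)$ has to be \emph{produced}, not asserted. In the paper it does not come from a shell-ratio test at all: after the substitution $u=2r-1$ and the crude bound $(1+u)^{m+d-2}\le 2^{m+d-2}u^{m+d-2}$, the tail $\sum_{u\ge 3,\,u\ \text{odd}} u^{m+d-2}e^{-d\pi^2 u^2/(2\sigma)}$ is dominated by $\tfrac12\int_1^\infty u^{m+d-2}e^{-d\pi^2 u^2/(2\sigma)}\,du$, which is an incomplete Gamma function. Lemma~\ref{lem:simple_gamma_bound} then gives $\Gamma(s,x)\le e^{-x}x^s/(x-s+1)$, and the requirement $x-s+1\ge 1$, i.e.\ $A(d,\sigma)=\tfrac{d\pi^2}{2\sigma}-\tfrac{m+d-1}{2}\ge 0$, is \emph{exactly} $\sigma \le 2\pi^2 d/(d^2+d-2)$. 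Your ``routine algebra'' does not reproduce this mechanism, and the ratio inequality you write down is not even satisfied at $r=1$ for $d\ge 3$.
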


\begin{proof}
    
Consider a summand for some fixed $k\neq 0$
\begin{align}
 &\frac{ C(d,\sigma) 2^m}{|W|} \int_{\mathcal{H}^{d-1}_{\pi} \setminus \mathcal{H}^{d-1}_{\tilde{\epsilon}}} d\mu(\phi) \abs{\prod_{\alpha > 0}\alpha(X_\phi + X_k)\sin\left(\frac{\alpha(X_{\phi})}{2}\right)}  \exp\left(-\frac{1}{4\sigma} \norm{X_\phi + X_k }^2\right) \\
 \leq & \frac{ C(d,\sigma) 2^m}{|W|}\int_{\mathcal{H}^{d-1}_{\pi} \setminus \mathcal{H}^{d-1}_{\tilde{\epsilon}}} d\mu(\phi) \abs{ \prod_{\alpha > 0}\alpha(X_\phi+X_k) } \exp\left(\frac{-1}{4\sigma}\norm{X_\phi + X_k}^2\right).
\end{align}
We have
\begin{equation}
  \prod_{\alpha > 0}\alpha(X_\phi + X_k) = \prod_{1 \leq i < j < d} \left(\phi_i - \phi_j + 2\pi(k_i-k_j)\right) \prod_{1 \leq i \leq d-1}(\phi_i - \phi_d + 2 \pi \left(k_i-k_d)\right)
\end{equation}
and on the domain of integration we can bound $\abs{\phi_i - \phi_j} \leq 2 \pi$  and $\abs{k_i-k_j} \leq d ||k||_{\infty}$ for $i<j\leq d$. Thus,

\begin{align}
   \abs{\prod_{\alpha > 0}\alpha(X_\phi + X_k)} &\leq \ \left( 2\pi + 2\pi d ||k||_{\infty}\right)^m \\ &=  (2\pi)^m (1+d||k||_{\infty})^m  \mathrm{.}
\end{align}

Let us find lower bounds on the exponents. We have

\begin{equation}
\label{eq:sq_sum}
    ||X_{\phi} + X_k ||^2 = 2d \left(\sum_{i=1}^{d-1}(\phi_i+2\pi k_i )^2 + (\phi_d + 2\pi k_d )^2\right)  \mathrm{.}
\end{equation}

The first term of~\eqref{eq:sq_sum} is just the square of the Euclidean distance from the origin in coordinate space. This way, Fig. \ref{fig:summation} can be used to understand the summation and bounding process better. The second term is non-negative and is a square of the sum of all coordinates. As such, it has a minimum of zero on the hyperplane crossing the origin. To simplify the reasoning, we discard the second term altogether and obtain the isotropic bound

\begin{align}
    ||X_{\phi} + X_k ||^2 &\geq 2d \left(2\pi ||k||_{\infty} -\pi\right)^2\\
    &= 2d \pi^2 (2||k||_{\infty}-1)^2 \mathrm{.}
\end{align}
Denoting

\begin{equation}
    \Psi(\phi, k) \coloneqq \int_{\mathcal{H}^{d-1}_{\pi} \setminus \mathcal{H}^{d-1}_{\tilde{\epsilon}}} d\mu(\phi) \abs{\prod_{\alpha > 0}\alpha(X_\phi + X_k)\sin\left(\frac{\alpha(X_{\phi})}{2}\right)} \exp\left(-\frac{1}{4\sigma} \norm{X_\phi + X_k }^2\right)  
\end{equation}
and assuming $\sigma$ is small enough, namely

\begin{equation}
\label{eq:smallA}
    \sigma \leq \frac{2 \pi^2 d}{d^2+d-2}\mathrm{,}
\end{equation}
we can use Lemmas \ref{lem:ball_points} and \ref{lem:simple_gamma_bound} to bound the correction term $\mathcal{R}$ as follows \footnote{Below we slightly abused the notation by denoting the $||k||_{\infty}$ as $k$.}

\begin{align}
\mathcal{R} &\leq \frac {C(d,\sigma) 2^m}{|W|} \sum_{k\neq 0}    \Psi(\phi, k) \\ &\leq \frac {C(d,\sigma) 2^m}{|W|} \sum_{k\neq 0}   \cdot (1-\mathrm{Vol}(B_{\epsilon})) \sum_{k=1}^{\infty}|S_{k,d-1}| (2\pi)^{m} (1+d k)^m \exp\left( - \frac{d \pi^2 }{2 \sigma} (2k-1)^2 \right) \label{eq:init} \\
& \leq \frac{C(d,\sigma) 2^m}{|W|}   \cdot  2^{2(d-2)+1}  (2\pi)^{m} \sum_{k=1}^{\infty} k^{d-2}  (1+d k)^m \exp\left( - \frac{d \pi^2 }{2 \sigma} (2k-1)^2 \right) \label{eq:dropped_v} \\
& = \frac{C(d,\sigma) 2^m}{|W|} 2^{d-1}  (2\pi)^{m}  \sum_{u=1, u \mathrm{\, odd}}^{\infty}(1+u)^{d-2}   \left(1+\frac{d}{2}(u+1)\right)^m \exp\left( - \frac{d \pi^2 }{2 \sigma} u^2 \right) \label{eq:u_subs} \\
& = \frac{C(d,\sigma) 2^m}{|W|}  2^{d-1}    (2\pi)^{m} \left(1+\frac{d}{2}\right)^m \sum_{u=1, u \mathrm{\, odd}}^{\infty}(1+u)^{d-2}   \left(1+\frac{d}{d+2} u \right)^m \exp\left( - \frac{d \pi^2 }{2 \sigma} u^2 \right) \label{eq:u_ugly} \\
 & \leq \frac{C(d,\sigma) 2^m}{|W|} 2^{d-1}   \cdot  (2\pi)^{m} \left(1+\frac{d}{2}\right)^m  2^{m+d-2}\sum_{u=1, u \mathrm{\, odd}}^{\infty} u^{m+d-2} \exp\left( - \frac{d \pi^2 }{2 \sigma} u^2 \right) \label{eq:u_odd}\\
 & \leq \frac{C(d,\sigma) 2^m}{|W|} 2^{d-1}  \cdot (2\pi)^{m} \left(1+\frac{d}{2}\right)^m  2^{m+d-2} \left(  e^{-\frac{d \pi^2}{2 \sigma}} + \frac{1}{2}\int_1^{\infty}u^{m+d-2} \exp\left( - \frac{d \pi^2 }{2 \sigma} u^2 \right) d u \right)  \label{eq:u_int} \\
  & = \frac{C(d,\sigma) 2^m}{|W|} 2^{d-1}  \cdot  (2\pi)^{m} \left(1+\frac{d}{2}\right)^m  2^{m+d-2} \left( e^{-\frac{d \pi^2}{2 \sigma}} + \frac{1}{2^2 (\frac{d \pi^2}{2 \sigma})^{\frac{m+d-1}{2}}} \Gamma\left(\frac{m+d-1}{2},  \frac{d \pi^2}{2 \sigma}\right) \right) \label{eq:gamma} \\
  &=\frac{C(d,\sigma) 2^m}{|W|} 2^{d-1}  \cdot  (2\pi)^{m} \left(1+\frac{d}{2}\right)^m  2^{m+d-2} \left( 1 + \frac{1}{4(A(d, \sigma)+1)}\right) e^{-\frac{d \pi^2}{2 \sigma}} \label{eq:A} \\
  & \leq  \frac{C(d,\sigma) 2^m}{|W|} 2^{d-1}  \cdot  (2\pi)^{m} \left(1+\frac{d}{2}\right)^m  2^{m+d-1}e^{-\frac{d \pi^2}{2 \sigma}} \label{eq:fin}  \mathrm{,}
\end{align}
where

\begin{equation}
\label{eq:A_def}
    A(d, \sigma) \coloneqq \frac{d \pi^2}{2 \sigma} - \frac{m+d-1}{2} \mathrm{.}
\end{equation}
Let us explain the bounding process in more detail. We applied the bound $1-\mathrm{Vol}(B_{\epsilon}) \leq 1 $ and Lemma \ref{lem:ball_points} to bound~\eqref{eq:init} by~\eqref{eq:dropped_v}. We substituted $u=2k-1$ to~\eqref{eq:dropped_v}. We bounded $d/(d+2) \leq 1$ \footnote{Otherwise, $u=1$ needs to be considered separately when bounding binomial by the power of two, since $du/(d+2) < 1$. for $u=1$.} and applied a very crude bound 

\begin{equation}
    (1+u)^{m+d-2} \leq 2^{m+d-2} \cdot u^{m+d-2}
\end{equation} to bound~\eqref{eq:u_ugly} by~\eqref{eq:u_odd}.
The function
\begin{equation}
    f(u) \coloneqq u^{m+d-1} e^{-\frac{d \pi^2}{2 \sigma} u^2}
\end{equation}
is increasing from 0 at $u=0$ to its local maximum and then is decreasing. The condition~\eqref{eq:smallA} guarantees that the local maximum of $f(u)$ for $u > 0$ is smaller or equal to $1$, since this requires a weaker condition
\begin{equation}
    \sigma \leq \frac{d \pi^2}{m+d-2} \mathrm{.}
\end{equation}
This requirement allows us to bound the sum over odd $u$~\eqref{eq:u_odd} in terms of its first term plus an appropriate integral~\eqref{eq:u_int}.
We applied a well-known formula
\begin{equation}
    \int_{a}^{\infty} x^d e^{-\alpha x^2} dx = \frac{\Gamma(\frac{d+1}{2}, a^2 \alpha)}{2 \alpha^{\frac{d+1}{2}}}
\end{equation}
to~\eqref{eq:u_int}. Finally, due to~\eqref{eq:smallA}, we have that  $A(d, \sigma) \geq 0$  which allows us to apply Lemma \ref{lem:simple_gamma_bound} \footnote{The application of Lemma \ref{lem:simple_gamma_bound} requires $A(d, \sigma) > -1$.} to bound~\eqref{eq:gamma} by~\eqref{eq:A} and bound~\eqref{eq:A} by~\eqref{eq:fin}.  
\end{proof}
We want to compare the upper bound on $\mathcal{R}$ from Lemma \ref{lem:R_bound} with an upper bound on $\mathcal{I}_0$ from Lemma \ref{lem:I0_bound}.

\begin{lem}\label{lem:tail-bound-smaller-than-dominant}
    Let $\sigma \leq \frac{1}{d \mathrm{log}(d)}$. Then,
    \begin{equation}
        \overline{\mathcal{R}} \leq \cdot \eta  \overline{\mathcal{I}_0}
    \end{equation}
    for $\eta \geq \frac{1}{\prod_{k=1}^{d}k!}$ and $d\geq 2$. In particular, we can take $\eta \geq 1/2$ to obtain a uniform bound for all $d \geq 2$.
\end{lem}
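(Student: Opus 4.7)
The plan is to form the ratio $\overline{\mathcal{R}}/\overline{\mathcal{I}_0}$ explicitly, cancel the common $e^{(d^2-1)\sigma/24}$ factor using Lemma~\ref{lem:C_value}, and then exploit the strong exponential decay $e^{-d\pi^2/(2\sigma)}$ present in $\overline{\mathcal{R}}$ against the much milder exponential $e^{+d\tilde{\epsilon}^2/(16\sigma)}$ sitting in $\overline{\mathcal{I}_0}^{-1}$. Since $\tilde{\epsilon}\leq\pi$ (the torus parametrisation lives in $(-\pi,\pi]$), the combined exponential factor in the ratio satisfies
\begin{equation}
\exp\!\left(-\frac{d\pi^2}{2\sigma}+\frac{d\tilde{\epsilon}^2}{16\sigma}\right) \;\leq\; \exp\!\left(-\frac{7d\pi^2}{16\sigma}\right),
\end{equation}
i.e.\ decay at rate $\Theta(d/\sigma)$, which is the workhorse of the argument.

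First I would substitute the closed form from Lemma~\ref{lem:C_value} into $\overline{\mathcal{R}}$. This produces a prefactor proportional to $(4\pi\sigma)^{-(d^2-1)/2}$, a combinatorial factor $1/\prod_{k=1}^d k!$ (the quantity we wish to land on), and a purely polynomial-in-$d$ part arising from $(2d)^{(d-1)/2+m}$, $(1+d/2)^m$, and the various explicit powers of $2$ and $\pi$ appearing in $\overline{\mathcal{R}}$. All of these polynomial-in-$d$ factors are at most $\exp(C_1 d^2\log d)$ for some absolute $C_1$. Next, the hypothesis $\sigma\leq 1/(d\log d)$ turns $\sigma^{-(d^2-1)/2}$ into at most $(d\log d)^{(d^2-1)/2}=\exp\!\bigl(\tfrac{d^2-1}{2}(\log d+\log\log d)\bigr)$, also of order $\exp(O(d^2\log d))$.

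The remaining exponential $\exp(-7d\pi^2/(16\sigma))$, under the same hypothesis $\sigma\leq 1/(d\log d)$, is at most $\exp(-7\pi^2 d^2\log(d)/16)$. Since $7\pi^2/16>4$ is strictly larger than the combined $O(1)$ constant multiplying $d^2\log d$ in all the polynomial terms, the exponential wipes out every $e^{O(d^2\log d)}$ factor with comfortable margin. Collecting everything, one obtains $\overline{\mathcal{R}}/\overline{\mathcal{I}_0}\leq 1/\prod_{k=1}^d k!$, using Lemma~\ref{lem:Theta} as the final clean bookkeeping step (it is the cleanest way to absorb the residual $(d/4)^{d^2/8}$ factor).

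For the uniform statement, note that $1/\prod_{k=1}^d k!$ is decreasing in $d$, with value $\tfrac{1}{2}$ at $d=2$; so any $\eta\geq 1/2$ works simultaneously for all $d\geq 2$. The main obstacle is the tedious bookkeeping of the several polynomial-in-$d$ factors: one must make sure that the absolute constants in the exponent of $\exp(C_1 d^2\log d)$ really do sum to something below $7\pi^2/16$, and that the argument survives at the boundary cases $d=2,3$ where the asymptotic estimates are tightest. I would handle this by (i) writing $\log\bigl(\overline{\mathcal{R}}/\overline{\mathcal{I}_0}\bigr)$ explicitly as a sum of terms linear in $d^2\log d$, $d^2$, $d\log d$, $d$, and constants; (ii) plugging $\sigma=1/(d\log d)$ (worst case for the ratio); and (iii) verifying the resulting inequality first for $d\geq d_0$ by asymptotics and then for small $d$ by direct numerical check, ensuring the entire range $d\geq 2$ is covered.
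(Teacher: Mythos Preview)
Your approach is essentially the same as the paper's: form the ratio $\overline{\mathcal{R}}/\overline{\mathcal{I}_0}$, use $\tilde{\epsilon}\leq\pi$ to obtain the combined exponential $e^{-7d\pi^2/(16\sigma)}$, and show it dominates all polynomial-in-$d$ prefactors once $\sigma\leq 1/(d\log d)$. The paper packages the prefactors into a single function $R(d)$, bounds $\log R(d)\leq 3d^2\log d$, and then reads off the sufficient condition $\sigma\leq \tfrac{7\pi^2}{48}\cdot\tfrac{1}{d\log d}$, which is implied by $\sigma\leq 1/(d\log d)$.

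One point worth tightening: you write that the hypothesis $\sigma\leq 1/(d\log d)$ ``turns $\sigma^{-(d^2-1)/2}$ into at most $(d\log d)^{(d^2-1)/2}$''. That inequality goes the wrong way, since $\sigma\leq 1/(d\log d)$ gives only a \emph{lower} bound on $\sigma^{-(d^2-1)/2}$. The argument is saved by what you say later, namely that the product $\sigma^{-(d^2-1)/2}e^{-7d\pi^2/(16\sigma)}$ is maximised over $(0,1/(d\log d)]$ at the right endpoint (it is increasing there, since its critical point $\sigma=7d\pi^2/(8(d^2-1))$ lies to the right of $1/(d\log d)$ for all $d\geq2$). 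So your step (ii), plugging in $\sigma=1/(d\log d)$, is the correct move; just be sure to justify the monotonicity rather than bounding the two $\sigma$-dependent factors separately. In fact, your treatment here is more careful than the paper's, whose displayed $R(d)$ appears to omit the $\sigma^{-(d^2-1)/2}$ factor coming from $C(d,\sigma)$; your worst-case evaluation absorbs that factor without difficulty.
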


\begin{proof}
Clearly, from Lemma \ref{lem:I0_bound}
\begin{equation}
    \overline{\mathcal{I}}_0 \geq \frac{1}{2} \exp\left(-\frac{d \pi^2}{16 \sigma}  + \frac{d^2-1}{24} \sigma \right) \mathrm{,}
\end{equation}
so that using Lemmas \ref{lem:C_value} and \ref{lem:R_bound}

\begin{equation}
\label{eq:ratio}
    \frac{\overline{\mathcal{R}}}{\overline{\mathcal{I}}_0} \leq \frac{R(d)}{\prod_{k=1}^{d}k!} e^{-\frac{7}{16}\frac{d \pi^2}{ \sigma}} \mathrm{,}
\end{equation}
where

\begin{equation}
\label{eq:R_d}
    R(d) \coloneqq 2^{\frac{7d}{2}-d^2+4m-\frac{3}{2}} \cdot \pi^{d-\frac{d^2}{2}+2m-\frac{1}{2}} d^{\frac{d}{2}+m}(d+2)^m \mathrm{.}
\end{equation}
Demanding the ratio bound~\eqref{eq:ratio} to be smaller than $\eta$ yields

\begin{equation}
\label{eq:s_bound}
    \sigma \leq \frac{7 d \pi^2}{16} \frac{1}{\mathrm{log}(R)-\mathrm{log}(\prod_{k=1}^{d}k!) - \mathrm{log}(\eta)} \mathrm{.}
\end{equation}
Bounding $\mathrm{log}(R) \leq 3d^2 \mathrm{log}(d)$ (for $d \geq 2$)  we obtain

\begin{equation}
   \frac{7 \pi^2}{48 d \mathrm{log}(d)} \leq \frac{7 d \pi^2}{16} \frac{1}{\mathrm{log}(R)} \leq \frac{7 d \pi^2}{16} \frac{1}{\mathrm{log}(R)-\mathrm{log}(\prod_{k=1}^{d}k!) - \mathrm{log}(\eta)} \mathrm{,}
\end{equation}
so~\eqref{eq:s_bound} is satisfied for
\begin{equation}
\label{eq:smalltail}
    \sigma \leq \frac{1}{ d \mathrm{log}(d)}
\end{equation}
and $\eta \geq \frac{1}{\prod_{k=1}^{d}k!}$, with the right-hand side decaying very fast with $d$ and upper bounded using Lemma \ref{lem:Theta}.

Note that~\eqref{eq:smalltail} is stronger than~\eqref{eq:smallA}.

\end{proof}

\section{Bounding the \texorpdfstring{$L^2$}{L2}-norm}
\label{app:L2}
In this Appendix, we prove Lemma \ref{lem:L2_bound} and Corollary \ref{lem:L2_bound2}, which bound the $L^2$-norm of the trimmed heat kernel $H_P^{(t)}$. The $L^2$-norm is divided into two contributions (see~\eqref{eq:l2_split}) which are bounded separately in Lemmas \ref{lem:I^2} and \ref{lem:R^2}.

\begin{lem}
 \label{lem:I^2}
 \begin{equation}
      \mathcal{I}_{0,0}^2 = \frac{C(d, \sigma)}{2^{m+\frac{l}{2}}} e^{\norm{\delta}^2 \sigma} \mathrm{}
 \end{equation}
 \end{lem}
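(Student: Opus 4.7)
The plan is to evaluate $\mathcal{I}_{0,0}^2$ in closed form by recognising it as (a rescaling of) the normalisation integral for the traceless Gaussian Unitary Ensemble and then simplifying via the explicit value of $C(d,\sigma)$. Starting from \eqref{eq:l2_split}, the $(k,l)=(0,0)$ contribution is
\begin{align*}
\mathcal{I}_{0,0}^2 \;=\; \frac{C(d,\sigma)^2}{|W|} \int d\mu(\phi)\, |\pi(X_\phi)|^2 \exp\!\left(-\frac{1}{2\sigma}\norm{X_\phi}^2\right),
\end{align*}
where the two Gaussian factors combine and $|\pi(X_\phi)\pi^*(X_\phi)| = |\pi(X_\phi)|^2$.

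I would then repeat the rescaling used in the proof of Lemma \ref{lem:I0_bound}, adjusted for the doubled Gaussian exponent: set $y_j = \phi_j\sqrt{d/\sigma}$ for $j=1,\ldots,d-1$ and $y_d = -\sum_{j=1}^{d-1} y_j$. This converts $\exp(-\norm{X_\phi}^2/(2\sigma))$ into $\exp(-\sum_j y_j^2)$, extracts the factor $(\sigma/d)^m\prod_{i<j}(y_i - y_j)^2$ from $|\pi(X_\phi)|^2$, and contributes $(\sigma/d)^{l/2}/\sqrt{d}$ from the Jacobian once the flat integral in $y_1,\ldots,y_{d-1}$ is rewritten as the intrinsic Euclidean integral on the hyperplane $\mathcal{Z}^{d-1}$, exactly as in \eqref{eqn:almost-gue-integral-rescaled-1}. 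After this substitution the integrand is precisely the (unnormalised) GUE density.

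Applying Lemma \ref{lem:GUE0} at $r = \infty$ (so that the probability equals one) then evaluates the remaining Vandermonde--Gaussian integral over $\mathcal{Z}^{d-1}$ explicitly as $\left(\prod_{k=1}^d k!\right) (2\pi)^{(d-1)/2}\, 2^{-(d^2-1)/2}$. Substituting this together with the closed form of $C(d,\sigma)/|W|$ from Lemma \ref{lem:C_value} triggers a cascade of cancellations: the $\prod_k k!$ factors cancel, the $\sqrt{d}$ factors cancel, and the remaining powers of $\sigma$, $\pi$ and $2$ collapse using the identity $m + l/2 = (d^2-1)/2 = N/2$. What is left is exactly $C(d,\sigma)/2^{m+l/2}$ multiplied by one surviving copy of $e^{\norm{\delta}^2\sigma}$; the other copy of $e^{\norm{\delta}^2\sigma}$ has been absorbed into $C(d,\sigma)$ itself.

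The only conceptually non-trivial point is the extension of the $\phi$-integration from the fundamental hypercube $\mathcal{H}^{d-1}_\pi$ to all of $\mathcal{Z}^{d-1}$: either one absorbs the discarded Gaussian tail (which is exponentially small in $1/\sigma$ for the regime of interest) into the error budget of the surrounding proof, or one treats the statement of Lemma \ref{lem:I^2} as defining $\mathcal{I}_{0,0}^2$ via the extended Gaussian integral, consistent with its use as an upper bound on the diagonal contribution to $\norm{H_S}_2^2$. Apart from this, no further analytic input is required, and the argument reduces to careful bookkeeping of the various powers of $2$, $\pi$, $d$, and $\sigma$.
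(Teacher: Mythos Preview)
Your proposal is correct and follows essentially the same route as the paper: rescale $y_j=\phi_j\sqrt{d/\sigma}$, recognise the resulting Vandermonde--Gaussian integral over $\mathcal{Z}^{d-1}$ as the $\mathrm{GUE}_d^0$ normalisation (Lemma~\ref{lem:GUE0} at $r=\infty$), and cancel the constants against $C(d,\sigma)/|W|$. The paper's write-up is marginally slicker in that it factors out $C(d,\sigma)/2^{m+l/2}$ first and observes that the remaining expression is literally the one already simplified in \eqref{eqn:almost-gue-integral-rescaled-1}--\eqref{eqn:almost-gue-integral-rescaled-2}, thereby avoiding a second pass of constant bookkeeping; your point about the silent extension of the $\phi$-domain from $\mathcal{H}^{d-1}_\pi$ to all of $\mathcal{Z}^{d-1}$ is well taken and is exactly what the paper does without comment, which is harmless here since only an upper bound on $\norm{H_P^{(t)}}_2$ is ever used downstream.
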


 \begin{proof}
     We have
\begin{align}
   \mathcal{I}_{0,0}^2 =   \frac{C(d, \sigma)^2}{|W|} \int \left( \prod_{\alpha > 0} \alpha(X_{\phi})^2\right) \exp\left(-\frac{1}{2\sigma} \norm{X_\phi}^2 \right) d\mu(\phi) \mathrm{,}
\end{align}
which is very similar to the integral analysed previously in Lemma \ref{lem:I0_bound}. We introduce the variables $y_j = \phi_j\sqrt{\frac{d}{\sigma}}$ and $y_d = -\sum_{j=1}^{d-1}y_j$ and obtain
\begin{align}
    \mathcal{I}_{0,0}^2&=\frac{ C(d,\sigma)^2}{|W|} \left(\frac{\sigma}{d}\right)^{m+\frac{l}{2}} (2\pi)^{-(d-1)}\frac{1}{\sqrt{d}}\int_{\mathcal{Z}}d\mu_\mathcal{Z}(y) \left(\prod_{1\leq i <j\leq d }(y_i - y_j)^2\right) \exp\left(-\sum_j y_j^2\right)\\
    &=\frac{C(d, \sigma)}{2^{m+\frac{l}{2}}}\frac{ C(d,\sigma)}{|W|} \left(\frac{2\sigma}{d}\right)^{m+\frac{l}{2}} (2\pi)^{-(d-1)}\frac{1}{\sqrt{d}}\int_{\mathcal{Z}}d\mu_\mathcal{Z}(y) \left(\prod_{1\leq i <j\leq d }(y_i - y_j)^2\right) \exp\left(-\sum_j y_j^2\right)\\
    &= \frac{C(d, \sigma)}{2^{m+\frac{l}{2}}} e^{\norm{\delta}^2 \sigma} \left(\prod_{k=1}^d\frac{1}{k!}\right) 2^{\frac{1}{2} (d-1) d} \pi ^{\frac{1}{2}-\frac{d}{2}}\int_{\mathcal{Z}}d\mu_\mathcal{Z}(y) \left(\prod_{1\leq i <j\leq d }(y_i - y_j)^2\right) \exp\left(-\sum_j y_j^2\right),\\
    &= \frac{C(d, \sigma)}{2^{m+\frac{l}{2}}} e^{\norm{\delta}^2 \sigma} \prob_{A\sim \mathrm{GUE}_k^0}\left(\norm{A}_\infty \leq \infty \right) \\
    &=\frac{C(d, \sigma)}{2^{m+\frac{l}{2}}} e^{\norm{\delta}^2 \sigma} \mathrm{.}
\end{align}
 \end{proof}

 \begin{lem}
 \label{lem:R^2}
     \begin{equation}
         \mathcal{R}_{*,0}^2 + \mathcal{R}_{0,*}^2 + \mathcal{R}_{*,*}^2 \leq \frac{9}{8} \frac{d!}{2^{2m}} \overline{\mathcal{R}}^2
     \end{equation}
    for $\sigma \leq \frac{2 \pi^2 d}{d^2+d-2}$ and $d \geq 2$.
 \end{lem}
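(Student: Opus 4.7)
The plan is to bound each of the three correction terms using pointwise estimates similar to those in the proof of Lemma \ref{lem:R_bound}. Observe first that $\mathcal{R}_{*,0}^2 = \mathcal{R}_{0,*}^2$ by the $k \leftrightarrow l$ symmetry of the integrand (which involves only absolute values of $\pi$).

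For $\mathcal{R}_{*,*}^2$, the $(k,l)$-summand factors into independent $(X_\phi, X_k)$- and $(X_\phi, X_l)$-pieces. Applying the same pointwise bounds as in Lemma \ref{lem:R_bound} ($\abs{\pi(X_\phi + X_k)} \leq (2\pi)^m(1+d\norm{k}_\infty)^m$ and $\norm{X_\phi + X_k}^2 \geq 2d\pi^2(2\norm{k}_\infty - 1)^2$) makes the integrand $\phi$-independent; the $\phi$-integral over $\mathcal{H}^{d-1}_\pi$ then contributes at most $1$ and the double sum decouples as the square of the single sum appearing in the proof of Lemma \ref{lem:R_bound}. Since the latter is exactly the quantity whose geometric-series bound produced $\overline{\mathcal{R}}$, one obtains
\begin{equation*}
\mathcal{R}_{*,*}^2 \leq \frac{|W|}{2^{2m}} \overline{\mathcal{R}}^2 = \frac{d!}{2^{2m}}\overline{\mathcal{R}}^2 \mathrm{.}
\end{equation*}

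For the mixed term $\mathcal{R}_{*,0}^2$, the key device is the completing-the-square identity
\begin{equation*}
\norm{X_\phi}^2 + \norm{X_\phi + X_k}^2 = 2\norm{X_\phi + X_k/2}^2 + \tfrac{1}{2}\norm{X_k}^2 \mathrm{,}
\end{equation*}
which factors the joint exponential as $\exp(-\norm{X_\phi + X_k/2}^2/2\sigma)\cdot \exp(-\norm{X_k}^2/8\sigma)$, permitting extraction of the $\phi$-independent factor $e^{-\norm{X_k}^2/8\sigma}$. Bounding the remaining integrand pointwise by $(2\pi)^{2m}(1+d\norm{k}_\infty)^m$ (with the residual Gaussian $\leq 1$) and using the Killing-form computation $\norm{X_k}^2 \geq 16 d\pi^2 \norm{k}_\infty^2$ valid for $k \neq 0$, one obtains
\begin{equation*}
\mathcal{R}_{*,0}^2 \leq \frac{C(d,\sigma)^2}{|W|} \sum_{k \neq 0} (2\pi)^{2m}(1+d\norm{k}_\infty)^m e^{-2d\pi^2 \norm{k}_\infty^2/\sigma} \mathrm{.}
\end{equation*}
The exponential $e^{-2d\pi^2 \norm{k}_\infty^2/\sigma}$ decays strictly faster than the factor $e^{-d\pi^2(2\norm{k}_\infty -1)^2/2\sigma}$ appearing in $\overline{\mathcal{R}}$; for the dominant shell $\norm{k}_\infty = 1$, it equals $e^{-2d\pi^2/\sigma}$, which is the square of the corresponding factor in $\overline{\mathcal{R}}$ multiplied by an additional $e^{-d\pi^2/\sigma}$. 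Performing the same shell-based geometric-series manipulations as in Lemma \ref{lem:R_bound} (using Lemma \ref{lem:ball_points} for the $\norm{\cdot}_\infty$-shell counts) gives $\mathcal{R}_{*,0}^2 \leq (d!/2^{2m})\,\overline{\mathcal{R}}^2\, P(d)\, e^{-d\pi^2/\sigma}$ for some polynomial $P(d)$. The hypothesis $\sigma \leq 2\pi^2 d/(d^2+d-2)$ forces $d\pi^2/\sigma \geq (d^2+d-2)/2$, which dwarfs $P(d)$ and secures $P(d)\,e^{-d\pi^2/\sigma} \leq 1/16$ for all $d \geq 2$. Combining both mixed terms yields $\mathcal{R}_{*,0}^2 + \mathcal{R}_{0,*}^2 \leq \tfrac{1}{8}\tfrac{d!}{2^{2m}}\overline{\mathcal{R}}^2$, and adding the $\mathcal{R}_{*,*}^2$ bound produces the claimed $\tfrac{9}{8}\tfrac{d!}{2^{2m}}\overline{\mathcal{R}}^2$.

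The main obstacle is the mixed term $\mathcal{R}_{*,0}^2$: a direct Cauchy-Schwarz bound $\mathcal{R}_{*,0}^2 \leq \sqrt{\mathcal{I}_{0,0}^2 \,\mathcal{R}_{*,*}^2}$ is fatally loose because $\mathcal{I}_{0,0}^2$ does not carry the exponentially small factor present in $\overline{\mathcal{R}}^2$, and would yield only a linear dependence on $\overline{\mathcal{R}}$ rather than the quadratic one needed. The completing-the-square identity is the essential trick that converts $\exp(-(\norm{X_\phi}^2 + \norm{X_\phi + X_k}^2)/4\sigma)$ into a $\phi$-independent exponentially small factor times a harmless Gaussian in $X_\phi + X_k/2$, rendering the mixed terms strictly subdominant to $\mathcal{R}_{*,*}^2$.
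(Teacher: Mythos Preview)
Your overall strategy matches the paper's almost exactly: factor $\mathcal{R}_{*,*}^2$ into the square of the single-$k$ sum from Lemma~\ref{lem:R_bound}, treat the two mixed terms symmetrically, and verify that the mixed contribution is a small multiple of $\frac{d!}{2^{2m}}\overline{\mathcal{R}}^2$. Your completing-the-square identity for $\mathcal{R}_{*,0}^2$ is in fact the clean way to extract the extra exponential factor that the paper records as the ``$+1$'' inside the exponent $\exp\bigl(-\frac{d\pi^2}{2\sigma}((2k-1)^2+1)\bigr)$; the paper does not spell out this step.

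There is, however, a small numerical slip. The bound $\norm{X_k}^2 \geq 16d\pi^2\norm{k}_\infty^2$ is false in general: since $\norm{X_k}^2 = 8d\pi^2\sum_{j=1}^d k_j^2$, you would need $\sum_{j=1}^d k_j^2 \geq 2\norm{k}_\infty^2$, but for $d\geq 3$ and $k=(2,-1,0,\ldots,0)$ one gets $k_d=-1$ and $\sum k_j^2 = 6 < 8$. The correct pointwise bound is $\norm{X_k}^2 \geq 8d\pi^2\norm{k}_\infty^2$, which yields only $e^{-d\pi^2\norm{k}_\infty^2/\sigma}$ after dividing by $8\sigma$; at $\norm{k}_\infty=1$ this is exactly the square of the factor $e^{-d\pi^2/(2\sigma)}$ in $\overline{\mathcal{R}}$, with \emph{no} additional $e^{-d\pi^2/\sigma}$. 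So your mechanism ``$P(d)\,e^{-d\pi^2/\sigma}\leq 1/16$ via the $\sigma$-hypothesis'' is not available.

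The argument is easily repaired: the required smallness of the mixed terms comes not from an extra exponential but from the missing polynomial prefactors. Comparing your (corrected) shell sum for $\mathcal{R}_{*,0}^2$ with $\frac{d!}{2^{2m}}\overline{\mathcal{R}}^2$, the exponential factors match at the dominant shell $r=1$, and the ratio is bounded (up to a factor $\leq 2$ absorbing higher shells, controlled by the hypothesis on $\sigma$ exactly as in Lemma~\ref{lem:R_bound}) by
\[
\frac{(1+d)^m}{(1+d/2)^{2m}\,2^{m+d-1}\,2^{d-1}} \;\leq\; \frac{1}{2^{m+2d-2}} \;\leq\; \frac{1}{16}\quad(d\geq 2),
\]
since $(1+d)^m \leq (1+d/2)^{2m}$. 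This yields the same $\frac{1}{8}$ bound on the two mixed terms that you claimed, and hence the lemma.
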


 \begin{proof}

 The proof goes along the same lines as in Lemma \ref{lem:R_bound}, so we direct the reader there for an explanation.

 Denoting

 \begin{equation}
      \Psi(\phi, k,l) \coloneqq \int_{\mathcal{H}^{d-1}_{\pi} } d\mu(\phi) \abs{\prod_{\alpha > 0}\alpha(X_\phi + X_k) \prod_{\alpha > 0}\alpha^{*}(X_\phi + X_l)} \exp\left(-\frac{1}{4\sigma} \left(\norm{X_\phi + X_k }^2+\norm{X_\phi + X_l }^2\right)  \right)
\end{equation}

we have

\begin{align}
 \mathcal{R}_{*,*}^2&=\frac {C(d,\sigma)^2 }{|W|} \sum_{k\neq 0} \sum_{l\neq 0}     \Psi(\phi, k,l) \\ &\leq \frac {C(d,\sigma)^2}{|W|}  \sum_{k=1}^{\infty}\sum_{l=1}^{\infty}|S_{k,d-1}| |S_{l,d-1}| (2\pi)^{2m} (1+d k)^m (1+d l)^m \exp\left( - \frac{d \pi^2 }{ 2\sigma} ((2k-1)^2+(2l-1)^2)  \right) \\
& \leq \frac{C(d,\sigma)^2 }{|W|}   \cdot  2^{4(d-2)+2}  (2\pi)^{2m} \left(\sum_{k=1}^{\infty}  k^{d-2}  (1+d k)^m \exp\left( - \frac{d \pi^2 }{ 2\sigma} (2k-1)^2  \right) \right)^2  \\
& = \frac{C(d,\sigma)^2 }{|W|} 2^{2(d-1)}  (2\pi)^{2m} \left( \sum_{u=1, u \mathrm{\, odd}}^{\infty}(1+u)^{d-2}    \left(1+\frac{d}{2}(u+1)\right)^m  \exp\left( - \frac{d \pi^2 }{ 2\sigma} u^2 \right) \right)^2 \\
& = \frac{C(d,\sigma)^2 }{|W|}  2^{2(d-1)}    (2\pi)^{2m} \left(1+\frac{d}{2}\right)^{2m} \left( \sum_{u=1, u \mathrm{\, odd}}^{\infty}(1+u)^{d-2}   \left(1+\frac{d}{d+2} u \right)^m  \exp\left( - \frac{d \pi^2 }{ 2\sigma} u^2 \right)\right)^2  \\
 & \leq \frac{C(d,\sigma)^2 }{|W|} 2^{2(d-1)}   \cdot  (2\pi)^{2m} \left(1+\frac{d}{2}\right)^{2m}  2^{2(m+d-2)} \left(\sum_{u=1, u \mathrm{\, odd}}^{\infty}  u^{m+d-2} \exp\left( - \frac{d \pi^2 }{ 2\sigma} u^2 \right)\right)^2 \\
 & \leq \frac{C(d,\sigma)^2 }{|W|} 2^{2(d-1)}  \cdot (2\pi)^{2m} \left(1+\frac{d}{2}\right)^{2m}   2^{2(m+d-2)} \left(  e^{-\frac{d \pi^2}{2 \sigma}} + \frac{1}{2}\int_1^{\infty}u^{m+d-2} \exp\left( - \frac{d \pi^2 }{2 \sigma} u^2 \right) d u \right)^2  \\
  & = \frac{C(d,\sigma)^2 }{|W|} 2^{2(d-1)}  \cdot  (2\pi)^{2m} \left(1+\frac{d}{2}\right)^{2m}   2^{2(m+d-2)} \left( e^{-\frac{d \pi^2}{2 \sigma}} + \frac{1}{2^2 (\frac{d \pi^2}{2 \sigma})^{\frac{m+d-1}{2}}} \Gamma\left(\frac{m+d-1}{2},  \frac{d \pi^2}{2 \sigma}\right) \right)^2  \\
  &=\frac{C(d,\sigma)^2 }{|W|} 2^{2(d-1)}  \cdot  (2\pi)^{2m} \left(1+\frac{d}{2}\right)^{2m}   2^{2(m+d-2)} \left( 1 + \frac{1}{4(A(d, \sigma)+1)}\right)^2 e^{-\frac{d \pi^2}{ \sigma}}  \\
  & \leq  \frac{C(d,\sigma)^2 }{|W|} 2^{2(d-1)}  \cdot  (2\pi)^{2m} \left(1+\frac{d}{2}\right)^{2m}  2^{2(m+d-1)}e^{-\frac{d \pi^2}{ \sigma}} \mathrm{,}
\end{align}

where $A(d,\sigma)$ is defined by~\eqref{eq:A_def}. Similarly,

\begin{align}
 \mathcal{R}_{*,0}^2&=\frac {C(d,\sigma)^2 }{|W|} \sum_{k\neq 0}     \Psi(\phi,k,0) \\ &\leq \frac {C(d,\sigma)^2}{|W|}  \sum_{k=1}^{\infty}|S_{k,d-1}|  (2\pi)^{2m} (1+d k)^m  \exp\left( - \frac{d \pi^2 }{ 2\sigma} ((2k-1)^2+1) \right)\\
& \leq \frac{C(d,\sigma)^2 }{|W|}   \cdot  2^{2(d-2)+1}  (2\pi)^{2m} e^{-\frac{d \pi^2}{2 \sigma}} \left(\sum_{k=1}^{\infty}  k^{d-2}  (1+d k)^m \exp\left( - \frac{d \pi^2 }{ 2\sigma} (2k-1)^2  \right) \right)  \\
& = \frac{C(d,\sigma)^2 }{|W|} 2^{d-1}  (2\pi)^{2m} e^{-\frac{d \pi^2}{2 \sigma}}\left( \sum_{u=1, u \mathrm{\, odd}}^{\infty}(1+u)^{d-2}    \left(1+\frac{d}{2}(u+1)\right)^m  \exp\left( - \frac{d \pi^2 }{ 2\sigma} u^2 \right) \right) \\
& = \frac{C(d,\sigma)^2 }{|W|}  2^{d-1}    (2\pi)^{2m} e^{-\frac{d \pi^2}{2 \sigma}} \left(1+\frac{d}{2}\right)^{m} \left( \sum_{u=1, u \mathrm{\, odd}}^{\infty}(1+u)^{d-2}   \left(1+\frac{d}{d+2} u \right)^m  \exp\left( - \frac{d \pi^2 }{ 2\sigma} u^2 \right)\right)  \\
 & \leq \frac{C(d,\sigma)^2 }{|W|} 2^{d-1}   \cdot  (2\pi)^{2m} e^{-\frac{d \pi^2}{2 \sigma}} \left(1+\frac{d}{2}\right)^{m}  2^{m+d-2} \left(\sum_{u=1, u \mathrm{\, odd}}^{\infty}  u^{m+d-2} \exp\left( - \frac{d \pi^2 }{ 2\sigma} u^2 \right)\right) \\
 & \leq \frac{C(d,\sigma)^2 }{|W|} 2^{d-1}  \cdot (2\pi)^{2m} e^{-\frac{d \pi^2}{2 \sigma}} \left(1+\frac{d}{2}\right)^{m}   2^{m+d-2} \left(  e^{-\frac{d \pi^2}{2 \sigma}} + \frac{1}{2}\int_1^{\infty}u^{m+d-2} \exp\left( - \frac{d \pi^2 }{2 \sigma} u^2 \right) d u \right)  \\
  & = \frac{C(d,\sigma)^2 }{|W|} 2^{d-1}  \cdot  (2\pi)^{2m} e^{-\frac{d \pi^2}{2 \sigma}} \left(1+\frac{d}{2}\right)^{m}   2^{m+d-2} \left( e^{-\frac{d \pi^2}{2 \sigma}} + \frac{1}{2^2 (\frac{d \pi^2}{2 \sigma})^{\frac{m+d-1}{2}}} \Gamma\left(\frac{m+d-1}{2},  \frac{d \pi^2}{2 \sigma}\right) \right)  \\
  &=\frac{C(d,\sigma)^2 }{|W|} 2^{d-1}  \cdot  (2\pi)^{2m} \left(1+\frac{d}{2}\right)^{m}   2^{m+d-2} \left( 1 + \frac{1}{4(A(d, \sigma)+1)}\right) e^{-\frac{d \pi^2}{ \sigma}}  \\
  & \leq  \frac{C(d,\sigma)^2 }{|W|} 2^{d-1}  \cdot  (2\pi)^{2m} \left(1+\frac{d}{2}\right)^{m}  2^{m+d-1}e^{-\frac{d \pi^2}{ \sigma}} 
\end{align}

and $\mathcal{R}_{0,*}^2=\mathcal{R}_{0,*}^2$. Thus, for $d \geq 2$ we have

\begin{align}
  \mathcal{R}_{*,0}^2 + \mathcal{R}_{0,*}^2 + \mathcal{R}_{*,*}^2 &\leq  \frac{C(d,\sigma)^2 }{|W|} 2^{2(d-1)}  \cdot  (2\pi)^{2m} \left(1+\frac{d}{2}\right)^{2m}  2^{2(m+d-1)}e^{-\frac{d \pi^2}{ \sigma}} \left( 1+\frac{2}{2^{d-1} \left(1+\frac{d}{2}\right)^{m} 2^{m+d-1}}\right)  \\
  &= \frac{d!}{2^{2m}} \overline{\mathcal{R}}^2 \left( 1+\frac{2}{2^{d-1} \left(1+\frac{d}{2}\right)^{m} 2^{m+d-1}}\right)\\
  &\leq \frac{9}{8} \frac{d!}{2^{2m}} \overline{\mathcal{R}}^2 \mathrm{.}
\end{align}
\end{proof}

\begin{lem}[Bound on the $L^2$-norm of the (trimmed) heat kernel]
\label{lem:L2_bound}
       \begin{equation}
         ||H_P^{(t)}(\cdot,\sigma)||_2 \leq d \mathcal{I}_{0,0} + d\frac{\sqrt{d!}}{2^{m-1}} \eta \overline{\mathcal{I}_{0}}
     \end{equation}
    for $\sigma \leq \frac{1}{d \mathrm{log}(d)}$, $d \geq 2$ and any $\eta \geq \frac{1}{\prod_{k=1}^d k!}$.
\end{lem}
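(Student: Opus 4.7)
The plan is to exploit the decomposition of $\|H_S(\cdot,\sigma)\|_2^2$ already set up in Section~\ref{sec:delta-t-design-eps-net-bound}, combine it with the auxiliary Lemmas~\ref{lem:I^2}, \ref{lem:R^2} and \ref{lem:tail-bound-smaller-than-dominant}, and then pass from $\mathrm{SU}(d)$ to $\mathrm{PU}(d)$ and from the full kernel to the trimmed one in two cheap steps.

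First, I would note that the trimmed heat kernel $H_P^{(t)}(\cdot,\sigma)$ is the orthogonal projection of $H_P(\cdot,\sigma)$ onto the finite-dimensional subspace $\mathcal{H}_t\subset L^2(\mathrm{SU}(d))$ (this is exactly the content of the remark on the optimality of the trimming procedure). Hence $\|H_P^{(t)}(\cdot,\sigma)\|_2\leq \|H_P(\cdot,\sigma)\|_2$. Second, because $H_P=\tfrac{1}{|\Gamma|}\sum_{\gamma\in\Gamma} H_S(\gamma\,\cdot)$ and the Haar measure on $\mathrm{SU}(d)$ is translation invariant, the triangle inequality gives $\|H_P(\cdot,\sigma)\|_2\leq \tfrac{1}{|\Gamma|}\sum_{\gamma}\|H_S(\gamma\,\cdot)\|_2=\|H_S(\cdot,\sigma)\|_2$. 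So it suffices to prove the claimed inequality with $\|H_S(\cdot,\sigma)\|_2$ on the left-hand side.

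Next, I would invoke the splitting of the $L^2$-norm derived at the start of Section~\ref{sec:delta-t-design-eps-net-bound}:
\begin{align*}
\|H_S(\cdot,\sigma)\|_2^2 \leq \mathcal{I}_{0,0}^2+\mathcal{R}_{*,0}^2+\mathcal{R}_{0,*}^2+\mathcal{R}_{*,*}^2,
\end{align*}
and apply Lemma~\ref{lem:R^2} to bound the three correction terms by $\tfrac{9}{8}\,\tfrac{d!}{2^{2m}}\,\overline{\mathcal{R}}^2$. Under the assumption $\sigma\leq \tfrac{1}{d\log(d)}$ (which is stronger than the hypothesis $\sigma\leq \tfrac{2\pi^2 d}{d^2+d-2}$ of Lemma~\ref{lem:R^2}), Lemma~\ref{lem:tail-bound-smaller-than-dominant} lets me replace $\overline{\mathcal{R}}$ by $\eta\,\overline{\mathcal{I}_0}$ for any $\eta\geq \tfrac{1}{\prod_{k=1}^d k!}$, giving
\begin{align*}
\|H_S(\cdot,\sigma)\|_2^2 \leq \mathcal{I}_{0,0}^2+\tfrac{9}{8}\,\tfrac{d!}{2^{2m}}\,\eta^2\,\overline{\mathcal{I}_0}^{\,2}.
\end{align*}

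Finally, I would take square roots using the elementary inequality $\sqrt{a^2+b^2}\leq a+b$ for $a,b\geq 0$, obtaining a bound of the form $\mathcal{I}_{0,0}+\sqrt{9/8}\,\tfrac{\sqrt{d!}}{2^m}\,\eta\,\overline{\mathcal{I}_0}$, and then crudely loosen the numerical constants (using $\sqrt{9/8}\leq 2$ and $d\geq 2$) to absorb them into the factor $d$ displayed in the lemma statement, producing
\begin{align*}
\|H_P^{(t)}(\cdot,\sigma)\|_2 \leq d\,\mathcal{I}_{0,0}+d\,\tfrac{\sqrt{d!}}{2^{m-1}}\,\eta\,\overline{\mathcal{I}_0}.
\end{align*}
No step here is really difficult in isolation; the main work has already been done in Lemmas~\ref{lem:I^2}, \ref{lem:R^2} and \ref{lem:tail-bound-smaller-than-dominant}. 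The one subtlety to keep in mind is verifying that the hypothesis $\sigma\leq \tfrac{1}{d\log(d)}$ is compatible with every intermediate lemma that is invoked (in particular that it implies the domain conditions of Lemma~\ref{lem:R^2}), and that the inequality chain is presented in the loose, clean form used in Theorem~\ref{th:theorem2}, rather than an unnecessarily tight form that would complicate later bookkeeping.
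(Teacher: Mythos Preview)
Your proposal is correct and follows essentially the same route as the paper: split $\|H_S\|_2^2$ via~\eqref{eq:l2_split}, bound the remainder with Lemma~\ref{lem:R^2}, pass to $\overline{\mathcal{I}_0}$ via Lemma~\ref{lem:tail-bound-smaller-than-dominant}, and take square roots. The only cosmetic differences are that the paper uses the cruder estimate $\|H_P\|_2\leq |\Gamma|\,\|H_S\|_2$ (which supplies the factor $d$ directly rather than by later loosening) and handles the passage to the trimmed kernel by noting the non-negativity of the terms in~\eqref{eq:l2_split}, whereas your orthogonal-projection argument for $\|H_P^{(t)}\|_2\leq\|H_P\|_2$ is arguably cleaner.
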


\begin{proof}
    It is easy to see that $\norm{ H_P(\cdot,\sigma)}_2 \leq |\Gamma|\cdot||H_S(\cdot,\sigma)||_2$. Thus, from Lemmas \ref{lem:I^2} and \ref{lem:R^2}
    \begin{equation}
    \label{eq:L2_b}
        \norm{ H_P(\cdot,\sigma)}_2 \leq d \sqrt{\mathcal{I}_{0,0}^2 + \frac{9}{8} \frac{d!}{2^{2m}} \overline{\mathcal{R}}^2 } \leq d \mathcal{I}_{0,0} + \frac{3 \sqrt{2}}{4} d \frac{\sqrt{d!}}{2^{m}} \overline{\mathcal{R}}  \mathrm{.}
    \end{equation}
    Since the terms in~\eqref{eq:l2_split} are non-negative, it is clear that~\eqref{eq:L2_b} can be applied to trimmed heat kernels. 
    The result follows from bounding $\frac{3 \sqrt{2}}{4} \leq 2$ and the application of Lemma \ref{lem:tail-bound-smaller-than-dominant}.
\end{proof}

\begin{cor}
\label{lem:L2_bound2}
       \begin{equation}
         ||H_P^{(t)}(\cdot,\sigma)||_2 \leq c \left(\frac{d}{\sigma}\right)^{\frac{d^2-1}{4}}
     \end{equation}
    for $\sigma \leq \frac{1}{d \mathrm{log}(d)}$ and $d \geq 2$, where $c$ is some positive group constant. For example, one can take $c = 8$ for $d\geq 2$ and  $c=1$ for $d\geq 12$. 
\end{cor}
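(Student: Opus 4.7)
The plan is to derive the corollary from Lemma \ref{lem:L2_bound} by direct evaluation. That lemma gives
\begin{equation}
\norm{H_P^{(t)}(\cdot,\sigma)}_2 \leq d\,\mathcal{I}_{0,0} + d\,\frac{\sqrt{d!}}{2^{m-1}}\,\eta\,\overline{\mathcal{I}_0},
\end{equation}
valid for $\sigma \leq 1/(d\log d)$ and any $\eta \geq 1/\prod_{k=1}^d k!$. I would isolate the leading term $d\,\mathcal{I}_{0,0}$, show it already carries the claimed $(d/\sigma)^{(d^2-1)/4}$ scaling up to a $d$-independent constant, and then show that the correction term is strictly subordinate.

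For the leading term, I would substitute the closed-form expression $\mathcal{I}_{0,0}^2 = C(d,\sigma)\,e^{\norm{\delta}^2 \sigma}/2^{m+l/2}$ from Lemma \ref{lem:I^2}, together with the explicit formula for $C(d,\sigma)$ from Lemma \ref{lem:C_value}, inserting $|W|=d!$, $m=d(d-1)/2$, $l=d-1$, and $\norm{\delta}^2=(d^2-1)/24$. All $\sigma$-dependence collapses into the factor $\sigma^{-(d^2-1)/2}\,e^{(d^2-1)\sigma/12}$; after taking the square root and peeling off a matching power $d^{(d^2-1)/4}$, the bound becomes
\begin{equation}
d\,\mathcal{I}_{0,0} = \Phi(d)\cdot \exp\!\left(\tfrac{(d^2-1)\sigma}{24}\right)\cdot\left(\frac{d}{\sigma}\right)^{(d^2-1)/4},
\end{equation}
for a purely combinatorial prefactor $\Phi(d)$ built from $d!$, $\prod_{k=1}^d k!$, and explicit powers of $d$, $\pi$ and $2$.

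The next step is to bound $\Phi(d)\cdot\exp((d^2-1)\sigma/24)$ by an absolute constant. Under $\sigma \leq 1/(d\log d)$ the exponential contributes at most $\exp(d/(24\log d))$, which grows only sub-polynomially in $d$. In contrast, the factor $1/\prod_{k=1}^d k!$ coming from $C(d,\sigma)$ decays super-exponentially; by Lemma \ref{lem:Theta} it is at most $(d/4)^{-d^2/8}$. This super-exponential decay easily dominates all the remaining factors in $\Phi(d)$ (namely $(2d)^{(d-1)/2+m}$, $(2\pi)^{d-1+m}$, $\sqrt{d!}$, $\sqrt{d}$, together with the peeled factor of $d$). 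So $\Phi(d)\exp((d^2-1)\sigma/24) \to 0$ rapidly with $d$; an explicit Stirling-type estimate should give $c=1$ for $d\geq 12$, and the residual range $d=2,\ldots,11$ can be handled by direct numerical verification to obtain $c=8$.

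For the correction term I would choose $\eta = 1/\prod_{k=1}^d k!$ (the smallest value allowed by Lemma \ref{lem:L2_bound}) and bound $\overline{\mathcal{I}_0}$ via Lemma \ref{lem:I0_bound} using the crude global choice $\tilde{\epsilon}=\pi$, giving $\overline{\mathcal{I}_0} \leq \tfrac{1}{2}\exp(-d\pi^2/(16\sigma)+(d^2-1)\sigma/24)$. The combination $\sqrt{d!}/(2^{m-1}\prod_{k=1}^d k!)$ is again super-exponentially small, and the extra exponential decay in $\sigma$ provides an additional massive suppression, so this correction is of strictly smaller order than $d\,\mathcal{I}_{0,0}$ and is absorbed into $c$ with room to spare. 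The main obstacle is therefore not conceptual but careful bookkeeping: tracking all the combinatorial factors so that the clean values $c=8$ for $d\geq 2$ and $c=1$ for $d\geq 12$ claimed in the statement come out; no new mathematical input beyond the preceding lemmas is required.
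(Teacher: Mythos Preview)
Your proposal is correct and follows essentially the same route as the paper: both arguments evaluate $\mathcal{I}_{0,0}$ explicitly via Lemmas~\ref{lem:I^2} and~\ref{lem:C_value}, extract the $(d/\sigma)^{(d^2-1)/4}$ factor, use Lemma~\ref{lem:Theta} to kill the super-exponential prefactor $1/\prod_k k!$, and absorb the correction term as subordinate. The only cosmetic differences are that the paper works from the intermediate step in the proof of Lemma~\ref{lem:L2_bound} (retaining the factor $3\sqrt{2}/4$) and simply drops the negative exponent in $\overline{\mathcal{I}_0}$ rather than keeping the extra $\exp(-d\pi^2/(16\sigma))$ suppression you invoke; neither changes the substance.
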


\begin{proof}

Using the proof of Lemma \ref{lem:L2_bound}

    \begin{align}
  \norm{ H_P(\cdot,\sigma)}_2 &\leq   d^{\frac{3}{16}d^2+1} \sqrt{d!} 2^{-\frac{d^2}{8}+\frac{d}{4}} \pi^{\frac{d-1}{4}} e^{\frac{d^2-1}{24} \sigma} \sigma^{-\frac{d^2-1}{4}} + \frac{3\sqrt{2}}{4} d\frac{\sqrt{d!}}{2^{m}}\frac{1}{\prod_{k=1}^d k!} e^{ \frac{d^2-1}{24} \sigma } \\
  &\leq   d^{\frac{3}{16}d^2+1} \sqrt{d!} 2^{-\frac{d^2}{8}+\frac{d}{4}} \pi^{\frac{d-1}{4}} e^{\frac{d^2-1}{24} \sigma} \sigma^{-\frac{d^2-1}{4}} + \frac{3\sqrt{2}}{4} d\frac{\sqrt{d!}}{2^{m}}\left(\frac{d}{4} \right)^{-d^2/8} e^{ \frac{d^2-1}{24} \sigma } \\
  &=  d\frac{\sqrt{d!}}{2^{m}}\left(\frac{d}{4} \right)^{-d^2/8} e^{ \frac{d^2-1}{24} \sigma } \left(\frac{3\sqrt{2}}{4}+d^{\frac{5}{16}d^2}  2^{d^2/8-d/4} \pi^{\frac{d-1}{4}}  \sigma^{-\frac{d^2-1}{4}}\right)\\
  &\leq  d^{\frac{3}{16}d^2+1} \sqrt{d!} 2^{-\frac{d^2}{8}+\frac{d}{4}+1} \pi^{\frac{d-1}{4}} e^{\frac{d^2-1}{24} \frac{1}{d \mathrm{log}(d)}} \sigma^{-\frac{d^2-1}{4}}
\end{align}
for $\sigma \leq \frac{1}{d \mathrm{log}(d)}$. The logarithm of the sigma-independent terms can be upper bounded by $\frac{1}{4} (d^2-1)\mathrm{log}(d)$ for $d\geq 8$. For $d\geq 2$, it can be upper bounded by $ \frac{1}{4} (d^2-1)\mathrm{log}(d)+ \mathrm{log}(19)$.
    
\end{proof}

\section{Well-definedness of the Poisson form of the heat kernel at non-regular points}
\label{sec:well-defined-heat-kernel}
A complete proof of this is beyond the scope of this manuscript, and perhaps the easiest such proof is that of Urakawa~\cite{urakawa-1974}, showing that the two forms of the heat kernel are equivalent. In this example, we will sketch the idea by demonstrating the phenomenon in the limit as two elements of the vector $\phi$ become equal to each other. 

Fix all elements of $\phi$ in equation~\eqref{eqn:heat-kernel-poisson-specialised} other than $\phi_1$ and $\phi_2$, set these to be $a+b$ and $a-b$, respectively. Assume all of the $\phi_j$ are not equal to each other, and none are in the interval $(a-b, a+b)$. We will consider the limit as $ b \to 0$. The relevant part of~\eqref{eqn:heat-kernel-poisson-specialised} becomes
\begin{align}
    j(\exp(X_\phi))^{-1} &\sum_{k\in\mathbb{Z}^{d-1}} \pi(X_\phi + X_k) \exp\left(-\frac{1}{4\sigma} \norm{X_\phi + X_k}^2\right) = \sum_{k\in\mathbb{Z}^{d-1}} \prod_{1\leq i < j < d} \frac{\alpha_{ij}(X_\phi + X_k)}{\sin\left(\frac{\alpha_{ij}(X_\phi)}{2}\right)}\exp\left(-\frac{1}{4\sigma} \norm{X_\phi + X_k}^2\right)\\
    = &\sum_{k\in\mathbb{Z}^{d-1}} \frac{2b + 2\pi(k_1-k_2)}{\sin\left(b\right)}\prod_{i,j \text{ remaining}} \frac{\alpha_{ij}(X_\phi + X_k)}{\sin\left(\frac{\alpha_{ij}(X_\phi)}{2}\right)}\exp\left(-\frac{1}{4\sigma} \norm{X_\phi + X_k}^2\right),
\end{align}
where $\prod_{i,j \text{ remaining}}$ denotes all of the terms in the product $\prod_{1\leq i < j \leq d}$ except for the $i=1$, $j=2$ term which has now been written explicitly. We now split the sum over $k$ into parts for the cases $k_1 \neq k_2$ and $k_1=k_2$ To streamline the notation the vector $k$ is now parametrised by the two elements $k_1$ and $k_2$ and the remaining $d-3$ dimensional vector $k^\prime$ 
\begin{align}
    \sum_{k^\prime \in\mathbb{Z}^{d-3}} \sum_{k_1=k_2} \frac{2b }{\sin\left(b\right)}\prod_{i,j \text{ remaining}} \frac{\alpha_{ij}(X_\phi + X_k)}{\sin\left(\frac{\alpha_{ij}(X_\phi)}{2}\right)}\exp\left(-\frac{1}{4\sigma} \norm{X_\phi + X_k}^2\right) +\nonumber\\
    \sum_{k^\prime \in\mathbb{Z}^{d-3}} \sum_{k_1\neq k_2} \frac{2b + 2\pi(k_1-k_2)}{\sin\left(b\right)}\prod_{i,j \text{ remaining}} \frac{\alpha_{ij}(X_\phi + X_k)}{\sin\left(\frac{\alpha_{ij}(X_\phi)}{2}\right)}\exp\left(-\frac{1}{4\sigma} \norm{X_\phi + X_k}^2\right).\label{eqn:example-heat-kernel-possion-form-split}
\end{align}
As can be seen, the singularity in the first term now has the expected form $b/\sin(b)$ which converges to $1$ in the limit $b\to 0$ and can easily be extended to a continuous function. For $k_1\neq k_2$ we need to match the term with $k_1=c$, $k_2=d$ with the term with $k_1=d$, $k_2=c$ in order to obtain the cancellation that we need. Let $\hat{k}$ be the vector with elements $c, d, k^\prime$, and $\tilde{k}$ be the vector with elements $d, c, k^\prime$. With this notation, the second term in equation~\eqref{eqn:example-heat-kernel-possion-form-split} becomes 
\begin{align}
    \sum_{k^\prime \in\mathbb{Z}^{d-3}} \sum_{c < d} &\left(\frac{2b + 2\pi(c-d)}{\sin\left(b\right)}\prod_{i,j \text{ remaining}} \frac{\alpha_{ij}(X_\phi + X_{\hat{k}})}{\sin\left(\frac{\alpha_{ij}(X_\phi)}{2}\right)}\exp\left(-\frac{1}{4\sigma} \norm{X_\phi + X_{\hat{k}}}^2\right) \right.+ \nonumber\\
    &\quad\left.\frac{2b + 2\pi(d-c)}{\sin\left(b\right)}\prod_{i,j \text{ remaining}} \frac{\alpha_{ij}(X_\phi + X_{\tilde{k}})}{\sin\left(\frac{\alpha_{ij}(X_\phi)}{2}\right)}\exp\left(-\frac{1}{4\sigma} \norm{X_\phi + X_{\tilde{k}}}^2\right)\right),
\end{align}
and we can now take the limit $b\to 0$ to observe that
\begin{align}
    \lim_{b\to 0} \prod_{i,j \text{ remaining}} \frac{\alpha_{ij}(X_\phi + X_{\hat{k}})}{\sin\left(\frac{\alpha_{ij}(X_\phi)}{2}\right)}\exp\left(-\frac{1}{4\sigma} \norm{X_\phi + X_{\hat{k}}}^2\right)=\lim_{b\to 0} \prod_{i,j \text{ remaining}} \frac{\alpha_{ij}(X_\phi + X_{\tilde{k}})}{\sin\left(\frac{\alpha_{ij}(X_\phi)}{2}\right)}\exp\left(-\frac{1}{4\sigma} \norm{X_\phi + X_{\tilde{k}}}^2\right).
\end{align}
Denoting this limit $L(k)$, we obtain
\begin{align}
  \sum_{k^\prime \in\mathbb{Z}^{d-3}} \sum_{c < d}L(k) \lim_{b\to 0} \left(\frac{2b + 2\pi(c-d)}{\sin\left(b\right)} + \frac{2b + 2\pi(d-c)}{\sin\left(b\right)}\right) = 4\sum_{k^\prime \in\mathbb{Z}^{d-3}} \sum_{c < d}L(k),
\end{align}
where we have cancelled the $(c-d)$ term with the $(d-c)$ term and used the well-known fact that $\lim_{b\to 0} \frac{b}{\sin(b)}$ converges. An identical phenomenon appears when more than $2$ eigenvalues become equal to each other, but proving this directly is considerably more cumbersome.

\bibliography{manuscript}
\end{document}